\newcommand{\eq}[1]{\hyperref[eq:#1]{(\ref*{eq:#1})}}
\renewcommand{\sec}[1]{\hyperref[sec:#1]{Section~\ref*{sec:#1}}}
\newcommand{\thm}[1]{\hyperref[thm:#1]{Theorem~\ref*{thm:#1}}}
\newcommand{\lem}[1]{\hyperref[lem:#1]{Lemma~\ref*{lem:#1}}}
\newcommand{\cor}[1]{\hyperref[cor:#1]{Corollary~\ref*{cor:#1}}}
\newcommand{\app}[1]{\hyperref[app:#1]{Appendix~\ref*{app:#1}}}
\newcommand{\tab}[1]{\hyperref[tab:#1]{Table~\ref*{tab:#1}}}
\newcommand{\defin}[1]{\hyperref[def:#1]{Definition~\ref*{def:#1}}}
\newcommand{\fig}[1]{\hyperref[fig:#1]{Figure~\ref*{fig:#1}}}
\newcommand{\thmthm}[2]{\hyperref[thm:#1]{Theorem~\ref*{thm:#1}} and~\hyperref[thm:#2]{\ref*{thm:#2}}}
\newcommand{\lemlem}[2]{\hyperref[lem:#1]{Lemma~\ref*{lem:#1}} and~\hyperref[lem:#2]{\ref*{lem:#2}}}
\newtheorem{theorem}{Theorem}[section]
\newtheorem{lemma}[theorem]{Lemma}
\newtheorem{corollary}[theorem]{Corollary}
\newtheorem{definition}[theorem]{Definition}
\def\ket#1{{\lvert}#1\rangle}
\def\bra#1{{\langle}#1\rvert}
\def\braket#1#2{{{\langle}#1\vert}#2\rangle}
\def\abs#1{\left| #1 \right|}
\def\ceil#1{{\lceil}#1\rceil}
\def\norm#1{\left\| #1 \right\|}
\def\span{\mbox{span}}
\newcommand{\eps}{\varepsilon}
\def\({\left(}
\def\){\right)}
\def\th#1{#1$^\mathrm{th}$}
\def\tO{\widetilde{O}}
\def\overarrow#1{\overset{\rightarrow}{#1}}
\author{Tsuyoshi Ito and Stacey Jeffery\thanks{sjeffery@caltech.edu, Institute for Quantum Information and Matter, California Institute of Technology}}
\date{}
\title{Approximate Span Programs}
\begin{document}

\maketitle

\begin{abstract}
Span programs are a model of computation that have been used to design quantum algorithms, mainly in the query model. It is known that for any decision problem, there exists a span program that leads to an algorithm with optimal quantum query complexity, however finding such an algorithm is generally challenging. 

In this work, we consider new ways of designing quantum algorithms using span programs. 
We show how any span program that decides a problem $f$ can also be used to decide ``property testing'' versions of the function $f$, or more generally, approximate a quantity called the \emph{span program witness size}, which is some property of the input related to $f$. For example, using our techniques, the span program for OR, which can be used to design an optimal algorithm for the  OR function,  can also be used to design optimal algorithms for: threshold functions, in which we want to decide if the Hamming weight of a string is above a threshold, or far below, given the promise that one of these is true; and approximate counting, in which we want to estimate the Hamming weight of the input up to some desired accuracy. We achieve these results by relaxing the requirement that 1-inputs hit some \emph{target} exactly in the span program, which could potentially make design of span programs significantly easier. 

In addition, we give an exposition of span program structure, which increases the general understanding of this important model. One implication of this is alternative algorithms for estimating the witness size when the phase gap of a certain unitary can be lower bounded. We show how to lower bound this phase gap in certain cases.

As an application, we give the first upper bounds in the adjacency query model on the quantum time complexity of estimating the effective resistance between $s$ and $t$, $R_{s,t}(G)$. For this problem we obtain $\tO(\frac{1}{\eps^{3/2}}n\sqrt{R_{s,t}(G)})$, using $O(\log n)$ space. In addition, when $\mu$ is a lower bound on $\lambda_2(G)$, by our phase gap lower bound, we can obtain an upper bound of $\tO\(\frac{1}{\eps}n\sqrt{R_{s,t}(G)/\mu}\)$ for estimating effective resistance, also using $O(\log n)$ space. 
\end{abstract}

\section{Introduction}

Span programs are a model of computation first used to study logspace complexity \cite{KW93}, and more recently, introduced to the study of quantum algorithms in \cite{RS12}. They are of immense theoretical importance, having been used to show that the general adversary bound gives a tight lower bound on the quantum query complexity of any decision problem \cite{Rei09,Rei11}. As a means of designing quantum algorithms, it is known that for any decision problem, there exists a span-program-based algorithm with asymptotically optimal quantum query complexity, but this fact alone gives no indication of how to find such an algorithm. Despite the relative difficulty in designing quantum algorithms this way, there are many applications, including formula evaluation \cite{RS12,Rei11}, a number of algorithms based on the learning graph framework \cite{Bel11}, $st$-connectivity \cite{BR12} and $k$-distinctness \cite{Bel12}. Although generally quantum algorithms designed via span programs can only be analyzed in terms of their query complexity, in some cases their time complexity can also be analyzed, as is the case with the quantum algorithm for $st$-connectivity. In the case of the quantum algorithm for $k$-distinctness, the ideas used in designing the span program could be turned into a quantum algorithm for $3$-distinctness with time complexity matching its query complexity up to logarithmic factors \cite{BCJKM13}. 

In this work, we consider new ways of designing quantum algorithms via span programs. Consider Grover's quantum search algorithm, which, on input $x\in\{0,1\}^n$, decides if there is some $i\in [n]$ such that $x_i=1$ using only $O(\sqrt{n})$ quantum operations \cite{gro96}. The ideas behind this algorithm have been used in innumerable contexts, but in particular, a careful analysis of the ideas behind Grover's algorithm led to algorithms for similar problems, including a class of \emph{threshold functions}: given $x\in \{0,1\}^n$, decide if $|x|\geq t$ or $|x|<\eps t$, where $|x|$ denotes the Hamming weight; and approximate counting: given $x\in\{0,1\}^n$, output an estimate of $|x|$ to some desired accuracy. The results in this paper offer the possibility of obtaining analogous results for any span program. That is, given a span program for some problem $f$, our results show that one can obtain, not only an algorithm for $f$, but algorithms for a related class of threshold functions, as well as an algorithm for estimating a quantity called the \emph{span program witness size}, which is analogous to $|x|$ in the above example (and is in fact exactly $1/|x|$ in the  span program for the OR function --- see \sec{example}).

\paragraph{New Algorithms from Span Programs} We give several new means of constructing quantum algorithms from span programs. Roughly speaking, a span program can be turned into a quantum algorithm that decides between two types of inputs: those that ``hit'' a certain ``target vector'', and those that don't. We show how to turn a span program into an algorithm that decides between inputs that get ``close to'' the target vector, and those that don't. Whereas as traditionally a span program has been associated with some decision problem, this allows us to now associate, with one span program, a whole class of threshold problems. 

In addition, for any span program $P$, we can construct a quantum algorithm that estimates the \emph{positive witness size}, $w_+(x)$, to accuracy $\eps$ in $\frac{1}{\eps^{3/2}}\sqrt{w_+(x)\widetilde{W}_-}$ queries, where $\widetilde{W}_-$ is the \emph{approximate negative witness complexity} of $P$. This construction is useful whenever we can construct a span program for which $w_+(x)$ corresponds to some function we care to estimate, as is the case with the span program for OR, in which $w_+(x)=\frac{1}{|x|}$, or the span from for $st$-connectivity, in which $w_+(G)=\frac{1}{2}R_{s,t}(G)$, where $G$ is a graph, and $R_{s,t}(G)$ is the \emph{effective resistance} between $s$ and $t$ in $G$. 
We show similar results for estimating the negative witness size as well.

\paragraph{Structural Results} 
Our analysis of the structure of span programs increases the theoretical understanding of this important model. One implication of this is alternative algorithms for estimating the witness size when the phase gap (or spectral gap) of a certain unitary associated with the span program can be lower bounded. This is in contrast to previous span program algorithms, including those mentioned in the previous paragraph, which have all relied on \emph{effective spectral gap} analysis. We show how the phase gap can be lower bounded by $\frac{\sigma_{\max}(A)}{\sigma_{\min}(A(x))}$, where $A$ and $A(x)$ are linear operators associated with the span program and some input $x$, and $\sigma_{\min}$ and $\sigma_{max}$ are the smallest and largest nonzero singular values.

In addition, our exposition highlights the relationship between span programs and estimating the size of the smallest solution to a linear system, which is a problem solved by \cite{HHL09}. It is not yet clear if this relationship can lead to new algorithms, but it is an interesting direction for future work, which we discuss in \sec{open}.

\paragraph{Application to Effective Resistance} An immediate application of our results is a quantum algorithm for estimating the effective resistance between two vertices in a graph, $R_{s,t}(G)$. This example is immediate, because in \cite{BR12}, a span program for $st$-connectivity was presented, in which the positive witness size corresponds to $R_{s,t}(G)$. The results of \cite{BR12}, combined with our new span program algorithms, immediately yield an upper bound of $\tO(\frac{1}{\eps^{3/2}}n\sqrt{R_{s,t}(G)})$ for estimating the effective resistance to relative accuracy $\eps$. This upper bound also holds for time complexity, due to the time complexity analysis of \cite{BR12}.  Using our new spectral analysis techniques, we are also able to get an often better upper bound of $\tO\(\frac{1}{\eps}n\sqrt{{R_{s,t}(G)}/{\mu}}\)$, on the time complexity of estimating effective resistance, where $\mu$ is a lower bound on $\lambda_2(G)$, the second smallest eigenvalue of the Laplacian. Both algorithms use $O(\log n)$ space. We also show that a linear dependence on $n$ is necessary, so our results cannot be significantly improved. 

 These are the first quantum algorithms for this problem in the adjacency query model. Previous results have studied the problem in the edge-list model \cite{Wan13}. At the end of \sec{app}, we compare the techniques used in \cite{Wan13} to those of our algorithms. 
Classically, this quantity can be computed exactly by inverting the Laplacian, which costs $O(m)=O(n^2)$, where $m$ is the number of edges in the input graph.

\paragraph{Outline} In \sec{prelim}, we describe the algorithmic subroutines and standard linear algebra that will form the basis of our algorithms. In \sec{span-old}, we review the use of span programs in the context of quantum query algorithms, followed in \sec{span-new} by our new paradigm of \emph{approximate} span programs. At this point we will be able to formally state our results about how to use span programs to construct quantum algorithms. In \sec{span-structure}, we describe the structure of span programs, giving several results that will help us develop algorithms. The new algorithms from span programs are developed in \sec{alg}, and finally, in \sec{app}, we present our applications to estimating effective resistance. In \sec{open}, we discuss open problems.

\subsection{Preliminaries}\label{sec:prelim}

To begin, we fix notation and review some concepts from linear algebra. By $\mathcal{L}(V,W)$ we denote the set of linear operators from $V$ to $W$. For any operator $A\in\mathcal{L}(V,W)$, we denote by $\mathrm{col}A$ the columnspace, $\mathrm{row}A$ the rowspace, and $\ker A$ the kernel of $A$. 

\begin{definition}[Singular value decomposition]
Any linear operator $A\in\mathcal{L}(V,W)$ can be written as $A=\sum_{i=1}^{r}\sigma_i\ket{\psi_i}\bra{\phi_i}$ for positive real numbers $\sigma_i$, called the \emph{singular values}, an orthonormal basis for $\mathrm{row}A$, $\{\ket{\phi_i}\}_i$, called the \emph{right singular vectors},  and an orthonormal basis for $\mathrm{col}A$, $\{\ket{\psi_i}\}_i$, called the \emph{left singular vectors}. We define $\sigma_\mathrm{min}(A):=\min_i\sigma_i$ and $\sigma_\mathrm{max}(A):=\max_i\sigma_i$. 
\end{definition}

\begin{definition}[Pseudo-inverse]
For any linear operator $A$ with singular value decomposition $A=\sum_{i=1}^{r}\sigma_i\ket{\psi_i}\bra{\phi_i}$, we define the \emph{pseudo-inverse} of $A$ as $A^+:=\sum_{i=1}^{r}\frac{1}{\sigma_i}\ket{\phi_i}\bra{\psi_i}$. We note that $A^+A$ is the orthogonal projector onto $\mathrm{row}A$, and $AA^+$ is the orthogonal projector onto $\mathrm{col}A$. For any $\ket{v}\in \mathrm{row}A$, the unique smallest vector $\ket{w}$ satisfying $A\ket{w}=\ket{v}$ is $A^+\ket{v}$. 
\end{definition}

\noindent The algorithms in this paper solve either decision problems, or estimation problems. 

\begin{definition}
Let $f:X\subseteq [q]^n\rightarrow \{0,1\}$. We say that an algorithm \emph{decides $f$ with bounded error} if for any $x\in X$, with probability at least $2/3$, the algorithm outputs $f(x)$ on input~$x$.
\end{definition}

\begin{definition}
Let $f:X\subseteq[q]^n\rightarrow \mathbb{R}_{\geq 0}$. We say that an algorithm \emph{estimates $f$ to relative accuracy $\eps$} with bounded error if for any $x\in X$, with probability at least $2/3$, on input $x$ the algorithm outputs $\tilde{f}$ such that $|{f(x)-\tilde{f}}|\leq \eps f(x).$
\end{definition}

We will generally omit the description ``with bounded error'', since all of our algorithms will have bounded error.

All algorithms presented in this paper are based on the following structure. We have some initial state $\ket{\phi_0}$, and some unitary operator $U$, and we want to estimate $\norm{\Pi_0\ket{\phi_0}}$, where $\Pi_0$ is the orthogonal projector onto the 1-eigenspace of $U$. The first step in this process is a quantum algorithm that estimates, in a new register, the phase of $U$ applied to the input state. 

\begin{theorem}[Phase Estimation \cite{Kit95,CEMM98}]\label{thm:phase-est}
Let $U=\sum_{j=1}^m e^{i\theta_j}\ket{\psi_j}\bra{\psi_j}$ be a unitary, with $\theta_1,\dots,\theta_m \in (-\pi,\pi]$. For any $\Theta\in (0,\pi)$ and $\eps\in (0,1)$, there exists a quantum algorithm that makes $O\(\frac{1}{\Theta}\log \frac{1}{\eps}\)$ controlled calls to $U$ and, on input $\ket{\psi_j}$, outputs a state $\ket{\psi_j}\ket{\omega}$ such that if $\theta_j=0$, then $\ket{\omega}=\ket{0}$, and if $|\theta_j|\geq \Theta$, $|\braket{0}{\omega}|^2\leq \eps$. If $U$ acts on $s$ qubits, the algorithm uses $O(s+\log \frac{1}{\Theta})$ space.
\end{theorem}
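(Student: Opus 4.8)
The plan is to assemble the algorithm from the textbook phase-estimation circuit and then amplify its confidence by repetition. Let $\mathrm{PE}_t$ denote the standard circuit acting on the $s$-qubit eigenvector register together with a fresh $t$-qubit ancilla: Hadamard the ancilla, apply the controlled powers $U^{2^0},\dots,U^{2^{t-1}}$ (implementing controlled-$U^{2^k}$ as $2^k$ controlled applications of $U$, so $\mathrm{PE}_t$ makes $2^t-1$ controlled calls to $U$ and needs no workspace beyond the $t$ ancilla qubits), then apply the inverse quantum Fourier transform to the ancilla. Since $\ket{\psi_j}$ is an eigenvector of $U$, we have $\mathrm{PE}_t\ket{\psi_j}\ket{0^t}=\ket{\psi_j}\ket{\eta_j}$ for some ancilla state $\ket{\eta_j}$, so the eigenvector register is never disturbed. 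Two facts drive the argument. First, if $\theta_j=0$ then no phase is kicked back, the ancilla just before the inverse QFT is the uniform superposition, and the inverse QFT maps that exactly to $\ket{0^t}$; hence $\ket{\eta_j}=\ket{0^t}$ on the nose. Second, writing $\varphi_j:=\theta_j/(2\pi)\in(-1/2,1/2]$, the probability of measuring the all-zeros string is the Fej\'er kernel $|\braket{0^t}{\eta_j}|^2=\frac{1}{2^{2t}}\left|\frac{\sin(2^t\pi\varphi_j)}{\sin(\pi\varphi_j)}\right|^2\le\frac{1}{2^{2t}\sin^2(\pi\varphi_j)}$, and since $\sin(\pi x)\ge 2x$ on $[0,1/2]$ this is at most $\pi^2/(2^{2t}\theta_j^2)$.

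Given these, I would fix the precision $t_0:=\lceil\log_2(1/\Theta)\rceil+O(1)$ so that whenever $|\theta_j|\ge\Theta$ a single run of $\mathrm{PE}_{t_0}$ measures $0^{t_0}$ with probability at most some fixed constant $p<1$; this costs $2^{t_0}-1=O(1/\Theta)$ controlled calls and $O(s+\log\frac1\Theta)$ space. To reach confidence $1-\eps$, repeat $\mathrm{PE}_{t_0}$ a total of $r:=\lceil\log_{1/p}\frac1\eps\rceil=O(\log\frac1\eps)$ times, each time measuring the ancilla, flipping a single flag qubit (initialized to $\ket0$) to $\ket1$ if the outcome is nonzero, and resetting the ancilla to $\ket{0^{t_0}}$; the output register $\omega$ is this flag. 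Because the eigenvector register is untouched, the $r$ trials are independent: if $\theta_j=0$ every trial yields $0^{t_0}$ deterministically, the flag stays $\ket0$, and $\ket\omega=\ket0$; if $|\theta_j|\ge\Theta$ the flag equals $\ket0$ only when all $r$ trials miss, which has probability at most $p^r\le\eps$, i.e.\ $|\braket0\omega|^2\le\eps$. The total cost is $r\cdot O(1/\Theta)=O(\frac1\Theta\log\frac1\eps)$ controlled calls to $U$, and reusing the single ancilla register keeps the space at $O(s+\log\frac1\Theta)$.

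The point that needs the most care is achieving the query bound $O(\frac1\Theta\log\frac1\eps)$ and the space bound $O(s+\log\frac1\Theta)$ simultaneously while preserving the exact-zero guarantee. A single phase estimation to error $\eps$ would need $t=\Theta(\log\frac1\Theta+\log\frac1\eps)$ bits of precision and hence $2^t=\Omega(1/(\Theta\sqrt\eps))$ calls, which is too many; and a fully coherent amplification that retains $r$ separate $t_0$-qubit registers uses $O(\log\frac1\Theta\log\frac1\eps)$ space. The measure-and-reset amplification above avoids both, at the mild cost that in the $|\theta_j|\ge\Theta$ case $\omega$ is returned as a mixed state with $\bra0\rho_\omega\ket0\le\eps$ rather than a pure state; this is harmless because in every application the $\omega$-register is immediately measured to test whether it equals $\ket0$. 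Everything else — the Fej\'er-kernel identity, the bound $\sin(\pi x)\ge 2x$, and the independence of the repetitions — is routine.
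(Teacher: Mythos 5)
The paper does not actually prove \thm{phase-est}; it cites it to \cite{Kit95,CEMM98}, so there is no in-paper proof to compare against. On its own merits, your argument follows the standard route: run the textbook circuit with $t_0=\Theta(\log\frac1\Theta)$ ancilla qubits, use the Fej\'er-kernel bound $|\braket{0^{t_0}}{\eta_j}|^2\leq \pi^2/(2^{2t_0}\theta_j^2)$ together with $\sin(\pi x)\geq 2x$ to get a fixed constant acceptance probability $p<1$ when $|\theta_j|\geq\Theta$, and observe that the uniform ancilla state is mapped exactly to $\ket{0^{t_0}}$ by the inverse QFT when $\theta_j=0$. All of this, and the repetition-to-$p^r\leq\eps$ argument, is correct.

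The one point I would push back on is the claim that measure-and-reset amplification is ``harmless.'' In this paper the theorem is used \emph{coherently}: in \lem{alg-approx-norm} the phase-estimation subroutine is applied to the superposition $\sum_j\alpha_j\ket{\psi_j}$ to produce a pure state $\sum_j\alpha_j\ket{\psi_j}\ket{\omega_j}$, and that state is then fed into amplitude estimation (\thm{amp-est}), which requires reflecting about the output of a \emph{unitary} $\mathcal{A}$. An $\mathcal{A}$ that performs intermediate measurements does not furnish a pure output $\ket{\psi_j}\ket{\omega}$ and is not directly admissible there. Deferring the measurements restores unitarity but, as you note, requires retaining $r=O(\log\frac1\eps)$ separate $t_0$-qubit registers, giving space $O(s+\log\frac1\Theta\log\frac1\eps)$ rather than the stated $O(s+\log\frac1\Theta)$; and the ``uncompute the ancilla after flagging'' variant does not cleanly reset the ancilla, because the flag flip entangles the ancilla with the flag before $\mathrm{PE}_{t_0}^{-1}$ is applied. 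So either your construction should be stated with the weaker space bound, or a genuinely unitary amplification achieving $O(s+\log\frac1\Theta)$ space is needed and is not supplied here. This is a real gap (though arguably one the paper itself glosses over by citing the result without proof); everything else in your write-up is sound.
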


\noindent The precision needed to isolate $\Pi_0\ket{\phi_0}$ depends on the smallest nonzero phase of $U$, the \emph{phase gap}.

\begin{definition}[Phase Gap]
Let $\{e^{i\theta_j}\}_{j\in S}$ be the eigenvalues of a unitary operator $U$, with $\{\theta_j\}_{j\in S}\subset (-\pi,\pi]$. Then the \emph{phase gap} of $U$ is $\Delta(U):=\min\{|\theta_j|:\theta_j\neq 0\}$. 
\end{definition}

\noindent In order to estimate $\norm{\Pi_0\ket{\phi_0}}^2$, given a state $\ket{0}\Pi_0\ket{\phi_0}+\ket{1}(I-\Pi_0)\ket{\phi_0}$, we use the following. 

\begin{theorem}[Amplitude Estimation \cite{BHMT02}]\label{thm:amp-est}
Let $\mathcal{A}$ be a quantum algorithm that outputs $\sqrt{p(x)}\ket{0}\ket{\Psi_x(0)}+\sqrt{1-p(x)}\ket{1}\ket{\Psi_x(1)}$ on input $x$. Then there exists a quantum algorithm that estimates $p(x)$ to precision $\eps$ using $O\(\frac{1}{\eps}\frac{1}{\sqrt{p(x)}}\)$ calls to $\mathcal{A}$. 
\end{theorem}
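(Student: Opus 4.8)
The plan is to realize this as the amplitude-estimation procedure of \cite{BHMT02}, building on phase estimation (\thm{phase-est}). Write $p=p(x)$, let $\ket{\psi}$ denote the output of $\mathcal{A}$ on input $x$, started from the all-zero state $\ket{\bar 0}$, and fix $\theta\in[0,\pi/2]$ with $\sin^2\theta=p$. Let $\Pi_G$ be the orthogonal projector onto states whose first qubit is $\ket{0}$, and consider the Grover-type iterate
\[
Q:=-\bigl(I-2\ket{\psi}\bra{\psi}\bigr)\bigl(I-2\Pi_G\bigr)=-\mathcal{A}\bigl(I-2\ket{\bar 0}\bra{\bar 0}\bigr)\mathcal{A}^{-1}\bigl(I-2\Pi_G\bigr),
\]
which costs two calls to $\mathcal{A}$ (one to $\mathcal{A}$ and one to its inverse). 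First I would record the standard two-dimensional picture: the normalized ``good'' and ``bad'' parts of $\ket{\psi}$ span a $Q$-invariant plane on which $Q$ acts as a rotation by $2\theta$, so $Q$ has eigenvectors $\ket{\psi_\pm}$ in this plane with eigenvalues $e^{\pm 2i\theta}$, and $\ket{\psi}$ has overlap of magnitude $\tfrac{1}{\sqrt 2}$ with each of $\ket{\psi_+}$ and $\ket{\psi_-}$.

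Next I would estimate $\theta$ by running phase estimation on $Q$ with input $\ket{\psi}$: because $\ket{\psi}$ puts equal weight on the two eigenvectors, $N$ controlled applications of $Q$ produce, with constant probability, an estimate of $2\theta$ (or of $2\pi-2\theta$, which yields the same value of $\sin^2$) accurate to $O(1/N)$, and a median of $O(\log\tfrac1\delta)$ independent runs boosts this to probability $1-\delta$. Here I would invoke the estimating form of phase estimation; the gap-detection form of \thm{phase-est} can be converted into it by binary search over the threshold $\Theta$ at logarithmic overhead, or one can cite standard phase estimation directly. Writing $\tilde\theta$ for the resulting estimate and $\tilde p=\sin^2\tilde\theta$, the bound $|\tilde p-p|\le|\sin\tilde\theta-\sin\theta|\cdot|\sin\tilde\theta+\sin\theta|=O(\sqrt{p}/N+1/N^2)$ shows that taking $N=\Theta\bigl(1/(\eps\sqrt{p})\bigr)$ gives $|\tilde p-p|=O(\eps\,p)$, i.e.\ accuracy $\eps$ relative to $p$, at the cost of $O(N)=O\bigl(\tfrac1\eps\tfrac1{\sqrt{p(x)}}\bigr)$ calls to $\mathcal{A}$.

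The one genuine obstacle is that $p(x)$ is not known in advance, so the number of iterations cannot be set directly; I would handle this by a doubling search. Run the above with $N=2^k$ for $k=1,2,\dots$, obtaining estimates $\tilde p_k$; since $N$ iterations give additive error $O(\sqrt{p}/N+1/N^2)$, each $\tilde p_k$ becomes accurate to within a constant factor of $p(x)$ once $2^k\gtrsim 1/\sqrt{p(x)}$, and one designs a stopping rule (halting once, say, $2^k\ge c/\sqrt{\tilde p_k}$ for a sufficiently large absolute constant $c$, conditioned on all sub-calls succeeding via a union bound) that latches onto this scale and returns a constant-factor estimate $\hat p$ of $p(x)$ using $O(1/\sqrt{p(x)})$ calls. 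A single refined run with $N=\Theta\bigl(1/(\eps\sqrt{\hat p})\bigr)=\Theta\bigl(1/(\eps\sqrt{p(x)})\bigr)$ iterations then delivers the final estimate to precision $\eps$ relative to $p(x)$, for a total of $O\bigl(\tfrac1\eps\tfrac1{\sqrt{p(x)}}\bigr)$ calls to $\mathcal{A}$. I expect the correctness of the doubling search --- arguing, with attention to the constants in phase estimation, that it neither halts too early while the estimates are still unreliable nor runs far past the scale $1/\sqrt{p(x)}$ --- together with the additive-to-relative error bookkeeping, to be the main technical content; the two-dimensional rotation picture and the reduction to phase estimation are by now routine.
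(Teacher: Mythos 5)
The paper does not prove \thm{amp-est}; it is stated as a citation to \cite{BHMT02}, and what the paper actually uses downstream (\cor{amp-est-dec}, \thm{gap-improvement}) is the explicit error bound $|\tilde p - p|\le 2\pi\sqrt{p(1-p)}/M + \pi^2/M^2$ from \cite[Thm.~12]{BHMT02}. Your reconstruction matches that source: the Grover iterate $Q$ with eigenphases $\pm2\theta$ where $\sin^2\theta=p$, the equal overlap of $\ket{\psi}$ with $\ket{\psi_\pm}$, phase estimation to additive error $O(1/N)$ in $\theta$, and the conversion $|\tilde p-p|=O(\sqrt p/N+1/N^2)$ giving relative error $\eps$ at $N=\Theta(1/(\eps\sqrt p))$ --- this is exactly the BHMT analysis, and your reading of ``precision $\eps$'' as relative precision is the only one consistent with the stated cost and with how the paper later invokes the theorem (``relative accuracy $\eps/4$'' in \thm{gap-improvement}). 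The exponential/doubling search to handle unknown $p(x)$ is also the standard device (it appears in \cite{BHMT02} and is implicit in the paper's own use of a halving-$\epsilon$ loop inside \thm{gap-improvement}); the care you flag about the stopping rule --- not halting prematurely when the coarse phase estimate rounds up rather than down, handled by choosing the constant $c$ large enough and amplifying each round so a union bound applies --- is the right concern and can be discharged as you indicate. So this is correct and is the same argument the paper delegates to its reference; there is no alternative route to compare against in the paper itself.
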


If we know that the amplitude is either $\leq p_0$ or $\geq p_1$ for some $p_0<p_1$, then we can use amplitude estimation to distinguish between these two cases.

\begin{corollary}[Amplitude Gap]\label{cor:amp-est-dec}
Let $\mathcal{A}$ be a quantum algorithm that outputs $\sqrt{p(x)}\ket{0}\ket{\Psi_x(0)}+\sqrt{1-p(x)}\ket{1}\ket{\Psi_x(1)}$ on input $x$. For any $0\leq p_1<p_0\leq 1$, we can distinguish between the cases $p(x)\geq p_0$ and $p(x)\leq p_1$ with bounded error using $O\(\frac{\sqrt{p_0}}{p_0-p_1}\)$ calls to $\mathcal{A}$. 
\end{corollary}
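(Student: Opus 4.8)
\emph{Proof idea.} The plan is to run the amplitude estimation procedure of \cite{BHMT02} on $\mathcal A$ for a \emph{fixed} number $N$ of Grover iterations, read off an estimate $\tilde p$ of $p(x)$, and report ``$p(x)\ge p_0$'' exactly when $\tilde p$ exceeds a threshold $\tau$ with $p_1<\tau<p_0$. The only thing to get right is how to choose $N$ and $\tau$ so that simultaneously $N=O\!\left(\sqrt{p_0}/(p_0-p_1)\right)$ and the two promised regimes land on opposite sides of $\tau$ with constant probability. The device that produces a $\sqrt{p_0}$ in the numerator (rather than $p_0$, or $1/\sqrt{p_0}$ as a naive application of \thm{amp-est} with additive precision would give) is to threshold the \emph{accepting amplitude} $\sqrt{p(x)}$ instead of $p(x)$: writing $\sqrt{p(x)}=\sin\theta$ and $\sqrt{\tilde p}=\sin\tilde\theta$ with $\theta,\tilde\theta\in[0,\pi/2]$, amplitude estimation with $N$ iterations returns, with probability at least $8/\pi^2$, a value $\tilde p$ with $|\tilde\theta-\theta|\le\pi/N$, and since $\sin$ is $1$-Lipschitz this gives $\bigl|\sqrt{\tilde p}-\sqrt{p(x)}\bigr|\le\pi/N$ no matter how small $p(x)$ is.

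Concretely, I would take $N=\left\lceil 4\pi/(\sqrt{p_0}-\sqrt{p_1})\right\rceil$ and threshold the amplitude at $\tfrac{1}{2}(\sqrt{p_0}+\sqrt{p_1})$ (equivalently, $\tau=\tfrac{1}{4}(\sqrt{p_0}+\sqrt{p_1})^2$). On the high-probability event above, $\pi/N\le\tfrac{1}{4}(\sqrt{p_0}-\sqrt{p_1})$, so if $p(x)\ge p_0$ then $\sqrt{\tilde p}\ge\sqrt{p_0}-\tfrac{1}{4}(\sqrt{p_0}-\sqrt{p_1})=\tfrac{1}{4}(3\sqrt{p_0}+\sqrt{p_1})>\tfrac{1}{2}(\sqrt{p_0}+\sqrt{p_1})$, while if $p(x)\le p_1$ then $\sqrt{\tilde p}\le\sqrt{p_1}+\tfrac{1}{4}(\sqrt{p_0}-\sqrt{p_1})=\tfrac{1}{4}(\sqrt{p_0}+3\sqrt{p_1})<\tfrac{1}{2}(\sqrt{p_0}+\sqrt{p_1})$, using $\sqrt{p_1}<\sqrt{p_0}$ in both inequalities; so the threshold separates the two cases. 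The cost is $O(N)=O\!\left(1/(\sqrt{p_0}-\sqrt{p_1})\right)=O\!\left((\sqrt{p_0}+\sqrt{p_1})/(p_0-p_1)\right)=O\!\left(\sqrt{p_0}/(p_0-p_1)\right)$ calls to $\mathcal A$, again using $\sqrt{p_1}<\sqrt{p_0}$. Finally I would amplify the success probability from $8/\pi^2$ to $2/3$ by running $O(1)$ independent copies and taking the majority vote, which changes only the constant. (The $p(x)\ge p_0$ half can also be phrased directly via \thm{amp-est}, applied with \emph{relative} precision $\eps=\Theta(1-p_1/p_0)$: this costs $O(1/(\eps\sqrt{p(x)}))=O(1/(\eps\sqrt{p_0}))=O(\sqrt{p_0}/(p_0-p_1))$ calls when $p(x)\ge p_0$, and forces $\tilde p\ge(1-\eps)p_0>\tfrac{1}{2}(p_0+p_1)$.)

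The step requiring the most care is the case $p(x)\le p_1$, especially the corner $p_1=0$ (and more generally $p_1\ll p_0$): there $p(x)$ can be arbitrarily small, so the relative-precision guarantee of \thm{amp-est} nominally demands a diverging number of calls, and one cannot simply ``ask'' the routine how many iterations it would use. This is precisely why the argument must commit to $N=\Theta(\sqrt{p_0}/(p_0-p_1))$ in advance --- the iteration count \thm{amp-est} prescribes for amplitude $p_0$ --- and then check directly that $N$ iterations already determine $\sqrt{p(x)}$ (hence $\tilde p$) tightly enough to fall below $\tau$. In the amplitude formulation this check is automatic, because a fixed $N$ yields a fixed additive error $\le\pi/N$ in $\sqrt{p(x)}$ independent of the size of the amplitude; the remainder of the proof is bookkeeping with the constants.
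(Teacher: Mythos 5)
Your proof is correct, and it takes a route that, while resting on the same amplitude-estimation machinery, is genuinely different in how it extracts the complexity bound. The paper's proof quotes BHMT Theorem 12 directly to get the additive guarantee $|\tilde p - p(x)| \leq 2\pi\sqrt{p(x)(1-p(x))}/M + \pi^2/M^2$ in probability space, then carries out a fairly delicate calculation (plugging in $M = 4\pi\sqrt{p_0+p_1}/(p_0-p_1)$, using $\sqrt{p_0+p_1}\ge(\sqrt{p_0}+\sqrt{p_1})/\sqrt{2}$, and lower-bounding $\tilde p_0-\tilde p_1$ by $\frac{1}{6}(p_0-p_1)$) to show the two regimes separate. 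You instead go one level deeper into the mechanism: amplitude estimation with $N$ iterations gives, with probability $\ge 8/\pi^2$, a bound $|\tilde\theta-\theta|\le\pi/N$ on the angle, hence by $1$-Lipschitzness of $\sin$ a uniform additive error $|\sqrt{\tilde p}-\sqrt{p(x)}|\le\pi/N$ that is independent of how small $p(x)$ is. Thresholding in amplitude at $\frac{1}{2}(\sqrt{p_0}+\sqrt{p_1})$ with $N=\Theta(1/(\sqrt{p_0}-\sqrt{p_1}))=\Theta(\sqrt{p_0}/(p_0-p_1))$ then separates cleanly, and the arithmetic reduces to a couple of one-line inequalities. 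This is arguably the more transparent argument: it makes the $\sqrt{p_0}$ in the numerator visibly a consequence of the fact that phase estimation estimates $\theta=\arcsin\sqrt{p}$ (not $p$) to fixed additive precision, and it handles the $p_1\to 0$ corner without any special care. You also correctly identify the pitfall that one cannot invoke the \emph{relative}-accuracy form of amplitude estimation on the small-amplitude side, which is exactly why the iteration count must be fixed in advance from $p_0,p_1$. One small caveat worth stating if you wrote this out formally: the angle guarantee $|\tilde\theta-\theta|\le\pi/N$ is what underlies BHMT Theorem 12 (their Lemma 7 converts it to the quoted bound on $|\tilde p-p|$); the paper's \thm{amp-est} as stated doesn't expose it, so you should cite BHMT directly for that step, just as the paper itself does.
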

\begin{proof}
By \cite[Thm. 12]{BHMT02}, using $M$ calls to $\mathcal{A}$, we can obtain an estimate $\tilde{p}$ of $p(x)$ such that 
$$\abs{\tilde{p}-p(x)}\leq \frac{2\pi \sqrt{p(x)(1-p(x))}}{M}+\frac{\pi^2}{M^2}$$
with probability $3/4$. Let $M=4\pi\frac{\sqrt{p_0+p_1}}{p_0-p_1}$. Then note that for any $x_1$ and $x_0$ such that $p(x_1)\leq p_1$ and $p(x_0)\geq p_0$, we have, using $\sqrt{p_0+p_1}\geq (\sqrt{p_0}+\sqrt{p_1})/\sqrt{2}$, 
$$M\geq 2\sqrt{2}\pi\frac{\sqrt{p_0}+\sqrt{p_1}}{p_0-p_1}
=2\sqrt{2}\pi\frac{1}{\sqrt{p_0}-\sqrt{p_1}}
\geq 2\sqrt{2}\pi\frac{1}{\sqrt{p(x_0)}-\sqrt{p(x_1)}}
=2\sqrt{2}\pi\frac{\sqrt{p(x_0)}+\sqrt{p(x_1)}}{p(x_0)-p(x_1)}.$$ 
If $\tilde{p}_1$ is the estimate obtained on input $x_1$, then we have, with probability 3/4:
$$\tilde{p}_1\leq 
p(x_1)+\frac{2\pi \sqrt{p(x_1)(1-p(x_1))}}{M}+\frac{\pi^2}{M^2}
\leq p(x_1)+\frac{\sqrt{p(x_1)}(p(x_0)-p(x_1))}{\sqrt{2}(\sqrt{p(x_0)}+\sqrt{p(x_1)})}+\frac{(p_0-p_1)^2}{16(p_0+p_1)}.$$
On the other hand, if $\tilde{p}_0$ is an estimate of $p(x_0)$, then with probability 3/4:
$$\tilde{p}_0\geq p(x_0)-\frac{{2}\pi \sqrt{p(x_0)(1-p(x_0))}}{M}-\frac{\pi^2}{M^2}
\geq p(x_0)-\frac{\sqrt{p(x_0)}(p(x_0)-p(x_1))}{\sqrt{2}(\sqrt{p(x_0)}+\sqrt{p(x_1)})}-\frac{(p_0-p_1)^2}{16(p_0+p_1)}.$$
We complete the proof by showing that $\tilde{p}_1<\tilde{p}_0$, so we can distinguish these two events. We have:
\begin{eqnarray*}
\tilde{p}_0-\tilde{p}_1 &\geq &  p(x_0)-p(x_1)-\frac{(p(x_0)-p(x_1))}{\sqrt{2}(\sqrt{p(x_0)}+\sqrt{p(x_1)})}(\sqrt{p(x_0)}+\sqrt{p(x_1)})-\frac{(p_0-p_1)^2}{8(p_0+p_1)}\\
&\geq & \(1-\frac{1}{\sqrt 2}\)(p_0-p_1)-\frac{1}{8}(p_0-p_1)\;\; \geq \;\; \frac{1}{6}(p_0-p_1)\;\; > \;\; 0. 
\end{eqnarray*}
Thus, using $4\pi\frac{\sqrt{p_0+p_1}}{p_0-p_1}=O\(\frac{\sqrt{p_0}}{p_0-p_1}\)$ calls to $\mathcal{A}$, we can distinguish between $p(x)\leq p_1$ and $p(x)\geq p_0$ with success probability 3/4. 
\end{proof}

In order to make use of phase estimation, we will need to analyze the spectrum of a particular unitary, which, in our case, consists of a pair of reflections.
The following lemma 
first appeared in this form in \cite{LMR+11}:
\begin{lemma}[Effective Spectral Gap Lemma]\label{lem:gap}
Let $U=(2\Pi_A-I)(2\Pi_B-I)$ be the product of two reflections, and let $\Pi_\Theta$ be the orthogonal projector onto $\mathrm{span}\{\ket{u}:U\ket{u}=e^{i\theta}\ket{u},|\theta|\leq \Theta\}$. Then if $\Pi_A\ket{u}=0$, $\norm{\Pi_\Theta\Pi_B\ket{u}}\leq \frac{\Theta}{2}\norm{\ket{u}}$. 
\end{lemma}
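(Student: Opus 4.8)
The plan is to work in the two-dimensional invariant subspaces of the product of reflections $U=(2\Pi_A-I)(2\Pi_B-I)$, using Jordan's lemma on the pair of projectors $\Pi_A,\Pi_B$. First I would recall the standard structure: there is an orthogonal decomposition of the space into subspaces invariant under both $\Pi_A$ and $\Pi_B$, each of dimension at most two. On each two-dimensional block, $\Pi_A$ and $\Pi_B$ project onto lines making some angle $\gamma\in(0,\pi/2)$, and $U$ restricted to that block is a rotation by $\pm 2\gamma$; the eigenphases of $U$ are thus $\pm 2\gamma$. On blocks where the two projectors agree or are orthogonal, $U$ acts as $\pm I$. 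So the eigenphases of $U$ are determined by these principal angles, and I want to control how much of $\Pi_B\ket{u}$ can sit in blocks with small principal angle.

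Next I would fix the hypothesis $\Pi_A\ket{u}=0$ and decompose $\ket{u}=\sum_k \ket{u_k}$ according to the Jordan blocks. Since $\Pi_A\ket{u}=0$ and each block is $\Pi_A$-invariant, we get $\Pi_A\ket{u_k}=0$ for every $k$, i.e. $\ket{u_k}$ lies in the part of its block killed by $\Pi_A$. In a two-dimensional block with principal angle $\gamma_k$, if $\ket{a}$ spans the image of $\Pi_A$ and $\ket{b}$ spans the image of $\Pi_B$ with $|\braket{a}{b}|=\cos\gamma_k$, then $\ket{u_k}$ is (a scalar times) the unit vector $\ket{a^\perp}$ orthogonal to $\ket{a}$ in that block, and a direct computation gives $\norm{\Pi_B\ket{u_k}} = |\braket{b}{a^\perp}|\,\norm{\ket{u_k}} = \sin\gamma_k\,\norm{\ket{u_k}}$ (after the $\Pi_A\ket{u}=0$ simplification; the one-dimensional blocks with $\Pi_A=0$ contribute either $\gamma_k$ effectively $\pi/2$ or $0$ and are handled trivially or are excluded because $\Pi_\Theta$ kills them). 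Now apply $\Pi_\Theta$: it retains exactly the blocks with eigenphase $|{\pm 2\gamma_k}|\le\Theta$, that is $\gamma_k\le\Theta/2$. For those blocks, $\norm{\Pi_\Theta\Pi_B\ket{u_k}} \le \sin\gamma_k\,\norm{\ket{u_k}} \le \gamma_k\,\norm{\ket{u_k}} \le \tfrac{\Theta}{2}\norm{\ket{u_k}}$, using $\sin x\le x$.

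Finally I would assemble the pieces. The vectors $\Pi_\Theta\Pi_B\ket{u_k}$ lie in distinct orthogonal Jordan blocks, so by the Pythagorean theorem
\[
\norm{\Pi_\Theta\Pi_B\ket{u}}^2 = \sum_{k}\norm{\Pi_\Theta\Pi_B\ket{u_k}}^2 \le \frac{\Theta^2}{4}\sum_k\norm{\ket{u_k}}^2 = \frac{\Theta^2}{4}\norm{\ket{u}}^2,
\]
and taking square roots gives the claim. The main obstacle, and the step needing the most care, is the block-by-block bookkeeping: making Jordan's lemma precise enough to simultaneously diagonalize the action of $U$ into $\pm 2\gamma_k$ rotations, correctly identify which vector in each block survives $\Pi_A\ket{u}=0$, and verify that the edge cases (one-dimensional blocks, blocks where $\Pi_A$ or $\Pi_B$ is zero or the identity, and the $\gamma_k=\pi/2$ case) are consistent with the bound. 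Everything else is the elementary trigonometric identity $\norm{\Pi_B\ket{a^\perp}}=\sin\gamma$ in a single block plus $\sin x \le x$.
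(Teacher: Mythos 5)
The paper does not give its own proof of this lemma; it states it as a cited result from \cite{LMR+11}, so there is no in-paper argument to compare against. Your Jordan-lemma derivation is correct and is essentially the standard proof: decompose into two-dimensional $\Pi_A,\Pi_B$-invariant blocks, observe that on a block with principal angle $\gamma_k$ the eigenphases of $U$ are $\pm 2\gamma_k$ while the component of $\ket u$ in that block is orthogonal to $\mathrm{im}\,\Pi_A$ and hence has $\norm{\Pi_B\ket{u_k}}=\sin\gamma_k\,\norm{\ket{u_k}}$, note that $\Pi_\Theta$ retains precisely the blocks with $\gamma_k\le\Theta/2$, apply $\sin\gamma_k\le\gamma_k$, and sum over orthogonal blocks. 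The edge cases you flag (degenerate one-dimensional blocks, $\gamma_k\in\{0,\pi/2\}$, and $\Theta\ge\pi$ where the bound $\Theta/2\ge\pi/2>1$ is vacuous) are exactly the right ones to check, and each contributes either zero to $\Pi_B\ket{u}$, is annihilated by $\Pi_\Theta$, or makes the inequality trivial. One small point worth being explicit about if you write this up in full: $\Pi_\Theta$ commutes with the block decomposition because $U$ preserves each Jordan block, so its eigenspaces (and hence $\Pi_\Theta$) decompose block by block — this is what justifies the Pythagorean step at the end.
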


\noindent The following theorem was first used in the context of quantum algorithms by Szegedy \cite{sze04}:
\begin{theorem}[\cite{sze04}]\label{thm:szegedy}
Let $U=(2\Pi_A-I)(2\Pi_B-I)$ be a unitary on a finite inner product space $H$ containing $A=\mathrm{span}\{\ket{\psi_1},\dots,\ket{\psi_a}\}$ and $B=\mathrm{span}\{\ket{\phi_1},\dots,\ket{\phi_b}\}$.  Let $\Pi_A=\sum_{i=1}^a\ket{\psi_i}\bra{\psi_i}$ and $\Pi_B=\sum_{i=1}^b\ket{\phi_i}\bra{\phi_i}$. Let $D=\Pi_A\Pi_B$ be the discriminant of $U$, and suppose it has singular value decomposition $\sum_{j=1}^r\cos\theta_j\ket{\alpha_j}\bra{\beta_j}$,
with  
$\theta_j\in [0,\frac{\pi}{2}]$. 
Then the spectrum of $U$ is $\{e^{\pm 2i\theta_j}\}_j$. The 1-eigenspace of $U$ is $(A\cap B)\oplus (A^\bot\cap B^\bot)$ and the $-1$-eigenspace is $(A\cap B^\bot)\oplus (A^\bot\cap B)$. 
\end{theorem}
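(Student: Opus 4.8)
\medskip
\noindent\textbf{Proof plan.}
The plan is to split $H$ into $U$-invariant pieces and diagonalize $U$ on each. First I would isolate the \emph{trivial} part $T:=(A\cap B)\oplus(A\cap B^\bot)\oplus(A^\bot\cap B)\oplus(A^\bot\cap B^\bot)$: each of these four subspaces is invariant under both $\Pi_A$ and $\Pi_B$ (one of them acts as the identity, the other as the identity or as $0$), hence under $U$, and since $\Pi_A,\Pi_B$ are self-adjoint, $T^\bot$ is $U$-invariant as well. A one-line computation shows that on $A\cap B$ and on $A^\bot\cap B^\bot$ both factors of $U$ act as $\pm I$ with the signs cancelling, so $U=I$ there, while on $A\cap B^\bot$ and on $A^\bot\cap B$ exactly one factor flips sign, so $U=-I$ there. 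This already produces the claimed $(\pm1)$-eigenspaces.

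Next I would extract two-dimensional ``rotation'' blocks from the SVD of $D=\Pi_A\Pi_B$. Since $\mathrm{row}\,D\subseteq B$ and $\mathrm{col}\,D\subseteq A$, we have $\ket{\beta_j}\in B$ and $\ket{\alpha_j}\in A$, hence $\Pi_B\ket{\beta_j}=\ket{\beta_j}$ and $\Pi_A\ket{\alpha_j}=\ket{\alpha_j}$; plugging this into the SVDs of $D$ and of $D^\ast=\Pi_B\Pi_A$ gives $\Pi_A\ket{\beta_j}=\cos\theta_j\ket{\alpha_j}$ and $\Pi_B\ket{\alpha_j}=\cos\theta_j\ket{\beta_j}$, so in particular $\braket{\alpha_j}{\beta_j}=\cos\theta_j$. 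For each $j$ with $\theta_j\in(0,\tfrac\pi2)$ the vectors $\ket{\alpha_j},\ket{\beta_j}$ are then linearly independent and span a subspace $V_j$ which the relations above show is invariant under $\Pi_A$ and $\Pi_B$, hence under $U$. Orthonormalizing $(\ket{\alpha_j},\ket{\beta_j})$ by Gram--Schmidt turns $\Pi_A|_{V_j}$ and $\Pi_B|_{V_j}$ into the standard $2\times2$ projectors onto two lines meeting at angle $\theta_j$, whose product of reflections is rotation by $2\theta_j$, with eigenvalues $e^{\pm2i\theta_j}$. (Triples with $\theta_j=0$ have $\ket{\alpha_j}=\ket{\beta_j}\in A\cap B$ and are already counted in $T$.)

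It then remains to check that these pieces fit together exactly. The same identities give $\braket{\alpha_j}{\beta_k}=\cos\theta_k\,\delta_{jk}$, so the $V_j$ are pairwise orthogonal, and a short computation shows each $V_j$ is orthogonal to $T$. To conclude $T\oplus\bigoplus_j V_j=H$ it suffices, after restricting to $W:=T^\bot$ (within which all four trivial intersections vanish), to show $2\,\mathrm{rank}(D|_W)=\dim W$. This follows from a dimension count: when $A\cap B$, $A\cap B^\bot$, $A^\bot\cap B$, $A^\bot\cap B^\bot$ are all zero, each of $\Pi_B|_A$, $\Pi_{B^\bot}|_A$, $\Pi_B|_{A^\bot}$, $\Pi_{B^\bot}|_{A^\bot}$ is injective, which forces $\dim A=\dim A^\bot=\dim B=\dim B^\bot=\tfrac12\dim W$ and $\mathrm{rank}(D|_W)=\tfrac12\dim W$. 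Assembling the blocks then yields both the spectrum $\{e^{\pm2i\theta_j}\}_j$ and the stated eigenspaces.

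I expect the main obstacle to be exactly this last bookkeeping: verifying that the $2\times2$ blocks coming from the SVD of $D$, together with the four trivial intersection subspaces, are mutually orthogonal and exhaust $H$. The dimension-counting argument after passing to $T^\bot$ is the cleanest route I know. The other delicate point is that $V_j$ is \emph{not} orthogonal internally ($\braket{\alpha_j}{\beta_j}=\cos\theta_j\ne0$), so one must orthonormalize before reading off the $2\times2$ matrix of $U$, otherwise the rotation angle comes out wrong. One could instead cite Jordan's two-subspaces lemma as a black box for the $1\times1$/$2\times2$ block structure and then only match the $2\times2$ angles with the singular values of $D$, but proving that lemma amounts to the same count.
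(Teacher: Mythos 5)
The paper does not actually prove this statement: it is cited as a known result from \cite{sze04}, with only a short remark afterwards explaining why the discriminant $D=\Pi_A\Pi_B$ used here has the same singular values as Szegedy's original $D'=\Lambda_A^\dagger\Lambda_B$. So there is no internal proof to compare yours against. That said, your proof is correct and is the standard route (Jordan's two-subspace lemma / the CS decomposition): peel off the four trivial intersection subspaces on which $U$ acts as $\pm I$; use the SVD of $D$ to extract the relations $\Pi_A\ket{\beta_j}=\cos\theta_j\ket{\alpha_j}$ and $\Pi_B\ket{\alpha_j}=\cos\theta_j\ket{\beta_j}$, which make each $V_j=\mathrm{span}\{\ket{\alpha_j},\ket{\beta_j}\}$ a two-dimensional $U$-invariant block on which $U$ is a rotation by $2\theta_j$; then close the dimension count after restricting to $T^\bot$. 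You correctly flag the two delicate points, the internal non-orthogonality of $(\ket{\alpha_j},\ket{\beta_j})$ before Gram--Schmidt, and the exhaustion of $H$. One small point worth making explicit in the exhaustion step: since $T$ is invariant under the self-adjoint operators $\Pi_A,\Pi_B$, these commute with $\Pi_T$, which is what justifies treating $\Pi_A|_W$ and $\Pi_B|_W$ as orthogonal projectors within $W$ onto $A\cap W$ and $B\cap W$; with that observed, your injectivity and rank computation go through exactly as written. Your remark that you could instead cite Jordan's lemma as a black box and only match angles with singular values is also accurate, and is in spirit what the paper itself does by citing Szegedy.
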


Let $\Lambda_A=\sum_{j=1}^a\ket{\psi_j}\bra{j}$ and $\Lambda_B=\sum_{j=1}^b\ket{\phi_j}\bra{j}$. We note that in the original statement of \thm{szegedy}, the discriminant is defined $D'=\Lambda_A^\dagger\Lambda_B$. However it is easy to see that $D'$ and $D$ have the same singular values: 
if $D'=\sum_i\sigma_i\ket{v_i}\bra{u_i}$ is a singular value decomposition of $D'$, then $D=\sum_i\sigma_i\Lambda_A\ket{v_i}\bra{u_i}\Lambda_B^\dagger$ is a singular value decomposition of $D$, since $\Lambda_A$ acts as an isometry on the columns of $D'$, and $\Lambda_B$ acts as an isometry on the rows of $D'$. 

The following corollary to \thm{szegedy} will be useful in the analysis of several algorithms.
\begin{corollary}[Phase gap and discriminant]\label{cor:szegedy}
Let $D$ be the discriminant of a unitary $U=(2\Pi_A-I)(2\Pi_B-I)$. Then $\Delta(-U)\geq 2\sigma_{\min}(D)$.
\end{corollary}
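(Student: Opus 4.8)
The plan is to read the spectrum of $-U$ directly off \thm{szegedy} and then reduce the claimed bound to the elementary inequality $\sin\phi\le\phi$.

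First I would apply \thm{szegedy} to $U$. Writing the singular value decomposition of the discriminant as $D=\sum_j\cos\theta_j\ket{\alpha_j}\bra{\beta_j}$ with $\theta_j\in[0,\frac{\pi}{2})$ (positivity of the singular values $\cos\theta_j$ forces $\theta_j<\frac{\pi}{2}$), the eigenvalues of $U$ are the $e^{\pm 2i\theta_j}$ together with $-1$ on $(A\cap B^\bot)\oplus(A^\bot\cap B)$, while the $1$-eigenspace of $U$ is $(A\cap B)\oplus(A^\bot\cap B^\bot)$. Negating, the eigenvalues of $-U$ are the $e^{i(\pi\pm 2\theta_j)}$, the value $-1$ on $(A\cap B)\oplus(A^\bot\cap B^\bot)$, and the value $1$ on $(A\cap B^\bot)\oplus(A^\bot\cap B)$. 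The crucial bookkeeping point is that the $1$-eigenspace of $-U$ is exactly the old $-1$-eigenspace of $U$, so it contributes nothing to the phase gap $\Delta(-U)$.

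Next I would identify the nonzero phases. Brought into $(-\pi,\pi]$, the eigenvalue $e^{i(\pi\pm 2\theta_j)}$ has phase $\pm(\pi-2\theta_j)$, whose absolute value is $\pi-2\theta_j\in(0,\pi]$ since $\theta_j\in[0,\frac{\pi}{2})$; and the eigenvalue $-1$ has phase $\pi$ (which equals $\pi-2\theta_j$ at $\theta_j=0$). Hence
\[
\Delta(-U)=\min_j\,(\pi-2\theta_j),
\]
and it suffices to prove that $\pi-2\theta\ge 2\cos\theta$ for every $\theta\in[0,\frac{\pi}{2})$: given that, $\pi-2\theta_j\ge 2\cos\theta_j\ge 2\min_j\cos\theta_j=2\sigma_{\min}(D)$ for each $j$. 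Substituting $\phi:=\frac{\pi}{2}-\theta\in(0,\frac{\pi}{2}]$ rewrites the desired inequality as $2\phi\ge 2\sin\phi$, i.e.\ $\sin\phi\le\phi$, which holds for all $\phi\ge 0$.

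I do not anticipate a real obstacle: the entire argument is careful accounting — tracking which eigenspace of $U$ maps to which eigenspace of $-U$, and recalling that $\sigma_{\min}$ means the smallest \emph{nonzero} singular value. The only genuinely degenerate case is $D=0$ (equivalently $\Pi_A\Pi_B=0$, forcing $A\perp B$), where $-U$ has no nonzero phase below $\pi$ and $D$ has no nonzero singular value, so the statement is vacuous; it can be set aside at the outset. One may also note in passing that all singular values of the projector product $D=\Pi_A\Pi_B$ lie in $[0,1]$, consistent with writing them as $\cos\theta_j$.
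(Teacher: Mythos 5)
Your proposal is correct and takes essentially the same route as the paper: read off the phases of $-U$ from Szegedy's theorem as $\pm(\pi-2\theta_j)$, then reduce the desired inequality $\pi-2\theta\ge 2\cos\theta$ to $\sin\phi\le\phi$ via $\phi=\tfrac{\pi}{2}-\theta$ (the paper phrases this as $\theta\ge\sin\theta=\cos(\tfrac{\pi}{2}-\theta)$ together with monotonicity of $\cos^{-1}$, but it is the same elementary bound). Your extra bookkeeping about where $U$'s $\pm1$-eigenspaces land under negation, and the degenerate case $D=0$, is a slightly more careful account of the same argument rather than a different one.
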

\begin{proof}
By \thm{szegedy}, if $\{\sigma_0=\cos\theta_0 < \sigma_1=\cos\theta_1 < \dots \sigma_m=\cos\theta_m\}$ are the singular values of $D$, for $\theta_j\in [0,\frac{\pi}{2}]$, then $U$ has phases $\{\pm 2\theta_j\}_{j=0}^m\subset [-\pi, \pi]$, and so $-U$ has phases $\{\pm 2\theta_j\mp\pi\}_{j=0}^m=\{\pm (2\theta_j-\pi)\}_{j=0}^m\subset [-\pi,\pi]$. 
Thus 
$$\Delta(-U)=\min\{|\pi-2\theta_j|:\theta_j\neq {\pi}/{2}\}=|\pi-2\cos^{-1}\min\{\sigma_j:\sigma_j\neq 0\}|=|\pi-2\cos^{-1}\sigma_{\min}(D)|.$$
We have $\theta\geq \sin\theta = \cos({\pi}/{2}-\theta)$, so $\sigma_{\min}(D)\geq \cos(\pi/2-\sigma_{\min}(D))$. Then since $\cos$ is decreasing on the interval $[0,\pi/2]$, we have 
$\cos^{-1}(\sigma_{\min}(D))\leq {\pi}/{2}-\sigma_{\min}(D)$, and thus 
$$\hspace{125pt}\Delta(-U)\geq \abs{\pi-2\({\pi}/{2}-\sigma_{\min}(D)\)}=2\sigma_{\min}(D).\hspace{100pt}\qedhere$$
\end{proof}

\section{Approximate Span Programs}\label{sec:span}

\subsection{Span Programs and Decision Problems}\label{sec:span-old}

In this section, we review the concept of span programs, and their use in quantum algorithms. 

\begin{definition}[Span Program]\label{def:span}
A span program $P=(H,V,\tau,A)$ on $[q]^n$ consists of 
\begin{enumerate}
\item finite-dimensional inner product spaces $H=H_1\oplus \dots \oplus H_n\oplus H_\mathrm{true}\oplus H_\mathrm{false}$, and $\{H_{j,a}\subseteq H_j\}_{j\in [n],a\in [q]}$ such that $H_{j,1}+\dots+H_{j,q}=H_j$,
\item a vector space $V$,
\item a \emph{target vector} $\tau\in V$, and
\item a linear operator $A\in \mathcal{L}(H,V)$.
\end{enumerate}
To each string $x\in [q]^n$, we associate a subspace $H(x):=H_{1,x_1}\oplus \dots\oplus H_{n,x_n}\oplus H_\mathrm{true}$. 
\end{definition}

Although our notation in \defin{span} deviates from previous span program definitions, the only difference in the substance of the definition is that the spaces $H_{j,a}$ and $H_{j,b}$ for $a\neq b$ need not be orthogonal in our definition. This has the effect of removing $\log q$ factors in the equivalence between span programs and the dual adversary bound (for details see \cite[Sec.\ 7.1]{jef14}). The spaces $H_{\mathrm{true}}$ and $H_{\mathrm{false}}$ can be useful for designing a span program, but are never required, since we can always add an \th{$(n+1)$} variable, set $x_{n+1}=1$, and let $H_{n+1,0}=H_{\mathrm{false}}$ and $H_{n+1,1}=H_{\mathrm{true}}$. 

 A span program on $[q]^n$ partitions $[q]^n$ into two sets: \emph{positive} inputs, which we call $P_1$, and \emph{negative} inputs, which we call $P_0$. The importance of this partition stems from the fact that a span program may be converted into a quantum algorithm for deciding this partition in the quantum query model \cite{Rei09,Rei11}. Thus, if one can construct a span program whose partition of $[q]^n$ corresponds to a problem one wants to solve, an algorithm follows. In order to describe how a span program partitions $[q]^n$ and the query complexity of the resulting algorithm, we need the concept of positive and negative witnesses and witness size.

\begin{definition}[Positive and Negative Witness]
Fix a span program $P$ on $[q]^n$, and a string $x\in [q]^n$.
We say that $\ket{w}$ is a \emph{positive witness for $x$ in $P$} if $\ket{w}\in H(x)$, and $A\ket{w}=\tau$. We define the \emph{positive witness size of $x$} as:
$$w_+(x,P)=w_+(x)=\min\{\norm{\ket{w}}^2: \ket{w}\in H(x):A\ket{w}=\tau\},$$
if there exists a positive witness for $x$, and $w_+(x)=\infty$ else.
We say that $\omega\in\mathcal{L}(V,\mathbb{R})$ is a \emph{negative witness for $x$ in $P$} if $\omega A\Pi_{H(x)} = 0$ and $\omega\tau = 1$. We define the \emph{negative witness size of $x$} as:
$$w_-(x,P)=w_-(x)=\min\{\norm{\omega A}^2:{\omega\in \mathcal{L}(V,\mathbb{R}): \omega A\Pi_{H(x)}=0, \omega\tau=1}\},$$
if there exists a negative witness, and $w_-(x)=\infty$ otherwise.
If $w_+(x)$ is finite, we say that $x$ is \emph{positive} (wrt. $P$), and if $w_-(x)$ is finite, we say that $x$ is \emph{negative}. We let $P_1$ denote the set of positive inputs, and $P_0$ the set of negative inputs for $P$.  Note that for every $x\in [q]^n$, exactly one of $w_-(x)$ and $w_+(x)$ is finite; that is, $(P_0,P_1)$ partitions $[q]^n$. 
\end{definition}

For a decision problem $f:X\subseteq [q]^n\rightarrow \{0,1\}$, we say that $P$ \emph{decides} $f$ if $f^{-1}(0)\subseteq P_0$ and $f^{-1}(1)\subseteq P_1$. In that case, we can use $P$ to construct a quantum algorithm that decides $f$.

\begin{theorem}[\cite{Rei09}]
Fix $f:X\subseteq[q]^n\rightarrow\{0,1\}$, and let $P$ be a span program on $[q]^n$ that decides $f$. Let $W_+(f,P)=\max_{x\in f^{-1}(1)}w_+(x,P)$ and $W_-(f,P)=\max_{x\in f^{-1}(0)}w_-(x,P)$. Then there exists a quantum algorithm that decides $f$ using $O(\sqrt{W_+(f,P)W_-(f,P)})$ queries.
\end{theorem}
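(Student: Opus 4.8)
The plan is to reproduce Reichardt's span-program algorithm: from $P$ we build a unitary $U(x)$ that is a product of two reflections, one of which is independent of $x$ and one of which costs $O(1)$ queries, and then decide $f$ by running phase estimation on $U(x)$ applied to a fixed starting state, followed by amplitude estimation. To set this up, pass to the augmented space $\widehat H := \mathrm{span}\{\ket{\hat 0}\}\oplus H$ and the augmented operator $\widehat A\in\mathcal L(\widehat H,V)$ defined by $\widehat A\ket{\hat 0} := -\lambda\tau$ and $\widehat A|_H := A$, where $\lambda > 0$ is a scaling parameter that I will fix at $\lambda^2 = 1/W_+(f,P)$. For $x\in[q]^n$ set $\widehat H(x):=\mathrm{span}\{\ket{\hat 0}\}\oplus H(x)$, let $\Pi_{\ker\widehat A}$ and $\Pi_{\widehat H(x)}$ be the corresponding orthogonal projectors, and define $U(x):=(2\Pi_{\ker\widehat A}-I)(2\Pi_{\widehat H(x)}-I)$. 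The reflection $2\Pi_{\ker\widehat A}-I$ does not depend on $x$, and $2\Pi_{\widehat H(x)}-I$ can be implemented with $O(1)$ queries (load each relevant $x_j$ into an ancilla, reflect the $H_j$-component about $H_{j,x_j}$, uncompute the query), so each use of $U(x)$ costs $O(1)$ queries. The starting state will be $\ket{\hat 0}$.

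Next I would prove the two spectral facts that drive the algorithm, writing $W_\pm:=W_\pm(f,P)$. If $x$ is positive, take an optimal positive witness $\ket w\in H(x)$ with $A\ket w=\tau$ and $\norm{\ket w}^2=w_+(x)\le W_+$; then $\ket{\hat 0}+\lambda\ket w$ lies in $\ker\widehat A\cap\widehat H(x)$, hence in the $1$-eigenspace of $U(x)$, and its normalized overlap with $\ket{\hat 0}$ is $1/\sqrt{1+\lambda^2w_+(x)}\ge 1/\sqrt 2$, so $\norm{\Pi_0\ket{\hat 0}}\ge 1/\sqrt 2$ where $\Pi_0$ is the projector onto the $1$-eigenspace. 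If $x$ is negative, take an optimal negative witness $\omega$, represented by $\ket{\hat\omega}\in V$, so that $\omega A\Pi_{H(x)}=0$, $\omega\tau=1$, and $\norm{\omega A}^2=w_-(x)\le W_-$. Put $\ket u:=\widehat A^\dagger\ket{\hat\omega}$, which lies in $\mathrm{row}\,\widehat A=(\ker\widehat A)^\perp$, so $\Pi_{\ker\widehat A}\ket u=0$. Using $\widehat A\ket{\hat 0}=-\lambda\tau$, $\omega\tau=1$, and $\omega A\Pi_{H(x)}=0$, a short computation gives $\Pi_{\widehat H(x)}\ket u=-\lambda\ket{\hat 0}$ and $\norm{\ket u}^2=\lambda^2+\norm{\omega A}^2\le\lambda^2+W_-$. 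Applying the Effective Spectral Gap Lemma (\lem{gap}) to $U(x)$ with $\Pi_A=\Pi_{\ker\widehat A}$ and $\Pi_B=\Pi_{\widehat H(x)}$ then yields, for every $\Theta$,
$$\norm{\Pi_\Theta\ket{\hat 0}}=\frac1\lambda\norm{\Pi_\Theta\Pi_{\widehat H(x)}\ket u}\le\frac{\Theta}{2\lambda}\sqrt{\lambda^2+W_-}=\frac\Theta2\sqrt{1+W_+W_-},$$
where $\Pi_\Theta$ projects onto the span of eigenvectors of $U(x)$ with phase at most $\Theta$ in absolute value.

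Finally I would assemble the algorithm. Choose $\Theta=c/\sqrt{W_+W_-}$ for a small constant $c$ so that $\norm{\Pi_\Theta\ket{\hat 0}}\le 1/8$ on negative inputs; run phase estimation (\thm{phase-est}) on $U(x)$ applied to $\ket{\hat 0}$ with precision $\Theta$ and a small constant error $\eps$, writing the outcome into a flag register; then, splitting $\ket{\hat 0}$ into its $\Pi_\Theta$ part and its $(I-\Pi_\Theta)$ part, observe that on a positive input the flag-$0$ amplitude is at least $\norm{\Pi_0\ket{\hat 0}}\ge 1/\sqrt 2$, while on a negative input it is at most $\norm{\Pi_\Theta\ket{\hat 0}}+\sqrt\eps\le 1/8+\sqrt\eps$. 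Since these two regimes are separated by a constant, \cor{amp-est-dec} distinguishes them with $O(1)$ repetitions of the phase-estimation circuit, each of which makes $O(1/\Theta)=O(\sqrt{W_+W_-})$ calls to $U(x)$; hence the whole algorithm uses $O(\sqrt{W_+W_-})=O(\sqrt{W_+(f,P)W_-(f,P)})$ queries, as claimed.

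I expect the negative-input analysis to be the crux: choosing the augmentation $\widehat A$ and, crucially, the scaling $\lambda$, exhibiting the vector $\ket u$ in $\mathrm{row}\,\widehat A$ whose image under $\Pi_{\widehat H(x)}$ is proportional to the starting state $\ket{\hat 0}$, and bounding $\norm{\ket u}$ by $w_-(x)$ — this is precisely where the negative witness enters, and the choice $\lambda^2=1/W_+$ is what makes the positive and negative thresholds differ by a constant so that the $O(\sqrt{W_+W_-})$ bound falls out of \thm{phase-est}. Implementing the two reflections within $O(1)$ queries and wrapping phase estimation in amplitude estimation are then routine given the tools in \sec{prelim}.
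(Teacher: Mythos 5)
Your proof is a correct reconstruction of Reichardt's argument, but note that the paper does not actually prove this theorem --- it cites \cite{Rei09} --- and its own span-program machinery (developed in \sec{alg} to prove the generalization \thm{approx-alg}) takes a genuinely different route. You augment the space with a fresh vector $\ket{\hat 0}$, fold the scaling $\lambda$ into the new operator $\widehat A$ via $\widehat A\ket{\hat 0}=-\lambda\tau$, and run phase estimation of $(2\Pi_{\ker\widehat A}-I)(2\Pi_{\widehat H(x)}-I)$ on the fixed state $\ket{\hat 0}$. The paper instead works directly in $H$: it first normalizes (and separately rescales) the span program via the construction $P^\beta$ of \thm{scaling}, takes the minimal positive witness $\ket{w_0}=A^+\tau$ as the initial state, and uses the unitaries $U(P,x)=(2\Pi_{\ker A}-I)(2\Pi_{H(x)}-I)$ and $U'(P,x)=(2\Pi_{H(x)}-I)(2\Pi_T-I)$ with $T=\ker A\oplus\mathrm{span}\{\ket{w_0}\}$. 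The two setups play the same role: your $\ket{\hat 0}$-direction is doing what the paper's $\mathrm{span}\{\ket{w_0}\}$ summand in $T$ does, and your in-line choice $\lambda^2=1/W_+$ is the analogue of the paper's explicit scaling parameter $\beta$. What the paper's formulation buys is the exact identities $\norm{\Pi_0^x\ket{w_0}}^2=1/w_-(x)$ and $\norm{\overline\Pi_0^x\ket{w_0}}^2=1/w_+(x)$ (\lem{pi-0-neg}, \lem{pi-0-pos}), which are what let it \emph{estimate} witness sizes, not merely threshold them; your version gives a one-sided bound on the overlap (at least $1/\sqrt2$ on positive inputs) which suffices for the decision problem but wouldn't by itself support the estimation theorems. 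Your proof of the stated theorem is sound: the positive-witness overlap bound, the verification that $\ket u=\widehat A^\dagger\ket{\hat\omega}$ satisfies $\Pi_{\ker\widehat A}\ket u=0$ and $\Pi_{\widehat H(x)}\ket u=-\lambda\ket{\hat 0}$, and the application of \lem{gap} with $\Theta=\Theta(1/\sqrt{W_+W_-})$ all check out, and the closing appeal to \thm{phase-est} and \cor{amp-est-dec} gives the claimed $O(\sqrt{W_+W_-})$ query bound.
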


We call $\sqrt{W_+(f,P)W_-(f,P)}$ the \emph{complexity} of $P$. It is known that for any decision problem, there exists a span program whose complexity is equal, up to constants, to its query complexity \cite{Rei09,Rei11} (\cite[Sec.\ 7.1]{jef14} removes log factors in this statement), however, it is generally a difficult task to find such an optimal span program.

\subsection{Span Programs and Approximate Decision Problems}\label{sec:span-new}

Consider a span program $P$ and $x\in P_0$. Suppose there is some $\ket{w}\in H(x)$ such that $A\ket{w}$ comes extremely close to $\tau$. We might say that $x$ is very close to being in $P_1$. If all vectors in $H(y)$ for $y\in P_0\setminus\{x\}$ are very far from $\tau$, it might be slightly more natural to consider the partition $(P_0\setminus\{x\},P_1\cup\{x\})$ rather than $(P_0,P_1)$. 

As further motivation, we mention a construction of Reichardt \cite[Sec.\ 3 of full version]{Rei09} that takes any quantum query algorithm with one-sided error, and converts it into a span program whose complexity matches the query complexity of the algorithm. The target of the span program is the vector $\ket{1,\bar{0}}$, which corresponds to a quantum state with a 1 in the answer register and 0s elsewhere. If an algorithm has no error on 1-inputs, it can be modified so that it always ends in exactly this state, by uncomputing all but the answer register. An algorithm with two-sided error cannot be turned into a span program using this construction, because there is error in the final state. This is intuitively in opposition to the evidence that span programs characterize bounded (two-sided) error quantum query complexity. The exactness required by span programs seems to contrast the spirit of non-exact quantum algorithms. 

This motivates us to consider the \emph{positive error} of an input, or how close it comes to being positive. Since there is no meaningful notion of distance in $V$, we consider closeness in $H$.

\begin{definition}[Positive Error]
For any span program $P$ on $[q]^n$, and $x\in [q]^n$, we define the \emph{positive error of $x$ in $P$} as:
$$e_+(x)=e_+(x,P):=\min\left\{\norm{\Pi_{H(x)^\bot}\ket{w}}^2:A\ket{w}=\tau\right\}.$$
Note that $e_+(x,P)=0$ if and only if $x\in P_1$. Any $\ket{w}$ such that $\norm{\Pi_{H(x)^\bot}\ket{w}}^2=e_+(x)$ is called a \emph{min-error positive witness for $x$ in $P$}. We define 
$$\tilde{w}_+(x)=\tilde{w}_+(x,P):=\min\left\{\norm{\ket{w}}^2:A\ket{w}=\tau, \norm{\Pi_{H(x)^\bot}\ket{w}}^2=e_+(x)\right\}.$$
A min-error positive witness that also minimizes $\norm{\ket{w}}^2$ is called an \emph{optimal min-error positive witness for $x$}.
\end{definition}

Note that if $x\in P_1$, then $e_+(x)=0$. In that case, a min-error positive witness for $x$ is just a positive witness, and $\tilde w_+(x)=w_+(x)$. 

We can define a similar notion for positive inputs, to measure their closeness to being negative. 

\begin{definition}[Negative Error]
For any span program $P$ on $[q]^n$ and $x\in [q]^n$, we define the \emph{negative error of $x$ in $P$} as:
$$e_-(x)=e_-(x,P):=\min\left\{\norm{\omega A\Pi_{H(x)}}^2: \omega(\tau)=1\right\}.$$
Again, $e_-(x,P)=0$ if and only if $x\in P_0$. Any $\omega$ such that $\norm{\omega A\Pi_{H(x)}}^2=e_-(x,P)$ is called a \emph{min-error negative witness for $x$ in $P$}. We define 
$$\tilde{w}_-(x)=\tilde{w}_-(x,P):=\min\left\{\norm{\omega A}^2:\omega(\tau)=1,\norm{\omega A\Pi_{H(x)}}^2=e_-(x,P)\right\}.$$
A min-error negative witness that also minimizes $\norm{\omega A}^2$ is called an \emph{optimal min-error negative witness for $x$}.
\end{definition}

It turns out that the notion of span program error has a very nice characterization as exactly the reciprocal of the witness size:
$$\forall x\in P_0,\;w_-(x)=\frac{1}{e_+(x)},\qquad\mbox{and}\qquad \forall x\in P_1,\; w_+(x)=\frac{1}{e_-(x)},$$
which we prove shortly in \thm{equiv1} and \thm{equiv2}. This is a very nice state of affairs, for a number of reasons. It allows us two ways of thinking about approximate span programs: in terms of how small the error is, or how large the witness size is. That is, we can say that an input $x\in P_0$ is \emph{almost positive} either because its positive error is small, or equivalently, because its negative witness size is large. In general, we can think of $P$ as not only partitioning $P$ into $(P_0,P_1)$, but inducing an ordering on $[q]^n$ from most negative --- smallest negative witness, or equivalently, largest positive error --- to most positive --- smallest positive witness, or equivalently, largest negative error. For example, on the domain $\{x^{(1)},\dots,x^{(6)}\}\subset [q]^n$, $P$ might induce the following ordering:
\begin{center}
\begin{tikzpicture}
\draw[<-,thick] (-5,0) -- (-.2,0);
\draw[->,thick] (.2,0)--(5,0);
\draw[dashed] (0,-.5)--(0,.5);

\node at (-4,.25) {$x^{(1)}$};
\node at (-3,.25) {$x^{(2)}$};
\node at (-1,.25) {$x^{(3)}$};
\node at (2,.25) {$x^{(4)}$};
\node at (2.7,.25) {$x^{(5)}$};
\node at (4,.25) {$x^{(6)}$};

\draw[dashed] (-2,.5)--(-2,-.2);
\draw[dashed] (3.5,.5)--(3.5,-.2);

\node at (-2.75,-.25) {\small increasing positive error/};
\node at (-2.75,-.6) {\small decreasing negative witness size};

\node at (2.75,-.25) {\small increasing negative error/};
\node at (2.75,-.6) {\small decreasing positive witness size};

\end{tikzpicture}
\end{center}
The inputs $\{x^{(1)},x^{(2)},x^{(3)}\}$ are in $P_0$, and $w_-(x^{(1)})< w_-(x^{(2)})<w_-(x^{(3)})$ (although it is generally possible for two inputs to have the same witness size). The inputs $\{x^{(4)},x^{(5)},x^{(6)}\}$ are in $P_1$, and $w_+(x^{(4)})>w_+(x^{(5)})>w_+(x^{(6)})$. The span program exactly decides the partition $(\{x^{(1)},x^{(2)},x^{(3)}\},\{x^{(4)},x^{(5)},x^{(6)}\})$, but we say it \emph{approximates} any partition that respects the ordering. If we obtain a partition by drawing a line somewhere on the left side, for example $(\{x^{(1)},x^{(2)}\},\{x^{(3)},x^{(4)},x^{(5)},x^{(6)}\})$, we say $P$ \emph{negatively} approximates the function corresponding to that partition, whereas if we obtain a partition by drawing a line on the right side, for example $(\{x^{(1)},x^{(2)},x^{(3)},x^{(4)},x^{(5)}\},\{x^{(6)}\})$, we say $P$ \emph{positively} approximates the function. 

\begin{definition}[Functions Approximately Associated with $P$]\label{def:approx}
Let $P$ be a span program on $[q]^n$, and $f:X\subseteq [q]^n\rightarrow \{0,1\}$ a decision problem. For any $\lambda\in (0,1)$, we say that $P$  \emph{positively $\lambda$-approximates $f$} if $f^{-1}(1)\subseteq P_1$, and for all $x\in f^{-1}(0)$, either $x\in P_0$, or $w_+(x,P)\geq \frac{1}{\lambda}W_+(f,P)$. 
We say that $P$ \emph{negatively $\lambda$-approximates $f$} if $f^{-1}(0)\subseteq P_0$, and for all $x\in f^{-1}(1)$, either $x\in P_1$, or $w_-(x,P)\geq \frac{1}{\lambda}W_-(f,P)$. 
If $P$ decides $f$ exactly, then both conditions hold for any value of $\lambda$, and so we can say that $P$ $0$-approximates $f$. 
\end{definition}

This allows us to consider a much broader class of functions associated with a particular span program. This association is useful, because as with the standard notion of association between a function $f$ and a span program, if a function is approximated by a span program, we can convert the span program into a quantum algorithm that decides $f$ using a number of queries related to the witness sizes. Specifically, we get the following theorem, proven in \sec{alg}.

\begin{theorem}[Approximate Span Program Decision Algorithms]\label{thm:approx-alg}
Fix $f:X\subseteq[q]^n\rightarrow\{0,1\}$, and let $P$ be a span program that positively $\lambda$-approximates $f$. Define $$W_+=W_+(f,P):=\max_{x\in f^{-1}(1)}w_+(x,P)\qquad\mbox{ and }\qquad\widetilde{W}_-=\widetilde{W}_-(f,P):=\max_{x\in f^{-1}(0)}\tilde{w}_-(x,P).$$ 
There is a quantum algorithm that decides $f$ with bounded error in $O\(\frac{\sqrt{W_+\widetilde{W}_-}}{(1-\lambda)^{3/2}}\log\frac{1}{1-\lambda}\)$ queries. 
Similarly, let $P$ be a span program that negatively $\lambda$-approximates $f$. Define 
$$W_-=W_-(f,P):=\max_{x\in f^{-1}(0)}w_-(x,P)\qquad\mbox{ and }\qquad\widetilde{W}_+=\widetilde{W}_+(f,P):=\max_{x\in f^{-1}(1)}\tilde{w}_+(x,P).$$
 There is a quantum algorithm that decides $f$ with bounded error in $O\(\frac{\sqrt{W_-\widetilde{W}_+}}{(1-\lambda)^{3/2}}\log\frac{1}{1-\lambda}\)$ queries. 
\end{theorem}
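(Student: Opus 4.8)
The plan is to reduce to Reichardt's standard span-program algorithm, using the min-error witness machinery to handle the inputs on which the approximation fails to be exact. I will describe the positive-approximation algorithm; the negative-approximation one is obtained by the symmetric construction in which the roles of positive and negative witnesses (equivalently, of $\ker\hat A$ and of $H(x)$) are exchanged and $\widetilde W_+$ replaces $\widetilde W_-$. Given $P=(H,V,\tau,A)$ that positively $\lambda$-approximates $f$, form $\hat H=\mathrm{span}\{\ket{\hat 0}\}\oplus H$ and the operator $\hat A$ with $\hat A\ket{\hat 0}=-\tau/\sqrt\beta$ and $\hat A|_H=A$, where $\beta$ (of order $W_+$) is a scaling parameter to be fixed. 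Let $\Pi_{\ker}$ project onto $\ker\hat A$ (input-independent) and $\Pi_x$ project onto $B:=\mathrm{span}\{\ket{\hat 0}\}\oplus H(x)$ (implementable with one query to $x$), and set $U(x)=(2\Pi_{\ker}-I)(2\Pi_x-I)$, a product of two reflections. The algorithm runs phase estimation (\thm{phase-est}) on $U(x)$ with precision $\Theta$ and error $\eps$ from $\ket{\hat 0}$, flags whether the estimated phase is $0$, and applies \cor{amp-est-dec} to decide between two thresholds on the acceptance probability.

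For a positive input $x\in f^{-1}(1)\subseteq P_1$, a minimum-norm positive witness $\ket w$ gives $(\sqrt\beta,\ket w)\in\ker\hat A\cap B$, which by \thm{szegedy} lies in the $1$-eigenspace of $U(x)$; a short computation shows $\norm{\Pi_0\ket{\hat 0}}^2=\beta/(\beta+w_+(x))\geq\beta/(\beta+W_+)$, the rest of the $1$-eigenspace being orthogonal to $\ket{\hat 0}$. For a negative input $x\in f^{-1}(0)$ I take a minimum-size negative witness $\omega$ when $x\in P_0$, and the optimal min-error negative witness $\omega$ when $x\in P_1$; in both cases $\norm{\omega A}^2=\tilde w_-(x)\leq\widetilde W_-$, while $\norm{\omega A\Pi_{H(x)}}^2=e_-(x)$, which by the error/witness-size duality $e_-(x)=1/w_+(x)$ and the approximation hypothesis is $\leq\lambda/W_+$ (and $0$ when $x\in P_0$). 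Put $\ket u=\hat A^\dagger\omega$; then $\ket u\perp\ker\hat A$, and $\Pi_x\ket u=-\tfrac1{\sqrt\beta}\ket{\hat 0}+v$ with $v=\Pi_{H(x)}A^\dagger\omega$, $\norm v^2=e_-(x)$, $\norm{\ket u}^2=1/\beta+\norm{\omega A}^2$. Since the $1$-eigenspace of $U(x)$ is $(\ker\hat A\cap B)\oplus(\ker\hat A^\perp\cap B^\perp)$ and $\ket u\perp\ker\hat A$, one checks $\Pi_0\Pi_x\ket u=0$, hence $\Pi_0 v=\tfrac1{\sqrt\beta}\Pi_0\ket{\hat 0}$, so \lem{gap} applied to $\ket u$ controls only the strictly-small-phase part: $\tfrac1{\sqrt\beta}\norm{\Pi_{(0,\Theta]}\ket{\hat 0}}\leq\tfrac\Theta2\norm{\ket u}+\norm{\Pi_{(0,\Theta]}v}$. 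Combining this with the exact value $\norm{\Pi_0\ket{\hat 0}}^2=\beta/(\beta+w_+(x))$ and bounding the residual $v$-term yields $\norm{\Pi_\Theta\ket{\hat 0}}^2\lesssim \beta/w_+(x)+O(\Theta\sqrt{1+\beta\widetilde W_-})\leq\beta\lambda/W_+ +O(\Theta\sqrt{1+\beta\widetilde W_-})$ when $x\in P_1$, and $\norm{\Pi_\Theta\ket{\hat 0}}^2\leq O(\Theta^2(1+\beta\widetilde W_-))$ when $x\in P_0$.

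Thus, with $p_0=\beta/(\beta+W_+)$ and $p_1=\beta\lambda/W_+ +O(\Theta\sqrt{1+\beta\widetilde W_-})+\eps$, phase estimation costs $O(\tfrac1\Theta\log\tfrac1\eps)$ applications of $U(x)$ (hence that many queries), and \cor{amp-est-dec} decides using $O(\sqrt{p_0}/(p_0-p_1))$ runs of the routine. Choosing $\beta$ of order $W_+$ so as to maximize the gap $p_0-\beta\lambda/W_+$ over the admissible range of $\beta$, then $\Theta=\Theta\!\big((1-\lambda)/\sqrt{1+W_+\widetilde W_-}\big)$ and $\eps=\Theta((1-\lambda)^2)$, makes $p_0=\Omega(1)$ and $p_0-p_1=\Omega(1-\lambda)$, so the total query count is $O\!\big(\tfrac{\sqrt{W_+\widetilde W_-}}{(1-\lambda)^{3/2}}\log\tfrac1{1-\lambda}\big)$, as claimed; the negative-approximation algorithm follows from the dual construction, with $W_-$ and $\widetilde W_+$ in place of $W_+$ and $\widetilde W_-$.

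The main obstacle is the $x\in P_1$ sub-case of the negative analysis: one needs a bound on $\norm{\Pi_\Theta\ket{\hat 0}}$ essentially as small as the true phase-$0$ overlap $\beta/(\beta+w_+(x))$, which is small precisely because $w_+(x)\geq W_+/\lambda$ is large, whereas a naive use of \lem{gap} produces a freestanding $\sqrt{\beta\,e_-(x)}$ term that is far too large and destroys the gap. The fix is the observation that $\hat A^\dagger\omega$ has zero phase-$0$ component, so \lem{gap} really only sees $\Pi_{(0,\Theta]}$, together with a careful combination of that estimate with the exact phase-$0$ calculation; after that, the parameters $\beta$ (trading $p_0$ against the size of the gap) and $\Theta$ (trading the cost of phase estimation against washing out the gap) must be balanced precisely to obtain the exponent $3/2$ rather than $2$. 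A secondary, routine point is verifying that each of the two reflections defining $U(x)$ can be implemented with $O(1)$ input queries.
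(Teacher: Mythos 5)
Your construction is close in spirit to the paper's, but coordinatized differently: you adjoin a fresh coordinate $\ket{\hat 0}$ with $\hat A\ket{\hat 0}=-\tau/\sqrt\beta$ and phase-estimate from $\ket{\hat 0}$, whereas the paper uses the minimal positive witness $\ket{w_0}=A^+\tau$ as the initial state directly, together with the unitary $U'(P,x)=(2\Pi_{H(x)}-I)(2\Pi_{\ker A\oplus\mathrm{span}\{\ket{w_0}\}}-I)$, and handles the $\beta$-scaling separately via \thm{scaling} and \cor{alg-approx-neg}. Your identification of $\norm{\Pi_0\ket{\hat 0}}^2=\beta/(\beta+w_+(x))$, the orthogonality $\ket u\perp\ker\hat A$, and the observation that $\Pi_0\Pi_x\ket u=0$ are all correct, and you have rightly located the obstacle: a naive use of \lem{gap} produces a $\sqrt{\beta e_-(x)}$ term that destroys the gap.

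However, your proposed fix does not go far enough, and this is a genuine gap. After reducing to $\tfrac1{\sqrt\beta}\norm{\Pi_{(0,\Theta]}\ket{\hat 0}}\leq\tfrac\Theta2\norm{\ket u}+\norm{\Pi_{(0,\Theta]}v}$ you still bound the residual by $\norm{\Pi_{(0,\Theta]}v}\leq\norm v=\sqrt{e_-(x)}$ (this is what produces your $O(\Theta\sqrt{1+\beta\widetilde W_-})$ cross term). Squaring then yields $\norm{\Pi_\Theta\ket{\hat 0}}^2\leq\frac{\beta}{\beta+w_+(x)}+\beta e_-(x)+(\text{cross})+\ldots\geq c\cdot\frac{\beta}{w_+(x)}+\ldots$ with $c\geq2$. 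With $p_0=\frac{\beta}{\beta+W_+}$ and $p_1\geq\frac{c\beta\lambda}{W_+}$, one checks that no choice of $\beta$ makes $p_0>p_1$ once $\lambda\geq 1/c$, so the algorithm breaks for $\lambda$ close to $1$; relatedly, with this cross term you are forced to take $\Theta\sim(1-\lambda)/\sqrt{W_+\widetilde W_-}$ (as you do), which yields $(1-\lambda)^{-2}$, not $(1-\lambda)^{-3/2}$. The missing observation is that $v$ is not a freestanding error term: by \thm{equiv2}, $v=\Pi_{H(x)}(\tilde\omega_x A)^\dagger=\ket{w_x}/w_+(x)$, and since $\sqrt\beta\ket{\hat 0}+\ket{w_x}$ is a $1$-eigenvector, $\Pi_{(0,\Theta]}v=-\tfrac{\sqrt\beta}{w_+(x)}\Pi_{(0,\Theta]}\ket{\hat 0}$ exactly. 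Substituting and solving for $\norm{\Pi_{(0,\Theta]}\ket{\hat 0}}$ gives $\norm{\Pi_{(0,\Theta]}\ket{\hat 0}}\leq\tfrac\Theta2\cdot\tfrac{\sqrt\beta\, w_+(x)}{w_+(x)+\beta}\norm{\ket u}$, hence $\norm{\Pi_\Theta\ket{\hat 0}}^2\leq\tfrac{\beta}{\beta+w_+(x)}+\tfrac{\Theta^2}{4}\beta\bigl(\tfrac1\beta+\tilde w_-(x)\bigr)$ with coefficient exactly $1$ on the leading term. Then $\beta\sim W_+$ gives $p_0=\Omega(1)$ and $p_0-p_1=\Omega(1-\lambda)$ for all $\lambda<1$, and $\Theta\sim\sqrt{(1-\lambda)/(W_+\widetilde W_-)}$ gives the $(1-\lambda)^{-3/2}$ bound. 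This identity is the role played by the square-expansion step in the paper's \lem{pi-theta-pos} (where the cross term $-\tfrac{2}{w_+(x)}\braket{w_0}{w_x}$ exactly cancels a $\tfrac1{w_+(x)}$), and without it the proof does not close.
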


With the ability to distinguish between different witness sizes, we can obtain algorithms for estimating the witness size. 

\begin{theorem}[Witness Size Estimation Algorithm]\label{thm:est-alg}
Fix $f:X\subseteq[q]^n\rightarrow \mathbb{R}_{\geq 0}$. Let $P$ be a span program such that for all $x\in X$, $f(x)=w_+(x,P)$ and define $\widetilde{W}_-=\widetilde{W}_-(f,P)=\max_{x\in X}\tilde{w}_-(x,P)$. There exists a quantum algorithm that estimates $f$ to accuracy $\eps$ in $\tO\(\frac{1}{\eps^{3/2}}\sqrt{w_+(x)\widetilde{W}_-}\)$ queries.
Similarly, let $P$ be a span program such that for all $x\in X$, $f(x)=w_-(x,P)$ and define $\widetilde{W}_+=\widetilde{W}_+(f,P)=\max_{x\in X}\tilde{w}_+(x,P)$. Then there exists a quantum algorithm that estimates $f$ to accuracy $\eps$ in $\tO\(\frac{1}{\eps^{3/2}}\sqrt{w_-(x)\widetilde{W}_+}\)$ queries.
\end{theorem}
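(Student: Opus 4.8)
The plan is to reduce the estimation of $w_+(x)$ to a logarithmic number of \emph{gapped} decision problems, each handed to \thm{approx-alg}, organised into a two‑phase search whose total query cost is dominated by the handful of tests whose threshold is within a constant factor of $w_+(x)$. First, for a threshold $W>0$ and a ratio $1/\lambda>1$ let $f_{W,\lambda}$ be the partial Boolean function, defined on those $x\in X$ with $w_+(x)\le W$ or $w_+(x)\ge W/\lambda$, that equals $1$ exactly when $w_+(x)\le W$. Each $x$ with $w_+(x)\le W$ lies in $P_1$, and $W_+(f_{W,\lambda},P)=\max_{x:\,w_+(x)\le W}w_+(x)\le W$, so every $x$ with $w_+(x)\ge W/\lambda$ satisfies $w_+(x)\ge\tfrac1\lambda W_+(f_{W,\lambda},P)$; hence $P$ positively $\lambda$-approximates $f_{W,\lambda}$, and $\widetilde{W}_-(f_{W,\lambda},P)\le\widetilde{W}_-$. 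By \thm{approx-alg} there is therefore a bounded‑error routine $\mathcal A_{W,\lambda}$ that outputs $1$ on every $x$ with $w_+(x)\le W$ and $0$ on every $x$ with $w_+(x)\ge W/\lambda$ (arbitrarily in between), using $O\!\left(\frac{\sqrt{W\widetilde{W}_-}}{(1-\lambda)^{3/2}}\log\frac1{1-\lambda}\right)$ queries. I will also use the elementary bound $w_+(x)\ge 1/\widetilde{W}_-$, which follows because for an optimal positive witness $\ket{w}$ and any $\omega$ with $\omega(\tau)=1$ we have $1=\omega A\ket{w}\le\norm{\omega A\Pi_{H(x)}}\,\norm{\ket{w}}$, so $\widetilde{W}_-\ge\tilde{w}_-(x)\ge e_-(x)\ge 1/w_+(x)$.

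\emph{Phase 1 (coarse search).} Take $\lambda=\tfrac12$ and run $\mathcal A_{W^{(j)},1/2}$ for $W^{(j)}=2^{j}/\widetilde{W}_-$, $j=0,1,2,\dots$, stopping at the first index $j^\star$ whose output is $1$. Consecutive thresholds differ by a factor $2$, so at most one $W^{(j)}$ can lie in the ambiguous band $(w_+(x)/2,\,w_+(x))$, and a short case analysis (using $w_+(x)\ge W^{(0)}$ for $j^\star=0$) gives $w_+(x)\in[W^{(j^\star)}/2,\,2\,W^{(j^\star)}]$. The per‑test costs $O(\sqrt{W^{(j)}\widetilde{W}_-})$ form a geometric series dominated by its last term, so Phase 1 uses $O(\sqrt{w_+(x)\widetilde{W}_-})$ queries and $j^\star+1=O(\log(w_+(x)\widetilde{W}_-))$ tests; since the gap here is the constant $1-\lambda=\tfrac12$, it contributes no $\eps$‑factor.

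\emph{Phase 2 (fine search).} We now know $w_+(x)\in[L_0,4L_0]$ for $L_0:=W^{(j^\star)}/2$. Put $1/\lambda=1+\eps$ and $V_k=L_0(1+\eps)^k$ for $0\le k\le K$, where $K=\lceil\log_{1+\eps}4\rceil=O(1/\eps)$ so that $V_K\ge 4L_0\ge w_+(x)$. The key observation is that the outputs of $\mathcal A_{V_0,\lambda},\dots,\mathcal A_{V_K,\lambda}$ are monotone in $k$ (once the tests are made reliable): any $V_k\ge w_+(x)$ forces output $1$, any $V_k\le w_+(x)/(1+\eps)$ forces output $0$, and the band $(w_+(x)/(1+\eps),\,w_+(x))$ has multiplicative width exactly equal to the grid spacing $1+\eps$, so it contains at most one $V_k$; whatever that single test returns, the resulting $0^\ast1^\ast$ sequence is monotone. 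Hence a binary search using $O(\log K)=O(\log\frac1\eps)$ tests locates the index $\hat k$ with output $0$ at $\hat k$ and $1$ at $\hat k+1$ (both well defined in the generic case; $\hat k$ failing to exist forces $w_+(x)=V_0$, which is output directly), and returning $\tilde f=V_{\hat k+1}$ gives $|\tilde f-w_+(x)|\le\eps\,w_+(x)$ — check the three cases according to which of $V_{\hat k}$, $V_{\hat k+1}$, or neither is the ambiguous grid point. Each Phase 2 test has $1-\lambda=\eps/(1+\eps)=\Theta(\eps)$ and threshold $V_k=O(w_+(x))$, costing $O\!\left(\frac{\sqrt{w_+(x)\widetilde{W}_-}}{\eps^{3/2}}\log\frac1\eps\right)$; the $O(\log\frac1\eps)$ of them give Phase 2 cost $\tO\!\left(\frac1{\eps^{3/2}}\sqrt{w_+(x)\widetilde{W}_-}\right)$, which dominates Phase 1.

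To make the whole procedure bounded‑error, amplify each of the $O(\log(w_+(x)\widetilde{W}_-)+\log\frac1\eps)$ tests to failure probability $o(1/\#\text{tests})$ at an extra $O(\log\log(\cdot))$ cost and take a union bound; all logarithmic overheads are absorbed into $\tO$. The negative‑witness statement follows by the symmetric construction: apply the negative‑$\lambda$-approximation half of \thm{approx-alg} to the dual gapped problems $g_{W,\lambda}(x)=1\iff w_-(x)\ge W/\lambda$, and start the coarse search from the bound $w_-(x)\ge 1/\widetilde{W}_+$. I expect the main obstacle to be exactly this two‑phase design: a single geometric search with ratio $1+\eps$ would sum a geometric series of per‑test costs with ratio $\sqrt{1+\eps}\approx 1+\eps/2$, paying a factor $\Theta(1/\eps)$ over the last term and hence $\eps^{-5/2}$ overall, so one must localise $w_+(x)$ to within a constant factor using cheap constant‑gap tests (whose costs telescope) before invoking the expensive $\Theta(\eps)$‑gap tests, and only $O(\log\frac1\eps)$ times, near the true value. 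The secondary subtlety — that the gapped tests are only ``almost'' monotone — is precisely what the width‑$(1+\eps)$ ambiguous‑band argument resolves.
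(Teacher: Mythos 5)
Your proof is correct and genuinely different from the paper's. The paper runs a single adaptive search on $e(x)=1/w_+(x)$: starting (after normalization, so $w_+(x)\ge 1$) from the interval $[0,1]$, it shrinks the candidate interval $[e_{\min}^{(i)},e_{\max}^{(i)}]$ by a constant factor $2/3$ at each step via one call to the decision routine with thresholds at the $1/3$ and $2/3$ points of the interval, and stops once the interval has relative width $\eps$. In that scheme the gap $1-\lambda_i$ shrinks geometrically \emph{with} the interval, so the per-iteration cost $\propto (1-\lambda_i)^{-3/2}\sqrt{w_i}$ is itself a geometric series (in $i$) dominated by the last term, which is $\Theta(\sqrt{w_+(x)\widetilde W_-}/\eps^{3/2})$. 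Your scheme instead splits into two phases: a doubling search with the \emph{constant} gap $1-\lambda=1/2$ that localizes $w_+(x)$ to within a factor $4$ at total cost $O(\sqrt{w_+(x)\widetilde W_-})$, followed by a binary search over a $(1+\eps)$-spaced grid inside that window using $O(\log\tfrac1\eps)$ tests, each at the full $\eps^{-3/2}$ cost; you also correctly diagnose why a single-phase $(1+\eps)$-ratio geometric sweep would lose a factor $1/\eps$. The paper's single loop is arguably cleaner; your version makes the $\eps^{-3/2}$ dependence structurally transparent (only $O(\log\tfrac1\eps)$ of the expensive tests are ever run) and, instead of normalizing, uses the clean bound $w_+(x)\ge 1/\widetilde W_-$ (via $e_-(x)=1/w_+(x)$ and $\tilde w_-(x)\ge e_-(x)$) to seed the search. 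One small imprecision worth fixing: if the binary search finds all outputs equal to $1$, what is forced is $w_+(x)\in(V_0,V_0(1+\eps))$, not $w_+(x)=V_0$; returning $V_0$ (or $V_1$) is still within relative error $\eps$, but the stated equality is not literally true. Everything else, including the ``at most one ambiguous grid point so the answer sequence is $0^*1^*$'' argument and the error-amplification bookkeeping, is sound.
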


The algorithms of \thmthm{approx-alg}{est-alg} involve phase estimation of a particular unitary $U$, as with previous span program algorithms, in order to distinguish the 1-eigenspace of $U$ from its other eigenspaces. In general, it may not be feasible to calculate the phase gap of $U$, so for the algorithms of \thmthm{approx-alg}{est-alg}, as with previous algorithms, we use the effective spectral gap lemma to bound the overlap of a particular initial state with eigenspaces of $U$ corresponding to small phases. However, by relating the phase gap of $U$ to the spectrum of $A$ and $A(x):=A\Pi_{H(x)}$, we show how to lower bound the phase gap in some cases, which may give better results. In particular, in our application to effective resistance, it is not difficult to bound the phase gap in this way, which leads to an improved upper bound. In general we have the following theorem.

\begin{theorem}[Witness Size Estimation Algorithm Using Real Phase Gap]\label{thm:est-alg-gap}
Fix $f:X\subseteq [q]^n\rightarrow\mathbb{R}_{\geq 0}$ and let $P=(H,V,\tau,A)$ be a normalized span program (see \defin{normalized}) on $[q]^n$ such that for all $x\in X$, $f(x)=w_+(x,P)$ (resp. $f(x)=w_-(x)$).  
If $\kappa\geq \frac{\sigma_\mathrm{max}(A)}{\sigma_\mathrm{min}(A\Pi_{H(x)})}$ for all $x\in X$, then the quantum query complexity of estimating $f(x)$ to relative accuracy $\eps$ is at most $\tO\(\frac{\sqrt{f(x)}\kappa}{\eps}\)$. 
\end{theorem}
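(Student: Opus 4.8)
The plan is to run the same kind of span-program algorithm that underlies \thm{est-alg} --- phase estimation of (up to sign) a product of two reflections, followed by amplitude estimation --- but to control the precision of the phase estimation using a \emph{genuine} lower bound on the phase gap of the relevant unitary rather than the effective spectral gap lemma (\lem{gap}). The phase-gap bound comes from \cor{szegedy} combined with a linear-algebraic estimate relating the discriminant of that unitary to $\sigma_{\max}(A)$ and $\sigma_{\min}(A\Pi_{H(x)})$; using the true gap is what lets us avoid the extra $1/\sqrt{\eps}$ factor and the dependence on $\widetilde{W}_-$ that \lem{gap} would force, replacing $\sqrt{\widetilde{W}_-/\eps}$ by $\kappa$.

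First I would recall the construction from \sec{alg}: for a normalized span program one builds a unitary $U=U(P,x)$ that is, up to sign, a product of two reflections whose fixed spaces are built from $\ker A$ and $H(x)$ (after the standard augmentation of $H$), together with an initial state $\ket{\phi_0}$, such that $\norm{\Pi_0\ket{\phi_0}}^2$ --- with $\Pi_0$ the projector onto the $1$-eigenspace of $U$ --- is an explicit known function of $w_+(x)$, of order $1/w_+(x)$, whose inversion turns a relative-$\eps$ estimate of $\norm{\Pi_0\ket{\phi_0}}^2$ into a relative-$O(\eps)$ estimate of $f(x)=w_+(x)$. Each application of $U$ costs $O(1)$ input queries, since only the reflection about $H(x)$ reads $x$. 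A preliminary constant-factor estimate of the scale of $w_+(x)$, needed to set parameters, can be obtained by a doubling search or via \thm{approx-alg} at only logarithmic overhead, which is absorbed into $\tO(\cdot)$.

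The new ingredient is the bound $\Delta(U)=\Omega(1/\kappa)$. Since $U$ is, up to sign, a product of two reflections with fixed spaces built from $\ker A$ and $H(x)$, \cor{szegedy} gives $\Delta(U)\geq 2\sigma_{\min}(D)$ for the corresponding discriminant $D$, so it suffices to prove $\sigma_{\min}(D)=\Omega\(\sigma_{\min}(A\Pi_{H(x)})/\sigma_{\max}(A)\)=\Omega(1/\kappa)$; equivalently, that the largest principal angle between the two subspaces is bounded away from $\pi/2$ by $\Omega(1/\kappa)$. The estimate is obtained by pushing a unit vector $\ket{v}$ from one subspace through $A$: its image has norm at most $\sigma_{\max}(A)$, while the component of $\ket{v}$ that lies in $H(x)$ but is orthogonal to the other subspace is detected by $A\Pi_{H(x)}$ with norm at least $\sigma_{\min}(A\Pi_{H(x)})$ times that component's own norm; comparing the two forces the offending component to be small, which is exactly the principal-angle bound. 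I expect this step --- pinning down the angle estimate with the correct constant and correctly accounting for the augmentation and normalization of $P$ --- to be the main obstacle; it is presumably isolated as a structural lemma in \sec{span-structure}.

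Given $\Delta(U)=\Omega(1/\kappa)$, the rest is bookkeeping. Apply \thm{phase-est} with $\Theta=\Delta(U)$ and error parameter polynomially small in $\eps$: this costs $O\(\kappa\log\frac{1}{\eps}\)$ controlled calls to $U$ per run and, since $U$ has no eigenphase in $(0,\Theta]$, cleanly produces (up to negligible error) the state $\ket{0}\Pi_0\ket{\phi_0}+\ket{1}(I-\Pi_0)\ket{\phi_0}$. Then apply \thm{amp-est} to estimate $p=\norm{\Pi_0\ket{\phi_0}}^2$, which is of order $1/w_+(x)$, to relative accuracy $\eps$, using $O\(1/(\eps\sqrt{p})\)=O\(\sqrt{w_+(x)}/\eps\)$ calls to the phase-estimation routine, for a total of $O\(\kappa\sqrt{w_+(x)}\,\eps^{-1}\log\frac{1}{\eps}\)=\tO\(\sqrt{f(x)}\kappa/\eps\)$ queries. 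The negative-witness case $f(x)=w_-(x)$ follows by the same argument applied to the dual situation --- interchanging the positive and negative sides, using $w_-(x)=1/e_+(x)$, $w_+(x)=1/e_-(x)$, and the complementary reflection structure --- with $\sigma_{\min}(A\Pi_{H(x)})$ again controlling the phase gap.
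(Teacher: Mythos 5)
Your high-level plan coincides with the paper's: bound the true phase gap of the product-of-reflections unitary via \cor{szegedy} applied to its discriminant, show $\sigma_{\min}$ of that discriminant is at least $\sigma_{\min}(A(x))/\sigma_{\max}(A)$, and then run phase estimation at that precision followed by amplitude estimation (with a doubling/coarse estimate to remove $\log W$ factors). This is exactly what the paper does in \thm{gap-improvement}, \thm{kappa}, \thm{kappa-pos}, and \thm{est-alg-gap-N}.

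However, there is a genuine gap in the part you flagged as ``the main obstacle,'' and it concerns precisely the positive-witness case $f(x) = w_+(x)$, which is the primary case in the statement. You describe the unitary as having ``fixed spaces built from $\ker A$ and $H(x)$'' and claim the $0$-phase overlap equals $1/w_+(x)$. But the unitary with reflection about $\ker A$ is $U(P,x)$, whose $0$-phase overlap with $\ket{w_0}$ is $1/w_-(x)$ (\lem{pi-0-neg}), not $1/w_+(x)$. To estimate $w_+(x)$ one must use $U'(P,x) = (2\Pi_{H(x)} - I)(2\Pi_T - I)$ with $T = \ker A \oplus \mathrm{span}\{\ket{w_0}\}$ (\lem{pi-0-pos}). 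The discriminant of $-U'^\dagger$ is not $D = A^+A(x)$ but $D' = \bigl(\Pi_{(\ker A)^\perp} - \Pi_{\ket{w_0}}\bigr)\Pi_{H(x)} = \Pi_{\ket{w_0}^\perp} D$. Your ``push a unit vector through $A$'' estimate handles $\sigma_{\min}(D)$ only; to conclude you also need $\sigma_{\min}(D') \geq \sigma_{\min}(D)$, which the paper proves in \thm{kappa-pos} using the hypothesis $x\in P_1$ to show $\ket{w_0}\in\mathrm{col}\,D$, so that projecting out $\ket{w_0}$ drops exactly one singular direction and cannot decrease the minimal nonzero singular value. Without some version of this step your bound $\Delta(U') = \Omega(1/\kappa)$ is not established, and the positive case does not go through. (For the negative case, $U(P,x)$ has discriminant exactly $A^+A(x)$ and your argument is complete.)

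One smaller point: your $\sigma_{\max}(A)$ estimate is fine and is equivalent to the paper's use of $\sigma_{\min}(A^+) = 1/\sigma_{\max}(A)$, but the cleanest form is: for unit $\ket{v}\in\mathrm{row}\,D \subseteq \mathrm{row}\,A(x)$, since $A(x)\ket{v}\in\mathrm{col}\,A$ one has $A D\ket{v} = A A^+ A(x)\ket{v} = A(x)\ket{v}$, hence $\sigma_{\max}(A)\,\norm{D\ket{v}} \geq \norm{A(x)\ket{v}} \geq \sigma_{\min}(A(x))$. Your ``component orthogonal to the other subspace'' phrasing obscures this and is not obviously the same inequality.
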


\noindent\thm{approx-alg} is proven in \sec{alg-approx}, and  \thm{est-alg} is proven in \sec{alg-est}, and \thm{est-alg-gap} is proven in \sec{alg-gap}.

\subsection{Example}\label{sec:example}

To illustrate how these ideas might be useful, we will give a brief example of how a span program that leads to an algorithm for the OR function can be combined with our results to additionally give algorithms for threshold functions and approximate counting. We define a span program $P$ on $\{0,1\}^n$ as follows:
$$V=\mathbb{R},\qquad \tau=1,\qquad H_i=H_{i,1}=\mathrm{span}\{\ket{i}\},\qquad H_{i,0}=\{0\},\qquad A=\sum_{i=1}^n\bra{i}.$$
So we have $H=\mathrm{span}\{\ket{i}:i\in [n]\}$ and $H(x)=\mathrm{span}\{\ket{i}:x_i=1\}$. It's not difficult to see that $P$ decides OR. In particular, we can see that the optimal positive witness for any $x$ such that $|x|>0$ is $\ket{w_x}=\sum_{i:x_i=1}\frac{1}{|x|}\ket{i}$. The only linear function $\omega:\mathbb{R}\rightarrow\mathbb{R}$ that maps $\tau$ to $1$ is the identity, and indeed, this is a negative witness for the string $\bar{0}=0\dots 0$, since $H(\bar{0})=\{0\}$, and so $\omega A\Pi_{H(\bar{0})}=0$. 

Let $\lambda\in (0,1)$, $t\in [n]$, and let $f$ be a threshold function defined by $f(x)=1$ if $|x|\geq t$ and $f(x)=0$ if $|x|\leq \lambda t$, with the promise that one of these conditions holds. Note that if $f(x)=1$, then $w_+(x)=\norm{\ket{w_x}}^2=\frac{1}{|x|}\leq \frac{1}{t}$, so $W_+(f,P)=\frac{1}{t}$. On the other hand, if $f(x)=0$, then $w_+(x)=\frac{1}{|x|}\geq \frac{1}{\lambda t}=\frac{1}{\lambda}W_+(f,P)$, so $P$ positively $\lambda$-approximates $f$. The only approximate negative witness is $\omega$ the identity, so we have $\widetilde{W}_-=\norm{\omega A}^2=\norm{A}^2=n$. 
By \thm{approx-alg}, there is a quantum algorithm for $f$ with query complexity $\frac{1}{(1-\lambda)^{3/2}}\sqrt{W_+\widetilde{W}_-}=\frac{1}{(1-\lambda)^{3/2}}\sqrt{n/t}$.

Furthermore, since $w_+(x)=\frac{1}{|x|}$, by \thm{est-alg}, we can estimate $\frac{1}{|x|}$ to relative accuracy $\eps$, and therefore we can estimate $|x|$ to relative accuracy $2\eps$, in quantum query complexity $\frac{1}{\eps^{3/2}}\sqrt{n/|x|}$. 

These upper bounds do not have optimal scaling in $\eps$, as the actual quantum query complexities of these problems are $\frac{1}{1-\lambda}\sqrt{n/t}$ and $\frac{1}{\eps}\sqrt{n/|x|}$ \cite{BBBV97,BHMT02,BBCMdW01}, however, using \thm{est-alg-gap}, the optimal query complexities can be recovered.

\subsection{Span Program Structure and Scaling}\label{sec:span-structure}

In this section, we present some observations about the structure of span programs that will be useful in the design and analysis of our algorithms, and for general intuition. We begin by formally stating and proving \thm{equiv1} and \thm{equiv2}, relating error to witness size.

\begin{theorem}\label{thm:equiv1}
Let $P$ be a span program on $[q]^n$ and $x\in P_0$. If $\ket{\tilde w}$ is an optimal min-error positive witness for $x$, and $\omega$ is an optimal exact negative witness for $x$, then 
$$(\omega A)^\dagger = \frac{\Pi_{H(x)^\bot}\ket{\tilde w}}{\norm{\Pi_{H(x)^\bot}\ket{\tilde w}}^2},\qquad\qquad
\mbox{and so}\qquad\qquad 
w_-(x)=\frac{1}{e_+(x)}.$$
\end{theorem}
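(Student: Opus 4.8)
The plan is to set up a two-sided optimization and show that the optimal positive-error witness and the optimal negative witness are related by duality/orthogonality. First I would unwind the definitions. Let $\ket{\tilde w}$ be an optimal min-error positive witness for $x$: $A\ket{\tilde w}=\tau$, $\norm{\Pi_{H(x)^\bot}\ket{\tilde w}}^2=e_+(x)$, and among such witnesses $\norm{\ket{\tilde w}}^2$ is minimal. Write $\ket{u}=\Pi_{H(x)^\bot}\ket{\tilde w}$, the ``error component.'' A negative witness $\omega$ for $x$ satisfies $\omega A\Pi_{H(x)}=0$ and $\omega\tau=1$; identify $\omega$ with the vector $(\omega A)^\dagger\in H$, which lies in $\mathrm{row}(A)$ and, by the first condition, in $H(x)^\bot$ (more precisely, it is orthogonal to $H(x)\cap\mathrm{row}A$, but since the constraint is $\omega A\Pi_{H(x)}=0$, $(\omega A)^\dagger\perp H(x)$ suffices for what we need after a projection argument). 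The key identity to establish is that, for the \emph{optimal} choices, $(\omega A)^\dagger$ is exactly $\ket{u}/\norm{\ket{u}}^2$.

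The main steps: (1) \textbf{The error component lies in a constrained space.} Using the optimality of $\ket{\tilde w}$ as a min-\emph{norm} min-error witness, argue via a first-order (Lagrange-multiplier / variational) condition that $\ket{\tilde w}$ has no component in $H(x)\cap\ker A$ and that $\ket{u}=\Pi_{H(x)^\bot}\ket{\tilde w}$ is the \emph{minimum-norm} vector in $H(x)^\bot$ whose image under $A$, when combined with some vector in $H(x)$, equals $\tau$; equivalently $\ket{u}$ is characterized by a projection/pseudoinverse formula. (2) \textbf{$\ket{u}$ gives a negative witness.} Show $\ket{u}/\norm{\ket{u}}^2$, viewed as a functional $v\mapsto \langle u\vert A^{+\dagger}\,\cdot\rangle$ appropriately, satisfies the two negative-witness constraints: orthogonality to $A\,H(x)$ follows because $\ket{u}\perp H(x)$ and from the optimality condition (the part of $A\ket{\tilde w}$ coming from $H(x)$ cannot be ``cancelled'' into the error direction without increasing error), and $\omega\tau=1$ follows by computing $\langle u\vert \tilde w\rangle = \langle u\vert u\rangle$ (since $\ket{u}=\Pi_{H(x)^\bot}\ket{\tilde w}$) and noting $A\ket{\tilde w}=\tau$. (3) \textbf{Matching norms / optimality on both sides.} Conclude $w_-(x)\le \norm{\omega A}^2 = 1/\norm{\ket{u}}^2 = 1/e_+(x)$, and then prove the reverse inequality $e_+(x)\le 1/w_-(x)$ by running the same construction backwards: from an optimal negative witness $\omega$, the vector $(\omega A)^\dagger/\norm{\omega A}^2$ can be completed (by adding an appropriate vector in $H(x)$ solving the linear system on $\mathrm{col}A$) to a positive witness with error at most $1/\norm{\omega A}^2$. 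The two inequalities force equality, and tracing when equality holds pins down the stated formula $(\omega A)^\dagger = \ket{u}/\norm{\ket{u}}^2$.

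The hard part will be step (1)–(2): carefully justifying, via the optimality of $\ket{\tilde w}$ among min-error witnesses, that the error component $\ket{u}$ is \emph{orthogonal} to the subspace $A^{-1}(0)\cap\dots$ in the right way so that $\ket{u}$ really does annihilate $A\Pi_{H(x)}$ as a functional — i.e. turning a norm-minimality statement into an orthogonality (complementary-slackness) statement. I expect this is cleanest via the pseudo-inverse: express $\ket{\tilde w}$ using $A^+$ restricted to the affine space $\{\ket{w}:A\ket{w}=\tau\}$ intersected with the min-error condition, and use that $A^+A$ projects onto $\mathrm{row}A$ while orthogonal complements behave well under these projections. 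Everything else (the norm computations $\langle u\vert\tilde w\rangle=\norm{\ket u}^2$, the backward construction, and the final squeeze) is routine linear algebra once the variational characterization is in place.
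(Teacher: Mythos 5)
Your overall plan is in the right spirit and tracks the paper's proof in outline: identify the error component $\ket{w_{\mathrm{err}}}=\Pi_{H(x)^\bot}\ket{\tilde w}$, show it lives in $\mathrm{row}\,A=(\ker A)^\bot$ so it can be realized as $(\omega' A)^\dagger$ for some functional $\omega'$, check $\omega'$ is a valid negative witness after rescaling, and then pin down equality. You also correctly flag the delicate step: turning minimality of $\norm{\Pi_{H(x)^\bot}\ket{\tilde w}}$ into an orthogonality statement. The paper handles this with a very short concrete argument that you should know: for any positive witness $\ket{\tilde w}$, one can subtract $\Pi_{\ker A}\ket{w_{\mathrm{err}}}$ without changing $A\ket{\tilde w}=\tau$, and this can only decrease (weakly) the error norm; minimality then forces $\norm{\ket{w_{\mathrm{err}}}}\le\norm{\Pi_{(\ker A)^\bot}\ket{w_{\mathrm{err}}}}$, hence $\ket{w_{\mathrm{err}}}\in(\ker A)^\bot$. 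No Lagrange machinery is needed, only this one perturbation.

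Where your proposal has a genuine gap is step (3), the reverse inequality. You propose to take an optimal negative witness $\omega$, form $\ket{v}=(\omega A)^\dagger/\norm{\omega A}^2\in H(x)^\bot$, and ``complete'' it to a positive witness by adding some $\ket h\in H(x)$ with $A(\ket v+\ket h)=\tau$. That requires $\tau-A\ket v\in A\,H(x)$, which is \emph{not} guaranteed: the only thing you know is $\omega(\tau-A\ket v)=0$, i.e.\ $\tau-A\ket v\in\ker\omega$, and $A\,H(x)\subseteq\ker\omega$ may be a strict inclusion. So the completion need not exist, and this direction as written does not go through. There are two clean fixes, both cheaper than what you sketched. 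One is Cauchy--Schwarz: for any feasible $\omega$ with $\ket a:=(\omega A)^\dagger$, one has $\bra a\ket{\tilde w}=\omega\tau=1$ and, since $\ket a\perp H(x)$, also $\bra a\ket{w_{\mathrm{err}}}=1$; then $1\le\norm{\ket a}\,\norm{\ket{w_{\mathrm{err}}}}$ gives $w_-(x)\ge 1/e_+(x)$ and the equality case gives the stated formula for $(\omega A)^\dagger$. The other is what the paper actually does: it bypasses the two-sided squeeze entirely by writing $(\omega A)^\dagger = \ket{w_{\mathrm{err}}}/e_+(x) + \ket u$ with $\ket u\perp\ket{w_{\mathrm{err}}}$, observing that the functional $\omega'$ with $(\omega' A)^\dagger=\ket{w_{\mathrm{err}}}/e_+(x)$ is already a feasible negative witness, and then using optimality of $\omega$ ($\norm{\omega A}\le\norm{\omega' A}$) together with orthogonality to force $\ket u=0$. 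Either route closes the hole; the completion-by-$H(x)$ idea does not.
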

\begin{proof}
Let $\ket{\tilde w}$ be an optimal min-error positive witness for $x$, and $\omega$ an optimal zero-error negative witness for $x$. 
We have
$(\omega A)\ket{\tilde w}=\omega\tau=1$
and furthermore, since $\omega A\Pi_{H(x)}=0$, we have $(\omega A)\Pi_{H(x)^\bot}\ket{\tilde w}=1$. Thus, write $(\omega A)^\dagger = \frac{\Pi_{H(x)^\bot}\ket{\tilde w}}{\norm{\Pi_{H(x)^\bot}\ket{\tilde w}}^2}+\ket{u}$ such that $\bra{u}\Pi_{H(x)^\bot}\ket{\tilde w}=0$. 
Define $\ket{w_{\text{err}}}=\Pi_{H(x)^\bot}\ket{\tilde w}$. We have 
$A(\ket{\tilde w}-\Pi_{\ker A}\ket{w_{\text{err}}})=A\ket{\tilde w}=\tau,$
so by assumption that $\ket{\tilde w}$ has minimal error, 
$$\norm{\Pi_{H(x)^\bot}\ket{\tilde w}}\leq \norm{\Pi_{H(x)^\bot}(\ket{\tilde w}-\Pi_{\ker A}\ket{w_{\text{err}}})} \leq \norm{\Pi_{H(x)^\bot}\ket{\tilde w}-\Pi_{\ker A}\ket{w_{\text{err}}}}=\norm{\Pi_{(\ker A)^\bot}\ket{w_{\text{err}}}},$$
so $\norm{\ket{w_{\mathrm{err}}}}\leq \norm{\Pi_{(\ker A)^\bot}\ket{w_{\mathrm{err}}}}$, and so we must have $\ket{w_{\text{err}}}\in (\ker A)^\bot$. Thus, $\ker \bra{w_{\text{err}}}\subseteq \ker A$, so by the fundamental homomorphism theorem, there exists a linear function $\bar{\omega}:\mathrm{col}A\rightarrow \mathbb{R}$ such that $\bar{\omega}A=\bra{w_\text{err}}$. Furthermore, we have 
$\bar{\omega}\tau=\bar{\omega}A\ket{\tilde w}=\bra{\tilde w}\Pi_{H(x)^\bot}\ket{\tilde w}=\norm{\Pi_{H(x)^\bot}\ket{\tilde w}}^2=e_+(x),$
so $\omega'=\frac{\bar\omega}{e_+(x)}$ has $\omega'\tau=1$. 
By the optimality of $\omega$, we must have $\norm{\omega A}^2\leq \norm{\omega' A}^2$, so
\begin{eqnarray*}
\norm{\frac{\Pi_{H(x)^\bot}\ket{\tilde w}}{e_+(x)}+\ket{u}}^2 &\leq & \norm{\frac{\Pi_{H(x)^\bot}\ket{\tilde w}}{e_+(x)}}^2
\end{eqnarray*}
and so $\ket{u}=0$. Thus $(\omega A)^\dagger = \frac{\Pi_{H(x)^\bot}\ket{\tilde w}}{e_+(x)}$ and 
$w_-(x)=\norm{\omega A}^2 = \frac{\norm{\Pi_{H(x)^\bot}\ket{\tilde w}}^2}{e_+(x)^2}=\frac{1}{e_+(x)}.$
\end{proof}

\begin{theorem}\label{thm:equiv2}
Let $P$ be a span program on $[q]^n$ and $x\in P_1$. If $\ket{w}$ is an optimal exact positive witness for $x$, and $\tilde\omega$ is an optimal min-error negative witness for $x$, then 
$$\ket{w}=\frac{\Pi_{H(x)}(\tilde\omega A)^\dagger}{\norm{\tilde\omega A\Pi_{H(x)}}^2}    \qquad\qquad\mbox{and so}\qquad\qquad     w_+(x)=\frac{1}{e_-(x)}.$$
\end{theorem}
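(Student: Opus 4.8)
The plan is to mirror the proof of \thm{equiv1} with the roles of positive and negative witnesses exchanged, exploiting the symmetry between the two definitions. First I would start from an optimal min-error negative witness $\tilde\omega$ for $x\in P_1$ and an optimal exact positive witness $\ket{w}$. Since $\tilde\omega(\tau)=1$ and $A\ket{w}=\tau$ with $\ket{w}\in H(x)$, we get $(\tilde\omega A)\ket{w}=\tilde\omega(\tau)=1$, and because $\ket{w}\in H(x)$ this equals $(\tilde\omega A)\Pi_{H(x)}\ket{w}=1$. Writing $\ket{v}=\Pi_{H(x)}(\tilde\omega A)^\dagger$, this says $\braket{v}{w}=1$, which already tells us $\ket{w}$ has a component along $\ket{v}/\norm{\ket{v}}^2$; the goal is to show there is no orthogonal part, i.e. $\ket{w}=\ket{v}/\norm{\ket{v}}^2$, and $\norm{\ket{v}}^2=e_-(x)$.

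The key steps, in order: (1) Decompose $\ket{w}=\frac{\ket{v}}{\norm{\ket{v}}^2}+\ket{u}$ with $\braket{u}{v}=0$. (2) Show the ``error vector'' $\tilde\omega A\Pi_{H(x)}$, restricted appropriately, lies in $\mathrm{col}A$ (equivalently, $\Pi_{H(x)}(\tilde\omega A)^\dagger$ can be improved to have no component in $\ker A$) — this is the analogue of the step in \thm{equiv1} showing $\ket{w_{\mathrm{err}}}\in(\ker A)^\bot$, and it uses minimality of $e_-(x)$: if $\tilde\omega A\Pi_{H(x)}$ had a component orthogonal to $\mathrm{row}A$ one could shrink it without changing $\tilde\omega(\tau)$. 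Concretely, I would argue that replacing $\tilde\omega$ by a witness whose associated vector $\tilde\omega A$ is unchanged but whose error term is projected onto $\mathrm{row}(A\Pi_{H(x)})$ cannot increase the error, forcing $\tilde\omega A\Pi_{H(x)}\in \mathrm{row}(A\Pi_{H(x)})=\mathrm{col}((A\Pi_{H(x)})^\dagger)$. (3) Build from $\Pi_{H(x)}(\tilde\omega A)^\dagger$ a \emph{positive} witness: since $A\big(\frac{\Pi_{H(x)}(\tilde\omega A)^\dagger}{\norm{\tilde\omega A\Pi_{H(x)}}^2}\big)$ lands in $\mathrm{col}A$, check that applying $A$ gives $\tau$ up to the right scalar — here one uses $\braket{v}{w}=1$ together with $A\ket{w}=\tau$ and the fact that $(\tilde\omega A)$ annihilates $H(x)^\perp$-components in the relevant inner products — so that a suitably normalized version is an exact positive witness in $H(x)$. (4) Invoke optimality of $\ket{w}$: $\norm{\ket{w}}^2\le \big\|\frac{\ket{v}}{\norm{\ket{v}}^2}\big\|^2 = \frac{1}{\norm{\ket{v}}^2}$, while from the decomposition $\norm{\ket{w}}^2 = \frac{1}{\norm{\ket{v}}^2}+\norm{\ket{u}}^2$, forcing $\ket{u}=0$. (5) Finally, $\norm{\ket{v}}^2=\norm{\tilde\omega A\Pi_{H(x)}}^2=e_-(x)$ gives $w_+(x)=\norm{\ket{w}}^2=\frac{1}{\norm{\ket{v}}^2}=\frac{1}{e_-(x)}$.

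The main obstacle I anticipate is step (3)/(4) — the careful bookkeeping showing that $\Pi_{H(x)}(\tilde\omega A)^\dagger$, after normalization, is genuinely an \emph{exact} positive witness, i.e. that $A$ maps it exactly to $\tau$ and that it lies in $H(x)$ (the latter is immediate from the projector $\Pi_{H(x)}$, but the former requires knowing that $\tilde\omega A\Pi_{H(x)}$ has no component outside $\mathrm{row}A$, which is exactly the minimality argument). Equivalently, one must rule out that the optimal min-error negative witness ``wastes'' part of its error budget outside the column space of $A$, just as \thm{equiv1} ruled out wasting the positive witness's error outside $(\ker A)^\perp$. Once that is established, the optimality/Pythagoras argument in step (4) is routine. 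An alternative, possibly cleaner route would be to derive \thm{equiv2} from \thm{equiv1} by a duality/transpose argument — associating to $P$ a ``dual'' span program in which positive and negative witnesses swap roles — but making that correspondence precise for arbitrary $V$ and $\tau$ may be more trouble than simply repeating the argument, so I would carry out the direct proof.
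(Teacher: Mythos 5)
Your overall architecture is right: define $\ket{v}=\Pi_{H(x)}(\tilde\omega A)^\dagger$ as the candidate witness, verify it is an exact positive witness for $x$, then use optimality of $\ket{w}$ plus Pythagoras to conclude $\ket{w}=\ket{v}/\norm{\ket{v}}^2$ and read off $w_+(x)=1/e_-(x)$. Your steps (1), (4), (5) match the paper's proof in substance (the paper phrases (4) via Cauchy--Schwarz applied to $\braket{w'}{w}=\norm{\ket{w'}}^2$, equivalent to your orthogonal decomposition). You also correctly identify the crux: showing $A\ket{v}\in\mathrm{span}\{\tau\}$.

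But the mechanism you give for that crux rests on a false claim. You assert that minimality of $\tilde\omega$ forces $\ket{v}$ to lie in $(\ker A)^\bot=\mathrm{row}\,A$, mirroring the step in \thm{equiv1} where $\ket{w_{\mathrm{err}}}\in(\ker A)^\bot$. This is not true in general: in the OR span program of \sec{example} with $n>2$ and $|x|=2$, one has $(\tilde\omega A)^\dagger=\sum_i\ket{i}\in\mathrm{row}\,A$, but $\ket{v}=\Pi_{H(x)}(\tilde\omega A)^\dagger=\sum_{i:x_i=1}\ket{i}$ is \emph{not} in $\mathrm{row}\,A=\mathrm{span}\{\sum_i\ket{i}\}$ --- yet the theorem holds there, with $A\ket{v}=|x|\tau$. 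The correct dual of ``$\ket{w_{\mathrm{err}}}\perp\ker A$'' is ``$A\ket{v}\in\mathrm{span}\{\tau\}$'', not ``$\ket{v}\perp\ker A$''. Your fallback of ``replacing $\tilde\omega$ by a witness whose associated vector $\tilde\omega A$ is unchanged but whose error term is projected'' is also incoherent, since $\tilde\omega A\Pi_{H(x)}$ is determined by $\tilde\omega A$ and cannot be altered independently. The paper closes this gap with a variational argument your proposal does not contain: if $A\ket{v}$ and $\tau$ were linearly independent, choose $\alpha\in\mathcal{L}(V,\mathbb{R})$ with $\alpha(A\ket{v})=0$ and $\alpha(\tau)=1$; then for each $\eps\in[0,1]$ the functional $\eps\tilde\omega+(1-\eps)\alpha$ is feasible, and because $\alpha A\Pi_{H(x)}\perp\tilde\omega A\Pi_{H(x)}$ (this orthogonality is precisely $\alpha(A\ket{v})=0$), its squared error equals $\eps^2\norm{\tilde\omega A\Pi_{H(x)}}^2+(1-\eps)^2\norm{\alpha A\Pi_{H(x)}}^2$, which falls strictly below $\norm{\tilde\omega A\Pi_{H(x)}}^2$ as $\eps\to 1$, contradicting optimality of $\tilde\omega$. (Equivalently, the first-order condition for minimizing $\norm{\omega A\Pi_{H(x)}}^2$ subject to $\omega(\tau)=1$ says that $A\Pi_{H(x)}(\tilde\omega A)^\dagger=A\ket{v}$ is annihilated by every $\delta$ with $\delta(\tau)=0$, hence lies in $\mathrm{span}\{\tau\}$.) That is the idea your proposal is missing.
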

\begin{proof}
Let $\tilde\omega$ be an optimal min-error negative witness for $x$, and define $\ket{w'}=\frac{\Pi_{H(x)}(\tilde{\omega}A)^\dagger}{\norm{\tilde{\omega}A\Pi_{H(x)}}^2}$. First note that $\ket{w'}\in H(x)$. We will show that $\ket{w'}$ is a positive witness for $x$ by showing $A\ket{w'}=\tau$. Suppose $\tau$ and $A\ket{w'}$ are linearly independent, and let $\alpha\in \mathcal{L}(V,\mathbb{R})$ be such that $\alpha(A\ket{w'})=0$ and $\alpha(\tau)=1$. Then for any $\eps\in [0,1]$, we have $(\eps\tilde\omega+(1-\eps)\alpha)\tau=1$, so by optimality of $\tilde\omega$, 
\begin{eqnarray*}
\norm{\tilde\omega A\Pi_{H(x)}}^2 &\leq & \norm{(\eps\tilde\omega+(1-\eps)\alpha) A\Pi_{H(x)}}^2\\
&=&\eps^2 \norm{{\tilde\omega A}\Pi_{H(x)}}^2+(1-\eps)^2\norm{{\alpha A}\Pi_{H(x)}}^2 \mbox{ since $\alpha(A\Pi_{H(x)}(\tilde\omega A)^\dagger)=0$}\\
(1-\eps^2)\norm{\tilde\omega A\Pi_{H(x)}}^2 &\leq & (1-\eps)^2\norm{{\alpha A}\Pi_{H(x)}}^2.
\end{eqnarray*}
This implies $\norm{\tilde\omega A\Pi_{H(x)}}\leq 0$, 
a contradiction, since $\norm{\tilde\omega A\Pi_{H(x)}}>0$. Thus, we must have $A\ket{w'}=r{\tau}$ for some scalar $r$, so $\tilde\omega(A\ket{w'})=r\tilde\omega({\tau})$. We then have 
$\tilde\omega(A\ket{w'}) = {\tilde\omega A}\frac{\Pi_{H(x)}({\tilde\omega A})^\dagger}{\norm{\tilde\omega A\Pi_{H(x)}}^2}=1,$
and so we have $r=1$, and thus $A\ket{w'}={\tau}$. 
So $\ket{w'}$ is a positive witness for $x$. Let $\ket{w}\in H(x)$ be an optimal positive witness for $x$, so  $\norm{\ket{w}}^2=w_+(x)$. We have
$$\braket{w'}{w}=\frac{\tilde\omega A\Pi_{H(x)}\ket{w}}{\norm{\tilde\omega A\Pi_{H(x)}}^2}=\frac{\tilde\omega\tau}{\norm{\tilde\omega A\Pi_{H(x)}}^2}=\frac{1}{\norm{\tilde\omega A\Pi_{H(x)}}^2}=\norm{\ket{w'}}^2.$$
Thus $\norm{\ket{w'}}^2\leq \norm{\ket{w'}}\norm{\ket{w}}$ by the Cauchy-Schwarz inequality, so since $\ket{w}$ is optimal, we must have $\norm{\ket{w}}=\norm{\ket{w'}}$. Since the the smallest $\ket{w}$ such that $A\Pi_{H(x)}\ket{w}=\tau$ is uniquely defined as $(A\Pi_{H(x)})^+\tau$, we have $\ket{w}=\ket{w'}$. Thus $w_+(x)=\norm{\ket{w}}^2=\norm{\ket{w'}}^2=\frac{1}{\norm{\tilde\omega A\Pi_{H(x)}}^2}=\frac{1}{e_-(x)}$. 
\end{proof}

\paragraph{Positive Witnesses} Fix a span program $P=(H,V,\tau,A)$ on $[q]^n$. In general, a positive witness is any $\ket{w}\in H$ such that $A\ket{w}={\tau}$. Assume the set of all such vectors is non-empty, and let $\ket{w}$ be any vector in $H$ such that $A\ket{w}={\tau}$. Then the set of positive witnesses is exactly
$$W:= \ket{w}+\ker A=\{\ket{w}+\ket{h}:\ket{h}\in \ker A\}.$$
It is well known, and a simple exercise to prove, that the unique shortest vector in $W$ is $A^+\tau$, and it is the unique vector in $W\cap (\ker A)^\bot$. We can therefore talk about the unique smallest positive witness, whenever $W$ is non-empty.

\begin{definition}\label{def:normalized}
Fix a span program $P$, and suppose $W=\{\ket{h}\in H:A\ket{h}=\tau\}$ is non-empty. We define the \emph{minimal positive witness of $P$} to be $\ket{w_0}\in W$ with smallest norm --- that is, $\ket{w_0}=A^+\tau$. We define $N_+(P):= \norm{\ket{w_0}}^2$.  
\end{definition}
Since $\ket{w_0}\in(\ker A)^\bot$, we can write any positive witness $\ket{w}$ as $\ket{w_0}+\ket{w_{0}^\bot}$ for some $\ket{w_{0}^{\bot}}\in \ker A$. If we let $T=A^{-1}(\tau)$, then we can write $T=\mathrm{span}\{\ket{w_0}\}\oplus \ker A$.

\paragraph{Negative Witnesses} Just as we can talk about a minimal positive witness, we can also talk about a minimal negative witness of $P$: any $\omega_0\in\mathcal{L}(V,\mathbb{R})$ such that $\omega_0(\tau)=1$, that minimizes $\norm{\omega_0 A}$. We define $N_-(P)=\min_{\omega_0:\omega_0(\tau)=1}\norm{\omega_0A}^2$. Note that unlike $\ket{w_0}$, $\omega_0$ might not be unique. There may be distinct $\omega_0,\omega_0'\in\mathcal{L}(V,\mathbb{R})$ that map $\tau$ to $1$ and have minimal complexity, however, one can easily show that in that case, $\omega_0A=\omega_0'A$, and that the unique globally optimal negative witness in $\mathrm{col}A$ is $\frac{\bra{\tau}}{\norm{\tau}^2}$.

For any minimal negative witness, $\omega_0$, $\omega_0A$ is conveniently related to the minimal positive witness $\ket{w_0}$ by $(\omega_0 A)^\dagger = \frac{\ket{w_0}}{N_+(P)}$, and $N_+(P)=\frac{1}{N_-(P)}$. (We leave this as an exercise, since it is straightforward to prove, and not needed for our results).

\paragraph{Span Program Scaling and Normalization} By scaling $\tau$ to get a new target $\tau'=B\tau$, we can scale a span program by an arbitrary positive real number $B$, so that all positive witnesses are scaled by $B$, and all negative witnesses are scaled by $\frac{1}{B}$. Note that this leaves $W_+W_-$ unchanged, so we can in some sense consider the span program invariant under this scaling. 

\begin{definition}
A span program $P$ is \emph{normalized} if $N_+(P)=N_-(P)=1$.
\end{definition}

Any span program can be converted to a normalized span program by replacing the target with $\tau'=\frac{\tau}{N_+}$. However, it will turn out to be desirable to normalize a span program, and also scale it, independently. We can accomplish this to some degree, as shown by the following theorem.

\begin{theorem}[Span program scaling]\label{thm:scaling}
Let $P=(H,V,\tau,A)$ be any span program on $[q]^n$, and let $N=\norm{\ket{w_0}}^2$ for $\ket{w_0}$ the minimal positive witness of $P$. For $\beta\in\mathbb{R}_{>0}$, define $P^\beta=(H^{\beta},V^{\beta},\tau^{\beta},A^{\beta})$ as follows, for $\ket{\hat{0}}$ and $\ket{\hat{1}}$ two vectors orthogonal to $H$ and $V$:
$$
\forall j\in [n], a\in [q], H_{j,a}^{\beta}:=H_{j,a},\quad
H^{\beta}_{\mathrm{true}}=H_{\mathrm{true}}\oplus \mathrm{span}\{\ket{\hat 1}\},\quad
H^{\beta}_{\mathrm{false}}=H_{\mathrm{false}}\oplus \mathrm{span}\{\ket{\hat 0}\}$$
$$V^{\beta}=V\oplus\mathrm{span}\{\ket{\hat 1}\},\quad 
A^{\beta}=\beta A+\ket{\tau}\bra{\hat 0}+\frac{\sqrt{\beta^2+N}}{\beta}\ket{\hat 1}\bra{\hat 1},\quad
\tau^{\beta}=\ket{\tau}+\ket{\hat 1}$$
Then we have the following:
\begin{itemize}
\item For all $x\in P_1$, $w_+(x,P^{\beta})=\frac{1}{\beta^2}w_+(x,P)+\frac{\beta^2}{N+\beta^2}$ and $\tilde{w}_-(x,P^{\beta})\leq \beta^2\tilde{w}_-(x,P)+2$;
\item for all $x\in P_0$, $w_-(x,P^{\beta})=\beta^2 w_-(x,P)+1$ and $\tilde{w}_+(x,P^{\beta})\leq \frac{1}{\beta^2}\tilde{w}_+(x,P)+2$;
\item the minimal witness of $P^{\beta}$ is $\ket{w_0^\beta}=\frac{\beta}{\beta^2+N}\ket{w_0}+\frac{N}{\beta^2+N}\ket{\hat 0}+\frac{\beta}{\sqrt{\beta^2+N}}\ket{\hat 1}$, and $\norm{\ket{w_0^\beta}}^2=1$.
\end{itemize}
\end{theorem}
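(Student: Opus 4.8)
The plan is to determine the structure of $P^\beta$ completely explicitly, exploiting two features of the appended coordinates. Since $\ket{\hat 1}\in H^\beta_{\mathrm{true}}$, we have $H^\beta(x)=H(x)\oplus\mathrm{span}\{\ket{\hat 1}\}$ for every $x\in[q]^n$, while $\ket{\hat 0}\in H^\beta_{\mathrm{false}}$ lies in $H^\beta(x)^\bot$ for every $x$; moreover $A^\beta\ket{\hat 0}=\tau$ and $A^\beta\ket{\hat 1}=\tfrac{\sqrt{\beta^2+N}}{\beta}\ket{\hat 1}$. Writing a vector of $H^\beta$ as $\ket{w_H}+w_0\ket{\hat 0}+w_1\ket{\hat 1}$ with $\ket{w_H}\in H$, and a functional $\omega^\beta$ on $V^\beta$ by its restriction $\omega$ to $V$ together with $d:=\omega^\beta(\ket{\hat 1})$, the maps $A^\beta$, $\omega^\beta A^\beta$, and $\omega^\beta A^\beta\Pi_{H^\beta(x)}$ split into mutually orthogonal pieces along $H$, $\mathrm{span}\{\ket{\hat 0}\}$, $\mathrm{span}\{\ket{\hat 1}\}$. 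In particular $\omega^\beta(\tau^\beta)=\omega(\tau)+d$, $\norm{\omega^\beta A^\beta}^2=\beta^2\norm{\omega A}^2+\omega(\tau)^2+\tfrac{\beta^2+N}{\beta^2}d^2$, and $\norm{\omega^\beta A^\beta\Pi_{H^\beta(x)}}^2=\beta^2\norm{\omega A\Pi_{H(x)}}^2+\tfrac{\beta^2+N}{\beta^2}d^2$, with analogous identities on the positive side.

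For the exact witness sizes I would argue as follows. A positive witness of $P^\beta$ for $x\in P_1$ must satisfy $w_0=0$ (since $\ket w\in H^\beta(x)$) and $w_1=\tfrac{\beta}{\sqrt{\beta^2+N}}$ (forced by $A^\beta\ket w=\tau^\beta$), with $\beta\ket{w_H}$ a positive witness of $x$ in $P$; minimizing $\norm{\ket w}^2=\tfrac1{\beta^2}\norm{\beta\ket{w_H}}^2+\tfrac{\beta^2}{\beta^2+N}$ gives $w_+(x,P^\beta)=\tfrac1{\beta^2}w_+(x,P)+\tfrac{\beta^2}{\beta^2+N}$. Dually, a negative witness of $P^\beta$ for $x\in P_0$ must have $d=0$ and $\omega$ a negative witness of $x$ in $P$, so $\norm{\omega^\beta A^\beta}^2=\beta^2\norm{\omega A}^2+1$ and $w_-(x,P^\beta)=\beta^2 w_-(x,P)+1$. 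These two formulas can be cross-checked against each other via $w_+=1/e_-$ and $w_-=1/e_+$ (\thm{equiv1}, \thm{equiv2}).

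For the approximate witness sizes, the same decomposition turns the computation of $e_-(x,P^\beta)$ for $x\in P_1$ into the one-variable convex minimization of $\beta^2 s^2 e_-(x,P)+\tfrac{\beta^2+N}{\beta^2}(1-s)^2$ over $s=\omega(\tau)$ (with $d=1-s$), whose solution $s^*$ identifies the optimal min-error negative witness of $P^\beta$ as $s^*\tilde\omega$ extended by $d^*=1-s^*$, where $\tilde\omega$ is an optimal min-error negative witness of $P$ — the residual freedom in $\omega$ off $H(x)$ being used exactly as in the definition of $\tilde w_-(x,P)$. Its norm is $(s^*)^2\bigl(\beta^2\tilde w_-(x,P)+1\bigr)+\tfrac{\beta^2+N}{\beta^2}(d^*)^2$, and one checks, using only $\beta^2\tilde w_-(x,P)\ge\beta^2 e_-(x,P)$, that the last term is at most $(1-(s^*)^2)\beta^2\tilde w_-(x,P)$; this merges with the first term, and $(s^*)^2\le 1$, to yield $\tilde w_-(x,P^\beta)\le\beta^2\tilde w_-(x,P)+2$. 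The argument for $\tilde w_+(x,P^\beta)$ with $x\in P_0$ is symmetric, the free scalar now being $w_0$ (the coefficient on $\ket{\hat 0}$), and there the two leftover terms $(w_0^*)^2$ and $\tfrac{\beta^2}{\beta^2+N}$ are each at most $1$, producing the $+2$.

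Finally, since $A^\beta\ket{\hat 0}=\tau$, the set of positive witnesses of $P^\beta$ is always nonempty, and its unique minimum-norm element is found by the same one-variable optimization as in the $x\in P_1$ case except that $w_0$ is now free (the minimal witness need not lie in any $H^\beta(x)$); the optimum is $w_0=\tfrac{N}{\beta^2+N}$, giving exactly the claimed $\ket{w_0^\beta}$, and a direct computation — this is the point of the coefficient $\tfrac{\sqrt{\beta^2+N}}{\beta}$ — gives $\norm{\ket{w_0^\beta}}^2=1$. I expect the delicate part to be the approximate-witness claims: not the final inequalities, but correctly justifying the ``decoupling'' of the $P^\beta$ error-minimization into a rescaled copy of the $P$ error-minimization plus a scalar quadratic, together with the fact that minimizing the norm over the min-error witnesses of $P^\beta$ reduces, by rescaling, to $\tilde w_\pm(x,P)$.
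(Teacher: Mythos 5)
Your proof is correct and uses essentially the same decomposition as the paper: the orthogonal split of $H^\beta$ into $H\oplus\mathrm{span}\{\ket{\hat 0}\}\oplus\mathrm{span}\{\ket{\hat 1}\}$, the observations that $\ket{\hat 1}\in H^\beta(x)$ and $\ket{\hat 0}\in H^\beta(x)^\bot$ for every $x$, and the resulting Pythagorean formulas for $\norm{A^\beta\ket{w}}^2$, $\norm{\omega^\beta A^\beta}^2$, $\norm{\omega^\beta A^\beta\Pi_{H^\beta(x)}}^2$. The one genuine (if minor) difference is in the approximate witness bounds: the paper simply writes down the candidate $\tilde\omega'$ (resp.\ $\ket{\tilde w'}$), verifies that it achieves the minimum error, and bounds its norm, whereas you derive that candidate by solving the one-variable quadratic minimization over $s=\omega(\tau)$ (resp.\ over $w_0$) and then invoking strict convexity to argue that the min-error constraint pins $s=s^*$ and reduces the remaining freedom to the $\tilde w_\pm(x,P)$ problem by rescaling — which in particular shows that the paper's ad hoc candidates are in fact optimal, so that your intermediate expression is an equality rather than just an upper bound. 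That derivation is slightly cleaner to discover but requires you to justify the decoupling you flag as ``the delicate part,'' namely that the error-constrained minimization factors into a scaled copy of the $P$-problem plus the scalar term; the check you outline (unique minimizer $s^*>0$, rescale by $1/s^*$, and note $\tilde w_-\ge e_-$) is the right one and does go through.
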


\noindent Proof of \thm{scaling} is postponed to \app{scaling}, as it consists of straightforward computation.

\section{Span Program Algorithms}\label{sec:alg}

In this section we describe several ways in which a span program can be turned into a quantum algorithm. As in the case of algorithms previously constructed from span programs, our algorithms will consist of many applications of a unitary on $H$, applied to some initial state. Unlike previous applications, we will use $\ket{w_0}$, the minimal positive witness of $P$, as the initial state, assuming $P$ is normalized so that $\norm{\ket{w_0}}=1$. This state is independent of the input, and so can be generated with 0 queries. For \emph{negative span program algorithms}, where we want to decide a function negatively approximated by $P$, we will use a unitary $U(P,x)$, defined as follows:
$$U(P,x):=(2\Pi_{\ker A}-I)(2\Pi_{H(x)}-I)=(2\Pi_{(\ker A)^\bot}-I)(2\Pi_{H(x)^\bot}-I).$$
This is similar to the unitary used in previous span program algorithms. Note that $(2\Pi_{\ker A}-I)$ is input-independent, and so can be implemented in 0 queries. However, in order to analyze the time complexity of a span program algorithm, this reflection must be implemented (as we are able to do for our applications, following \cite{BR12}). The reflection $(2\Pi_{H(x)}-I)$ depends on the input, but it is not difficult to see that it requires two queries to implement. Since our definition of span programs varies slightly from previous definitions, we provide a proof of this fact.

\begin{lemma}
The reflection $(2\Pi_{H(x)}-I)$ can be implemented using 2 queries to $x$.
\end{lemma}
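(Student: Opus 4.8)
The plan is the standard compute--reflect--uncompute construction. First I would record the exact form of the reflection. Since $H=H_1\oplus\dots\oplus H_n\oplus H_\mathrm{true}\oplus H_\mathrm{false}$ is an \emph{orthogonal} direct sum and $H_{j,x_j}\subseteq H_j$, the decomposition $H(x)=H_{1,x_1}\oplus\dots\oplus H_{n,x_n}\oplus H_\mathrm{true}$ is again an orthogonal direct sum, so
$$\Pi_{H(x)}=\sum_{j=1}^n\Pi_{H_{j,x_j}}+\Pi_{H_\mathrm{true}},\qquad\text{hence}\qquad 2\Pi_{H(x)}-I=\bigoplus_{j=1}^n\(2\Pi_{H_{j,x_j}}-I_{H_j}\)\oplus I_{H_\mathrm{true}}\oplus\(-I_{H_\mathrm{false}}\),$$
where $\Pi_{H_{j,a}}$ denotes orthogonal projection onto $H_{j,a}$ inside $H_j$. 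In particular, the only input-dependent data needed to apply the reflection on the block $H_j$ is the single symbol $x_j$, and the blocks $H_\mathrm{true}$, $H_\mathrm{false}$ require no input at all.

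Next I would adjoin a fresh $q$-dimensional ancilla initialised to $\ket{0}$, and use the standard query operator $O_x$ acting as $\ket{j}\ket{b}\mapsto\ket{j}\ket{b+x_j\bmod q}$ for $j\in[n]$, $b\in[q]$, and as the identity on the summand $H_\mathrm{true}\oplus H_\mathrm{false}$ (which carries no $j$-label). I would then define a single $x$-\emph{independent} unitary $R$ on $H\otimes\mathbb{C}^q$ that, controlled on the ancilla holding a value $a$, acts on $H_j$ as $2\Pi_{H_{j,a}}-I_{H_j}$, acts as $I$ on $H_\mathrm{true}$, and acts as $-I$ on $H_\mathrm{false}$ (its behaviour on ancilla values that never arise is immaterial). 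Each block of $R$ is a reflection or $\pm I$, so $R$ is unitary; crucially it depends only on the fixed subspaces of $P$, hence costs no queries.

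The claim is that $O_x^\dagger R\,O_x$, restricted to $H\otimes\ket{0}$, equals $(2\Pi_{H(x)}-I)\otimes\ket{0}\bra{0}$, which I would verify by tracing the three steps on each block: for $\ket{v}\in H_j$, $O_x$ gives $\ket{v}\ket{x_j}$, then $R$ gives $(2\Pi_{H_{j,x_j}}-I_{H_j})\ket{v}\otimes\ket{x_j}$, and since this vector still lies in $H_j$, $O_x^\dagger$ restores the ancilla, yielding $(2\Pi_{H_{j,x_j}}-I_{H_j})\ket{v}\otimes\ket{0}$; a vector in $H_\mathrm{true}$ has its ancilla left at $\ket{0}$ and is unchanged, while a vector in $H_\mathrm{false}$ is negated. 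Comparing with the displayed formula for $2\Pi_{H(x)}-I$ finishes the verification, and since $O_x$ and $O_x^\dagger$ each cost one query, the procedure uses $2$ queries.

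I do not anticipate any genuine difficulty; the one point worth flagging is that, unlike in earlier span program definitions, the spaces $H_{j,a}$ for distinct $a$ need not be mutually orthogonal, so one might worry that ``reflecting about $H_{j,a}$'' is ill-posed. It is not: within the block $H_j$ only the single subspace $H_{j,x_j}$ is relevant, its orthogonal projector $\Pi_{H_{j,x_j}}$ is perfectly well defined, and the role of the query $O_x$ is precisely to single out that one value of $a$ so that $R$ itself can remain input-independent.
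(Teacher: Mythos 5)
Your proof is correct and follows essentially the same compute--reflect--uncompute construction as the paper: query to put $x_j$ into an ancilla, apply an input-independent controlled reflection block-by-block over the orthogonal decomposition of $H$, then uncompute with a second query. The only cosmetic difference is that the paper writes the block reflection as $I-2\Pi_{H_{i,a}^\bot\cap H_i}$ while you write $2\Pi_{H_{j,a}}-I_{H_j}$, which agree on $H_j$; you are also slightly more explicit in handling the $H_\mathrm{true}$ and $H_\mathrm{false}$ blocks.
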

\begin{proof}
For every $i\in [n]$ and $a\in [q]$, let $R_{i,a}=(I-2\Pi_{H_{i,a}^\bot\cap H_i})$, the operator that reflects every vector in $H_i$ that is orthogonal to $H_{i,a}$. This operation is input independent, and so, can be implemented in 0 queries.
For every $i\in [n]$, let $\{\ket{\psi_{i,1}},\dots,\ket{\psi_{i,m_i}}\}$ be an orthonormal basis for $H_i$. Recall that the spaces $H_i$ are orthogonal, so we can map $\ket{\psi_{i,j}}\mapsto \ket{i}\ket{\psi_{i,j}}$. Then using one query, we can map $\ket{i}\ket{\psi_{i,j}}\mapsto \ket{i}\ket{x_i}\ket{\psi_{i,j}}$. We then perform $R_{i,x_i}$ on the last register, conditioned on the first two registers, and then uncompute the first two registers, using one additional query.
\end{proof}

For \emph{positive span program algorithms}, where we want to decide a function positively approximated by $P$, or estimate the positive witness size, we will use a slightly different unitary:
$$U'(P,x)=(2\Pi_{H(x)}-I)(2\Pi_{T}-I),$$
where $T=\ker A\oplus\mathrm{span}\{\ket{w_0}\}$, the span of positive witnesses. We have $U'=U^\dagger(I-2\ket{w_0}\bra{w_0})$. 

We begin by analyzing the overlap of the initial state, $\ket{w_0}$, with the phase spaces of the unitaries $U$ and $U'$ in \sec{spectral}. In particular, we show that the projections of $\ket{w_0}$ onto the 0-phase spaces of $U$ and $U'$ are exactly related to the witness size. Using the effective spectral gap lemma (\lem{gap}), we show that the overlap of $\ket{w_0}$ with small nonzero phase spaces is not too large.
Using this analysis, in \sec{alg-approx}, we describe how to convert a span program into an algorithm for any decision problem that is approximated by the span program, proving \thm{approx-alg}, and in \sec{alg-est}, we describe how to convert a span program into an algorithm that estimates the span program witness size, proving \thm{est-alg}. 

Finally, in \sec{alg-gap}, we give a lower bound on the phase gap of $U$ in terms of the spectra of $A$ and $A(x)=A\Pi_{H(x)}$, giving an alternative analysis to the effective spectral gap analysis of \sec{spectral} that may be better in some cases, and 
proving \thm{est-alg-gap}.

\subsection{Analysis}\label{sec:spectral}

\paragraph{Negative Span Programs} 
In this section we analyze the overlap of $\ket{w_0}$ with the eigenspaces of $U(P,x)$. For any angle $\Theta\in [0,\pi)$, we define $\Pi_\Theta^x$ as the orthogonal projector onto the $e^{i\theta}$-eigenspaces of $U(P,x)$ for which $|\theta|\leq \Theta$.

\begin{lemma}\label{lem:pi-theta-neg}
Let $P$ be a normalized span program on $[q]^n$. For any $x\in [q]^n$, 
$$\norm{\Pi_\Theta^x\ket{w_0}}^2\leq \frac{\Theta^2}{4}\tilde{w}_+(x)+\frac{1}{w_-(x)}.$$
In particular, for any $x\in P_1$, $\norm{\Pi_\Theta^x\ket{w_0}}^2\leq \frac{\Theta^2}{4}w_+(x)$. 
\end{lemma}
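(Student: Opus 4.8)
The plan is to decompose $\ket{w_0}$ into its component in the $0$-phase space of $U=U(P,x)$ and the orthogonal complement, and to bound each piece separately. By \thm{szegedy} applied to $U=(2\Pi_{\ker A}-I)(2\Pi_{H(x)}-I)$, the $1$-eigenspace of $U$ is $(\ker A\cap H(x))\oplus(\ker A^\bot\cap H(x)^\bot) = (\ker A\cap H(x))\oplus(\mathrm{row}\,A\cap H(x)^\bot)$. So I would write $\Pi_0^x\ket{w_0}$ (the projection onto the $1$-eigenspace) explicitly, and separately handle $\Pi_\Theta^x\ket{w_0} - \Pi_0^x\ket{w_0}$, the projection onto the small but nonzero phases, using the effective spectral gap lemma.

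For the small-nonzero-phase part, the key observation is that $\ket{w_0}=A^+\tau\in \mathrm{row}\,A=(\ker A)^\bot$, so $\Pi_{\ker A}\ket{w_0}=0$. This is exactly the hypothesis of \lem{gap} with $\Pi_A = \Pi_{\ker A}$ and $\Pi_B = \Pi_{H(x)}$ — wait, I need $\ket{w_0}$ in the role of $\Pi_B\ket{u}$, so I would take $\ket{u}$ with $\Pi_{H(x)}\ket{u}=\ket{w_0}$ or relate $\ket{w_0}$ to such a $\ket{u}$ via a min-error negative witness. The natural candidate: if $\tilde\omega$ is an optimal min-error negative witness for $x$, then $(\tilde\omega A)^\dagger$ has the property that $A(\tilde\omega A)^\dagger$ points along $\tau$, and $\Pi_{H(x)}(\tilde\omega A)^\dagger$ relates to a positive-witness-like object; by \thm{equiv2} (for $x\in P_1$) this projection, suitably normalized, equals $\ket{w}$ with $\norm{\ket{w}}^2=w_+(x)$. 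More generally, I expect $\ket{w_0}$ can be written as $\Pi_{(\ker A)^\bot}$ applied to something whose $H(x)$-part has controlled norm, and then $\norm{\Pi_\Theta^x(\text{that }H(x)\text{-part})}\le\frac\Theta2\cdot(\text{norm})$, contributing the $\frac{\Theta^2}{4}\tilde w_+(x)$ term. I would need to be careful: the relevant decomposition is $\ket{w_0} = \ket{w_0}_{\text{in }H(x)} + \ket{w_0}_{\text{in }H(x)^\bot}$, where the $H(x)^\bot$-component lies in the $1$-eigenspace (since $\mathrm{row}\,A\cap H(x)^\bot$ is part of it — but $\ket{w_0}$'s component there need not be exactly the projection). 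The cleanest route is: let $\ket{w}$ be the optimal positive witness for $x$ when $x\in P_1$ (so $\ket{w}\in H(x)$, $A\ket{w}=\tau$, $\norm{\ket{w}}^2=w_+(x)$), and observe $\ket{w}-\ket{w_0}\in\ker A\subseteq$ $1$-eigenspace, hence $\Pi_\Theta^x\ket{w_0}=\Pi_\Theta^x\ket{w}$, and then apply \lem{gap}-type reasoning to $\ket{w}=\Pi_{H(x)}\ket{w}$ noting $\Pi_{\ker A}$ kills... no, $\Pi_{\ker A}\ket{w}\ne0$ in general. So instead apply the gap lemma with the roles $\Pi_A=\Pi_{H(x)^\bot}$? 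Let me restructure: use $U = (2\Pi_{(\ker A)^\bot}-I)(2\Pi_{H(x)^\bot}-I)$, take $\ket{u}$ with $\Pi_{(\ker A)^\bot}\ket{u}=0$, i.e. $\ket{u}\in\ker A$; then $\Pi_{H(x)^\bot}\ket{u}$ has small overlap with nonzero-phase spaces. I want $\Pi_{H(x)^\bot}\ket{u}$ to capture the relevant part of $\ket{w_0}$.

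For the $0$-phase (exact $1$-eigenspace) part, I'd compute $\norm{\Pi_0^x\ket{w_0}}^2$ and show it equals $\frac{1}{w_-(x)}$ (which is $e_+(x)$ by \thm{equiv1} when $x\in P_0$, and $0$ when $x\in P_1$ since then $w_-(x)=\infty$). The projection onto $\mathrm{row}\,A\cap H(x)^\bot$ of $\ket{w_0}$: since $A\ket{w_0}=\tau\ne0$, $\ket{w_0}$ has a genuine component outside $H(x)$ exactly when $x$ fails to be positive, and the size of that component should be $e_+(x)=1/w_-(x)$ via the min-error positive witness characterization. Combining the two bounds — $\frac{\Theta^2}{4}\tilde w_+(x)$ from the effective gap lemma and $\frac{1}{w_-(x)}$ from the exact eigenspace — gives the claimed inequality; the "in particular" follows since $x\in P_1$ forces $w_-(x)=\infty$ and $\tilde w_+(x)=w_+(x)$.

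**Main obstacle.** The delicate point is matching the abstract quantities $\tilde w_+(x)$ and $w_-(x)$ to the correct projections of $\ket{w_0}$. Specifically, I must identify precisely which vector plays the role of $\ket{u}\in\ker A$ in \lem{gap} so that $\Pi_{H(x)^\bot}\ket{u}$ differs from $\ket{w_0}$ only by something in the $1$-eigenspace, and verify that $\norm{\ket{u}}^2$ is controlled by $\tilde w_+(x)$. I expect this to require the structural identities from \thm{equiv1}/\thm{equiv2} together with the observation that $\ket{w_0}=A^+\tau$ lies in $(\ker A)^\bot$, and some care in splitting $\ket{w_0}$ across $H(x)\oplus H(x)^\bot$ while tracking which summands are $1$-eigenvectors of $U$. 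The computation of $\norm{\Pi_0^x\ket{w_0}}^2 = 1/w_-(x)$ is the other place where a clean argument — probably via the optimal min-error positive witness and its orthogonal decomposition — is needed rather than brute force.
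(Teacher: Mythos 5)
Your high-level plan is in the right neighbourhood, but there are two concrete wrong turns that would have stopped you from completing it, both concerning the direction in which the effective spectral gap lemma should be applied.

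First, for $x\in P_1$: you suggest $\ket{w_x}-\ket{w_0}\in\ker A\subseteq$ $1$-eigenspace, hence $\Pi_\Theta^x\ket{w_0}=\Pi_\Theta^x\ket{w_x}$. This is inconsistent with what you wrote two sentences earlier: by \thm{szegedy} the $1$-eigenspace of $U(P,x)$ is $(\ker A\cap H(x))\oplus((\ker A)^\bot\cap H(x)^\bot)$, so $\ker A$ is \emph{not} contained in it, and $\ket{w_x}-\ket{w_0}$ (which in general has a component in $H(x)^\bot$, since $\ket{w_0}\notin H(x)$) need not be a $1$-eigenvector. The correct argument for $x\in P_1$ is simpler and uses no eigenspace decomposition: $\ket{w_x}\in H(x)$, so $\Pi_{H(x)^\bot}\ket{w_x}=0$, and \lem{gap} directly gives $\norm{\Pi_\Theta\Pi_{(\ker A)^\bot}\ket{w_x}}\le\frac{\Theta}{2}\norm{\ket{w_x}}$; combined with $\Pi_{(\ker A)^\bot}\ket{w_x}=\ket{w_0}$ (which holds because $\ket{w_x}-\ket{w_0}\in\ker A$) this is the whole proof.

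Second, your ``restructured'' attempt takes $\ket{u}\in\ker A$ and bounds $\Pi_\Theta\Pi_{H(x)^\bot}\ket{u}$. That is the wrong role assignment: since $\ket{w_0}\in(\ker A)^\bot$, it is orthogonal to any $\ket{u}\in\ker A$, and $\Pi_{H(x)^\bot}\ket{u}$ has no useful relationship to $\ket{w_0}$ — you yourself note you ``want $\Pi_{H(x)^\bot}\ket{u}$ to capture the relevant part of $\ket{w_0}$'' without seeing how. You need the opposite reading of \lem{gap} (valid because $\Pi_\Theta$ is the same for $U$ and $U^\dagger$): take $\ket{u}\in H(x)$, so $\Pi_{H(x)^\bot}\ket{u}=0$, and bound $\norm{\Pi_\Theta\Pi_{(\ker A)^\bot}\ket{u}}\le\frac{\Theta}{2}\norm{\ket{u}}$. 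The right choice in the general case is $\ket{u}=\Pi_{H(x)}\ket{\tilde w_x}$ for $\ket{\tilde w_x}$ the optimal \emph{min-error positive} witness (not the min-error negative witness you reach for). Then $\Pi_{(\ker A)^\bot}\ket{u}=\ket{w_0}-\Pi_{(\ker A)^\bot}\Pi_{H(x)^\bot}\ket{\tilde w_x}$, and by \thm{equiv1} the subtracted vector equals $(\omega_x A)^\dagger/w_-(x)$ for $\omega_x$ the optimal exact negative witness. Since $(\omega_x A)^\dagger$ lies in both $(\ker A)^\bot$ and $H(x)^\bot$ it is a $1$-eigenvector; expanding $\norm{\Pi_\Theta\ket{w_0}-(\omega_x A)^\dagger/w_-(x)}^2\le\frac{\Theta^2}{4}\tilde w_+(x)$ and using $\omega_x A\ket{w_0}=\omega_x\tau=1$ produces exactly the $\frac{\Theta^2}{4}\tilde w_+(x)+\frac{1}{w_-(x)}$ bound. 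Your idea of computing $\norm{\Pi_0^x\ket{w_0}}^2=1/w_-(x)$ separately and adding the two contributions is essentially the same calculation (that quantity is proved in \lem{pi-0-neg} and the relevant $1$-eigenvector is precisely $(\omega_x A)^\dagger$), but the paper obtains the combined bound in one pass rather than splitting first.
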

\begin{proof}
Suppose $x\in P_1$, and let $\ket{w_x}$ be an optimal exact positive witness for $x$, so $\Pi_{(\ker A)^\bot}\ket{w_x}=\ket{w_0}$. Then since $\Pi_{H(x)^\bot}\ket{w_x}=0$, we have by the effective spectral gap lemma (\lem{gap}):
$$\norm{\Pi_\Theta^x\ket{w_0}}^2=\norm{\Pi_\Theta^x\Pi_{(\ker A)^\bot}\ket{w_x}}^2\leq \frac{\Theta^2}{4}\norm{\ket{w_x}}^2=\frac{\Theta^2}{4}w_+(x).$$

Suppose $x\in P_0$ and let $\omega_x$ be an optimal zero-error negative witness for $x$ and $\ket{\tilde w_x}$ an optimal min-error positive witness for $x$.
First note that $\Pi_{(\ker A)^\bot}\ket{\tilde w_x}=\ket{w_0}$, so $\Pi_{(\ker A)^\bot}\Pi_{H(x)}\ket{\tilde w_x}+\Pi_{(\ker A)^\bot}\Pi_{H(x)^\bot}\ket{\tilde w_x}=\ket{w_0}$. Since $\Pi_{H(x)^\bot}\(\Pi_{H(x)}\ket{\tilde w_x}\)=0$, we have, by \lem{gap},
\begin{eqnarray*}
\norm{\Pi_\Theta\Pi_{(\ker A)^\bot}\Pi_{H(x)}\ket{\tilde w_x}}^2 &\leq & \frac{\Theta^2}{4}\norm{\Pi_{H(x)}\ket{\tilde w_x}}^2\\
\norm{\Pi_\Theta\(\ket{w_0}-\Pi_{(\ker A)^\bot}\Pi_{H(x)^\bot}\ket{\tilde w_x}\)}^2 &\leq & \frac{\Theta^2}{4}\norm{\ket{\tilde w_x}}^2\\
\norm{\Pi_\Theta\(\ket{w_0}-\Pi_{(\ker A)^\bot}\frac{(\omega_x A)^\dagger}{w_-(x)}\)}^2 &\leq & \frac{\Theta^2}{4}\norm{\ket{\tilde w_x}}^2.
\end{eqnarray*}
In the last step, we used the fact that $\frac{(\omega_x A)^\dagger}{w_-(x)}=\Pi_{H(x)^\bot}\ket{\tilde w_x}$, by \thm{equiv1}. Next note that $\Pi_{(\ker A)^\bot} (\omega_x A)^\dagger = (\omega_x A)^\dagger$ and $\Pi_{H(x)^\bot}(\omega_x A)^\dagger = (\omega_x A)^\dagger$, so $U(\omega_x A)^\dagger = (\omega_x A)^\dagger$, and therefore, $\Pi_\Theta(\omega_x A)^\dagger = (\omega_x A)^\dagger$. Thus:
\begin{eqnarray*}
\norm{\Pi_\Theta\ket{w_0}-\frac{(\omega_x A)^\dagger}{w_-(x)}}^2 &\leq & \frac{\Theta^2}{4}\norm{\ket{\tilde w_x}}^2\\
\norm{\Pi_\Theta\ket{w_0}}^2+\frac{1}{w_-(x)}-2\frac{1}{w_-(x)}\bra{w_0}\Pi_\Theta (\omega_x A)^\dagger &\leq & \frac{\Theta^2}{4}\tilde w_+(x)\\
\norm{\Pi_\Theta\ket{w_0}}^2+\frac{1}{w_-(x)}-2\frac{1}{w_-(x)} (\omega_x A\ket{w_0})^\dagger &\leq & \frac{\Theta^2}{4}\tilde w_+(x)\\
\norm{\Pi_\Theta\ket{w_0}}^2+\frac{1}{w_-(x)}-2\frac{1}{w_-(x)} (\omega_x \tau)^\dagger &\leq & \frac{\Theta^2}{4}\tilde w_+(x)\\
\norm{\Pi_\Theta\ket{w_0}}^2 &\leq & \frac{\Theta^2}{4}\tilde w_+(x) +\frac{1}{w_-(x)},
\end{eqnarray*}
where in the last line we used the fact that $\omega_x\tau=1$. 
\end{proof}

\begin{lemma}\label{lem:pi-0-neg}
Let $P$ be a normalized span program on $[q]^n$. For any $x\in [q]^n$, 
$$\norm{\Pi_0^x\ket{w_0}}^2= \frac{1}{w_-(x)}.$$
In particular, for any $x\in P_1$, $\norm{\Pi_0^x\ket{w_0}}=0$.
\end{lemma}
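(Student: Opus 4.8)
The plan is to prove the two inequalities $\norm{\Pi_0^x\ket{w_0}}^2\le \tfrac{1}{w_-(x)}$ and $\norm{\Pi_0^x\ket{w_0}}^2\ge \tfrac{1}{w_-(x)}$ separately, with the convention that $\tfrac{1}{w_-(x)}=0$ when $x\in P_1$ (so the ``in particular'' clause is just the special case $w_-(x)=\infty$). The upper bound needs essentially no new work: \lem{pi-theta-neg} is stated for any $\Theta\in[0,\pi)$, and at $\Theta=0$ the projector $\Pi_\Theta^x$ is exactly $\Pi_0^x$, so the lemma immediately yields $\norm{\Pi_0^x\ket{w_0}}^2\le \tfrac{1}{w_-(x)}$ for every $x$, and $\norm{\Pi_0^x\ket{w_0}}^2\le 0$ when $x\in P_1$. (Equivalently, since $U(P,x)$ acts on a finite-dimensional space it has a positive phase gap, so $\Pi_\Theta^x=\Pi_0^x$ for all small $\Theta$, and one may let $\Theta\to 0^+$.) This already settles the case $x\in P_1$.

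For the lower bound, assume $x\in P_0$ and let $\omega_x$ be an optimal exact negative witness for $x$, so that $\norm{\omega_x A}^2=w_-(x)$. The key point is that $(\omega_x A)^\dagger$ lies in the $1$-eigenspace of $U(P,x)$: it lies in $\mathrm{row}A=(\ker A)^\bot$, so it is fixed by $2\Pi_{(\ker A)^\bot}-I$; and $\omega_x A\Pi_{H(x)}=0$ gives, on taking adjoints, $\Pi_{H(x)}(\omega_x A)^\dagger=0$, i.e.\ $(\omega_x A)^\dagger\in H(x)^\bot$, so it is fixed by $2\Pi_{H(x)^\bot}-I$; hence $U(P,x)(\omega_x A)^\dagger=(\omega_x A)^\dagger$ and therefore $\Pi_0^x(\omega_x A)^\dagger=(\omega_x A)^\dagger$. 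Since projecting onto a subspace can only increase the squared overlap with any unit vector of that subspace,
$$\norm{\Pi_0^x\ket{w_0}}^2\ \ge\ \frac{\abs{\bra{w_0}(\omega_x A)^\dagger}^2}{\norm{(\omega_x A)^\dagger}^2}\ =\ \frac{\abs{\omega_x A\ket{w_0}}^2}{w_-(x)}\ =\ \frac{\abs{\omega_x\tau}^2}{w_-(x)}\ =\ \frac{1}{w_-(x)},$$
where we used $A\ket{w_0}=\tau$ (valid because, $P$ being normalized, $\ket{w_0}$ is a genuine positive witness) and $\omega_x\tau=1$. Combining with the upper bound gives the claimed equality.

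There is no real obstacle here; the only points requiring a little care are the legitimacy of invoking \lem{pi-theta-neg} at $\Theta=0$ (handled above via finite-dimensionality), and the verification that $(\omega_x A)^\dagger$ is a true $1$-eigenvector of $U(P,x)$ — which is precisely the computation already performed near the end of the proof of \lem{pi-theta-neg}. Alternatively, one could get the exact value in a single stroke from \thm{szegedy}: its identification of the $1$-eigenspace of $U(P,x)$ as $(\ker A\cap H(x))\oplus\big((\ker A)^\bot\cap H(x)^\bot\big)$, combined with $\ket{w_0}\in(\ker A)^\bot$, shows that $\Pi_0^x\ket{w_0}$ is the orthogonal projection of $\ket{w_0}$ onto $(\ker A)^\bot\cap H(x)^\bot$; one then checks, using \thm{equiv1} (which gives $\Pi_{H(x)^\bot}\ket{\tilde w}=(\omega_x A)^\dagger/w_-(x)$, a vector lying in that subspace), that this projection equals $(\omega_x A)^\dagger/w_-(x)$, whose squared norm is $\tfrac{1}{w_-(x)}$.
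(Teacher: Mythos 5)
Your proof is correct and matches the paper's argument essentially step for step: the upper bound is read off from \lem{pi-theta-neg} at $\Theta=0$ (which the lemma explicitly permits, so the $\Theta\to 0^+$ caveat is unnecessary), and the lower bound follows by observing that $(\omega_x A)^\dagger$ is a $1$-eigenvector of $U(P,x)$ with $\braket{(\omega_x A)^\dagger}{w_0}=\omega_x\tau=1$ and $\norm{(\omega_x A)^\dagger}^2=w_-(x)$, then projecting. The alternative Szegedy/\thm{equiv1} route you sketch at the end is a valid shortcut but is not the one the paper takes.
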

\begin{proof}
By \lem{pi-theta-neg}, we have $\norm{\Pi^x_0\ket{w_0}}^2\leq \frac{1}{w_-(x)}$. To see the other direction, let $\omega_x$ be an optimal zero-error negative witness for $x$ (if none exists, then $w_-(x)=\infty$ and the statement is vacuously true). Define $\ket{u}=(\omega_x A)^\dagger$. By the proof of \lem{pi-theta-neg}, 
$U\ket{u}=\ket{u}$. We have 
$\braket{u}{w_0}=\omega_x A\ket{w_0}=\omega_x\tau=1$ \mbox{and} $\norm{\ket{u}}^2=\norm{\omega_x A}^2=w_-(x),$
so we have:
$\norm{\Pi^x_0\ket{w_0}}^2\geq \norm{\frac{\ket{u}\bra{u}}{\norm{\ket{u}}^2}\ket{w_0}}^2=\frac{1}{w_-(x)}$.
\end{proof}

\paragraph{Positive Span Programs} We now prove results analogous to \lemlem{pi-theta-neg}{pi-0-neg} for the unitary $U'(P,x)$. 
For any angle $\Theta\in [0,\pi)$, we define $\overline{\Pi}^x_{\Theta}$ as the projector onto the $\theta$-phase spaces of $U'(P,x)$ for which $|\theta|\leq \Theta$. 

\begin{lemma}\label{lem:pi-theta-pos}
Let $P$ be a normalized span program on $[q]^n$. For any $x\in [q]^n$,
$$\norm{\overline{\Pi}^x_\Theta\ket{w_0}}^2\leq \frac{\Theta^2}{4}\tilde{w}_-(x)+\frac{1}{w_+(x)}.$$
In particular, if $x\in P_0$, then $\norm{\overline{\Pi}^x_\Theta\ket{w_0}}^2\leq \frac{\Theta^2}{4}w_-(x)$. 
\end{lemma}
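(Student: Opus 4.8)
The structure here should mirror the proof of \lem{pi-theta-neg}, but with the roles of positive and negative witnesses interchanged and with $U'(P,x)=(2\Pi_{H(x)}-I)(2\Pi_T-I)$ in place of $U(P,x)$, where $T=\ker A\oplus\mathrm{span}\{\ket{w_0}\}$. The plan is to split into the two cases $x\in P_0$ and $x\in P_1$.

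First I would handle $x\in P_0$. Let $\omega_x$ be an optimal exact negative witness for $x$, so $\omega_x A\Pi_{H(x)}=0$ and $\omega_x\tau=1$, and set $\ket{u}=(\omega_x A)^\dagger=\Pi_{H(x)^\bot}(\omega_x A)^\dagger$, with $\norm{\ket u}^2=w_-(x)$. Since $\ket u\in(\ker A)^\bot$ and $\bra u$ annihilates $\ker A$, while $\bra u w_0=\omega_x A\ket{w_0}=\omega_x\tau=1\neq 0$, the vector $\ket u$ is not in $\ker A$ but has a nonzero component along $\ket{w_0}$; I would verify that $\Pi_T\ket u=\ket{w_0}$ (using $\ket{w_0}\in(\ker A)^\bot$, so $T^\bot=(\ker A)^\bot\cap\ket{w_0}^\bot$ within $(\ker A)^\bot$, and $\braket u{w_0}=1=\norm{\ket{w_0}}^2$). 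The target now is to apply the effective spectral gap lemma (\lem{gap}) with $\Pi_A\leftrightarrow\Pi_{H(x)}$ and $\Pi_B\leftrightarrow\Pi_T$ to the vector $\ket u$: since $\Pi_{H(x)}\ket u=0$ (as $\ket u=\Pi_{H(x)^\bot}\ket u$), \lem{gap} gives $\norm{\overline\Pi^x_\Theta\Pi_T\ket u}\le\frac\Theta2\norm{\ket u}$, i.e.\ $\norm{\overline\Pi^x_\Theta\ket{w_0}}\le\frac\Theta2\sqrt{w_-(x)}$, which is the ``in particular'' claim; and since $\frac1{w_+(x)}=e_-(x)=0$ here (as $x\in P_0$), this is also the full bound.

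Next I would handle $x\in P_1$. Let $\tilde\omega_x$ be an optimal min-error negative witness for $x$ and let $\ket{w_x}$ be an optimal exact positive witness for $x$, so $\Pi_T\ket{w_x}=\ket{w_0}$ (since $\ket{w_x}\in H(x)\subseteq T^\perp$... actually $\ket{w_x}=\ket{w_0}+(\ket{w_x}-\ket{w_0})$ with $\ket{w_x}-\ket{w_0}\in\ker A\subseteq T$, so $\Pi_{T^\bot}\ket{w_x}=0$ and $\Pi_T\ket{w_x}=\ket{w_x}$ — I must recheck the decomposition $T=\ker A\oplus\mathrm{span}\{\ket{w_0}\}$, which gives $\Pi_T\ket{w_x}=\ket{w_x}$, not $\ket{w_0}$). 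Let me instead follow the $P_0$-case template of \lem{pi-theta-neg} more literally: write $\ket{v}=(\tilde\omega_x A)^\dagger$, use \thm{equiv2} to identify $\Pi_{H(x)}\ket v=\tilde\omega_x A\Pi_{H(x)}{}^\dagger=w_+(x)\ket{w_x}$ (up to the scalar $\norm{\tilde\omega_x A\Pi_{H(x)}}^2=e_-(x)=1/w_+(x)$), split $\ket{w_0}=\Pi_T\ket{w_x}=\Pi_T\Pi_{H(x)}\ket{w_x}+\Pi_T\Pi_{H(x)^\bot}\ket{w_x}$ — wait, $\ket{w_x}\in H(x)$ so this is just $\Pi_T\ket{w_x}$. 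The cleaner route: decompose $\ket{w_0}$ via $\ket{w_x}$, apply \lem{gap} to $\Pi_{H(x)^\bot}\ket{w_x}=0$ (impossible since $\ket{w_x}\in H(x)$ means $\Pi_{H(x)}\ket{w_x}=\ket{w_x}$). I would instead mimic the $x\in P_0$ branch of \lem{pi-theta-neg}: apply \lem{gap} with the roles so that $\Pi_A\ket u=0$ for $\ket u$ a suitable component, using $\Pi_{H(x)^\bot}(\tilde\omega_x A)^\dagger$ is $U'$-invariant, expand $\norm{\overline\Pi^x_\Theta\ket{w_0}-c(\tilde\omega_x A)^\dagger}^2\le\frac{\Theta^2}4\tilde w_-(x)$ for the appropriate constant $c$, and use $\tilde\omega_x A\ket{w_0}=\tilde\omega_x\tau=1$ together with $c=1/w_+(x)$ to peel off the cross term, yielding $\norm{\overline\Pi^x_\Theta\ket{w_0}}^2\le\frac{\Theta^2}4\tilde w_-(x)+\frac1{w_+(x)}$.

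The main obstacle is bookkeeping the geometry of $T$ correctly: unlike the negative case where $\ker A$ and $H(x)^\bot$ are the two reflection spaces and $(\omega_x A)^\dagger$ lands cleanly in their intersection structure, here the reflection space $T=\ker A\oplus\mathrm{span}\{\ket{w_0}\}$ is larger than $\ker A$, so the identity $\Pi_T\ket{w}$ for various witnesses $\ket w$, and the claim that certain witness-derived vectors are $U'$-invariant, both need careful verification. Once the correct analogue of "$\Pi_{H(x)^\bot}(\omega_x A)^\dagger=(\omega_xA)^\dagger$ and hence $U$-invariant" is established for $U'$ — namely identifying which vector built from the min-error negative witness and $\ket{w_x}$ is fixed by $U'$ — the inequality manipulations are the same as in \lem{pi-theta-neg} and go through by expanding the square, using \thm{equiv2} in place of \thm{equiv1}, and discarding a manifestly nonnegative term.
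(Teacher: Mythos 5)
Your $x\in P_0$ branch is exactly the paper's argument: take $\ket u=(\omega_x A)^\dagger$, note $\Pi_{H(x)}\ket u=0$ and $\Pi_T\ket u=\omega_x\tau\ket{w_0}=\ket{w_0}$, and apply \lem{gap}. The high-level plan for $x\in P_1$ also matches: apply \lem{gap} to $\Pi_{H(x)^\bot}(\tilde\omega_x A)^\dagger$, expand the square, peel off a cross term. But two intermediate claims in your sketch are wrong, and they are exactly the ``bookkeeping of $T$'' you flagged as the obstacle, so they need to be repaired rather than glossed.

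First, $\Pi_{H(x)^\bot}(\tilde\omega_x A)^\dagger$ is \emph{not} $U'$-invariant: the $1$-eigenspace of $U'=(2\Pi_{H(x)}-I)(2\Pi_T-I)$ is $(H(x)\cap T)\oplus(H(x)^\bot\cap T^\bot)$, and $\Pi_T\Pi_{H(x)^\bot}(\tilde\omega_x A)^\dagger=\ket{w_0}-\ket{w_x}/w_+(x)$ is generally nonzero. The vector that actually sits in the $1$-eigenspace is the exact positive witness $\ket{w_x}\in H(x)\cap T$. Second, the subtracted vector in your display should be $\ket{w_x}/w_+(x)=\Pi_{H(x)}(\tilde\omega_x A)^\dagger$ (via \thm{equiv2}), not $c(\tilde\omega_x A)^\dagger$; if you literally expand $\bigl\|\overline\Pi^x_\Theta\ket{w_0}-c(\tilde\omega_x A)^\dagger\bigr\|^2$ with $c=1/w_+(x)$, the cross term $\bra{w_0}\overline\Pi^x_\Theta(\tilde\omega_x A)^\dagger$ is not one you can evaluate, because $(\tilde\omega_x A)^\dagger$ is not fixed by $U'$. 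With the correct subtracted vector $\ket{w_x}/w_+(x)$, the cross term is $\braket{w_0}{w_x}$, which equals $1$ because $\ket{w_x}=\ket{w_0}+\ket{w_0^\bot}$ with $\ket{w_0^\bot}\in\ker A$ orthogonal to $\ket{w_0}$ and $\norm{\ket{w_0}}^2=1$ (normalization). This identity, not $\tilde\omega_x A\ket{w_0}=1$, is what closes the argument. The repaired chain is: apply \lem{gap} to $\ket u=\Pi_{H(x)^\bot}(\tilde\omega_x A)^\dagger$ to get $\norm{\overline\Pi^x_\Theta\Pi_T\ket u}\le\frac{\Theta}{2}\norm{\ket u}$ with $\norm{\ket u}^2\le\tilde w_-(x)$; compute $\Pi_T\ket u=\ket{w_0}-\ket{w_x}/w_+(x)$; pull $\ket{w_x}$ through $\overline\Pi^x_\Theta$ using $U'$-invariance; expand and substitute $\braket{w_0}{w_x}=1$.
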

\begin{proof}
If $x\in P_0$, then let $\omega_x$ be an optimal exact negative witness for $x$, so $\omega_x A\Pi_{H(x)}=0$, and we thus have, by the effective spectral gap lemma (\lem{gap}), 
$$\norm{\overline{\Pi}_\Theta^x\Pi_T(\omega_x A)^\dagger}^2\leq \frac{\Theta^2}{4}\norm{\omega_x A}^2=\frac{\Theta^2}{4}w_-(x).$$
We have $\omega_x A\Pi_T=\omega_x A(\Pi_{\ker A}+\ket{w_0}\bra{w_0})=\omega_x A\ket{w_0}\bra{w_0}=\omega_x\tau\bra{w_0}=\bra{w_0}$, so $\norm{\overline{\Pi}_\Theta^x\ket{w_0}}^2\leq \frac{\Theta^2}{4}w_-(x)$. 

Suppose $x\in P_1$, and let $\ket{w_x}$ be an optimal zero-error positive witness for $x$, and $\tilde\omega_x$ an optimal min-error negative witness for $x$. By \thm{equiv2}, we have $\frac{\ket{w_x}}{w_+(x)}=\Pi_{H(x)}(\tilde\omega_x A)^\dagger$. 
Since $\Pi_{H(x)}(\tilde\omega_x A\Pi_{H(x)^\bot})^\dagger=0$, we have, by \lem{gap},
\begin{eqnarray*}
\norm{\overline{\Pi}^x_\Theta\Pi_T(\tilde\omega_x A\Pi_{H(x)^\bot})^\dagger}^2 &\leq & \frac{\Theta^2}{4}\norm{\tilde\omega_x A\Pi_{H(x)^\bot}}^2\\
\norm{\overline{\Pi}^x_\Theta\Pi_T\((\tilde\omega_x A)^\dagger - \frac{\ket{w_x}}{w_+(x)}\)}^2 & \leq & \frac{\Theta^2}{4}\norm{\tilde\omega_x A}^2\\
\norm{\overline{\Pi}^x_\Theta\Pi_T(\tilde\omega_x A)^\dagger - \frac{\ket{w_x}}{w_+(x)}}^2 & \leq & \frac{\Theta^2}{4}\tilde w_-(x).\\
\end{eqnarray*}
In the last line we used the fact that $\Pi_T\ket{w_x}=\Pi_{H(x)}\ket{w_x}=\ket{w_x}$, so $U'\ket{w_x}=\ket{w_x}$, and thus \mbox{$\overline{\Pi}^x_\Theta\ket{w_x}=\ket{w_x}$}. 

Note that $\tilde\omega_xA\Pi_T=\tilde\omega_xA(\Pi_{\ker A}+\ket{w_0}\bra{w_0})=\tilde\omega_xA\ket{w_0}\bra{w_0}=\tilde\omega_x\tau\bra{w_0}=\bra{w_0}$. 
Thus, we can continue from above as:
\begin{eqnarray*}
\norm{\overline{\Pi}^x_\Theta\ket{w_0} - \frac{\ket{w_x}}{w_+(x)}}^2 & \leq & \frac{\Theta^2}{4}\tilde w_-(x)\\
\norm{\overline{\Pi}^x_\Theta\ket{w_0}}^2+\norm{\frac{\ket{w_x}}{w_+(x)}}^2 -\frac{2}{w_+(x)}\bra{w_0}\overline{\Pi}^x_\Theta\ket{w_x}& \leq & \frac{\Theta^2}{4}\tilde w_-(x)\\
\norm{\overline{\Pi}^x_\Theta\ket{w_0}}^2+\frac{1}{w_+(x)} -\frac{2}{w_+(x)}\braket{w_0}{w_x}& \leq & \frac{\Theta^2}{4}\tilde w_-(x)\\
\norm{\overline{\Pi}^x_\Theta\ket{w_0}}^2& \leq & \frac{\Theta^2}{4}\tilde w_-(x)+\frac{1}{w_+(x)},\\
\end{eqnarray*}
where in the last line we used the fact that $\braket{w_0}{w_x}=1$. 
\end{proof}

\begin{lemma}\label{lem:pi-0-pos}
Let $P$ be a normalized span program on $[q]^n$. For any $x\in [q]^n$,
$$\norm{\overline{\Pi}^x_0\ket{w_0}}^2=\frac{1}{w_+(x)}.$$
In particular, if $x\in P_0$, then $\norm{\overline{\Pi}_0^x\ket{w_0}}=0$. 
\end{lemma}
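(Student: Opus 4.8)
The plan is to mirror the proof of \lem{pi-0-neg}, obtaining the upper bound from \lem{pi-theta-pos} and the matching lower bound from an explicit $1$-eigenvector of $U'(P,x)$.

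For the upper bound, apply \lem{pi-theta-pos} with $\Theta=0$: the term $\frac{\Theta^2}{4}\tilde w_-(x)$ vanishes, leaving $\norm{\overline{\Pi}^x_0\ket{w_0}}^2\le \frac{1}{w_+(x)}$. In particular, if $x\notin P_1$ then $w_+(x)=\infty$ and the claimed identity $\norm{\overline{\Pi}^x_0\ket{w_0}}^2=\frac{1}{w_+(x)}=0$ already follows; so for the rest assume $x\in P_1$.

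For the matching lower bound, let $\ket{w_x}$ be an optimal exact positive witness for $x$, so $\ket{w_x}\in H(x)$, $A\ket{w_x}=\tau$, and $\norm{\ket{w_x}}^2=w_+(x)$. The two facts I would use are: (i) any positive witness lies in $\ket{w_0}+\ker A\subseteq T$, since $A\ket{w_x}=\tau=A\ket{w_0}$; and (ii) $\ket{w_x}\in H(x)$ by definition. Hence $\Pi_T\ket{w_x}=\ket{w_x}$ and $\Pi_{H(x)}\ket{w_x}=\ket{w_x}$, so both reflections composing $U'(P,x)=(2\Pi_{H(x)}-I)(2\Pi_T-I)$ fix $\ket{w_x}$, i.e.\ $U'(P,x)\ket{w_x}=\ket{w_x}$, and thus $\ket{w_x}$ lies in the range of $\overline{\Pi}^x_0$. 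Projecting $\ket{w_0}$ onto the unit vector $\ket{w_x}/\norm{\ket{w_x}}$ in this range gives $\norm{\overline{\Pi}^x_0\ket{w_0}}^2\ge \abs{\braket{w_x}{w_0}}^2/\norm{\ket{w_x}}^2$.

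It remains to evaluate $\braket{w_x}{w_0}$. Writing $\ket{w_x}=\ket{w_0}+\ket{h}$ with $\ket{h}\in\ker A$, and using $\ket{w_0}=A^+\tau\in\mathrm{row}\,A=(\ker A)^\bot$, we get $\braket{w_0}{h}=0$, so $\braket{w_x}{w_0}=\norm{\ket{w_0}}^2=N_+(P)=1$ because $P$ is normalized. Substituting, $\norm{\overline{\Pi}^x_0\ket{w_0}}^2\ge \frac{1}{w_+(x)}$, which combined with the upper bound yields the desired equality. I do not expect a genuine obstacle here; the only points requiring care are invoking normalization so that $\braket{w_x}{w_0}$ is exactly $1$, and observing that every positive witness sits inside the subspace $T$ that defines the second reflection of $U'$.
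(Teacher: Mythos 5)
Your proposal is correct and follows essentially the same route as the paper: the upper bound is \lem{pi-theta-pos} with $\Theta=0$, and the lower bound comes from observing that the optimal positive witness $\ket{w_x}\in H(x)\cap T$ is fixed by $U'(P,x)$ and then projecting $\ket{w_0}$ onto it. Your formulation of the projection inequality as $\abs{\braket{w_x}{w_0}}^2/\norm{\ket{w_x}}^2$ is in fact slightly more careful than the paper's, and your explicit note that $\braket{w_x}{w_0}=\norm{\ket{w_0}}^2=1$ by normalization (using $\ket{w_0}\in(\ker A)^\perp$) fills in a step the paper leaves implicit.
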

\begin{proof} 
By \lem{pi-theta-pos}, $\norm{\overline{\Pi}^x_0\ket{w_0}}^2\leq \frac{1}{w_+(x)}$. 
Let $\ket{w_x}=\ket{w_0}+\ket{w_0^\bot}$ be an optimal zero-error positive witness for $x$. Since $\ket{w_x}\in H(x)\cap T$, $U'\ket{w_x}=\ket{w_x}$, so
$\norm{\overline{\Pi}^x_0\ket{w_0}}^2 \geq \frac{\braket{w_x}{w_0}}{\norm{\ket{w_x}}^2}\; \geq \; \frac{1}{w_+(x)}$.
\end{proof}

\subsection{Algorithms for Approximate Span Programs}\label{sec:alg-approx}

Using the spectral analysis from \sec{spectral}, we can design an algorithm that decides a function that is approximated by a span program. 
We will give details for the negative case, using \lemlem{pi-theta-neg}{pi-0-neg}. A nearly identical argument proves the analogous statement for the positive case, using \lemlem{pi-theta-pos}{pi-0-pos} instead. 

Throughout this section, fix a decision problem $f$ on $[q]^n$, and let $P$ be a normalized span program that negatively $\lambda$-approximates $f$.
By \lemlem{pi-0-neg}{pi-theta-neg}, it is possible to distinguish between the cases $f(x)=0$, in which $\frac{1}{w_-(x)}\geq \frac{1}{W_-}$, and $f(x)=1$, in which $\frac{1}{w_-(x)}\leq \frac{\lambda}{W_-}$ using phase estimation to sufficient precision, and amplitude estimation on a 0 in the phase register. We give details in the following theorem.

\begin{lemma}\label{lem:alg-approx-norm}
Let $P$ be a normalized $\lambda$-negative approximate span program for $f$. Then the quantum query complexity of $f$ is at most $O\(\frac{1}{(1-\lambda)^{3/2}}W_-\sqrt{\widetilde{W}_+}\log\frac{W_-}{1-\lambda}\)$.
\end{lemma}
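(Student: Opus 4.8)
The plan is to run phase estimation of the unitary $U(P,x)=(2\Pi_{\ker A}-I)(2\Pi_{H(x)}-I)$ on the input-independent state $\ket{w_0}$, flag whether the estimated phase is $0$, and apply \cor{amp-est-dec} to that flag to distinguish $f(x)=0$ from $f(x)=1$. The key point is that by \lem{pi-0-neg}, $\norm{\Pi_0^x\ket{w_0}}^2=1/w_-(x)$, and the hypothesis that $P$ negatively $\lambda$-approximates $f$ separates this quantity: if $f(x)=0$ then $x\in P_0$ with $w_-(x)\leq W_-$, so $\norm{\Pi_0^x\ket{w_0}}^2\geq 1/W_-$; if $f(x)=1$ then either $x\in P_1$ (so $w_-(x)=\infty$ and $\norm{\Pi_0^x\ket{w_0}}^2=0$) or $w_-(x)\geq W_-/\lambda$ (so $\norm{\Pi_0^x\ket{w_0}}^2\leq \lambda/W_-$).

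Let $\mathcal A$ be the circuit that prepares $\ket{w_0}$, runs phase estimation of $U(P,x)$ with precision $\Theta$ and error parameter $\delta$ (\thm{phase-est}), and sets an extra qubit to $\ket0$ exactly when the phase register holds $\ket0$ --- the single-output-bit form required by \cor{amp-est-dec}. Since the reflection $2\Pi_{\ker A}-I$ uses no queries and $2\Pi_{H(x)}-I$ uses $2$, $\mathcal A$ makes $O(\tfrac1\Theta\log\tfrac1\delta)$ queries. Decomposing $\ket{w_0}$ in the eigenbasis of $U(P,x)$ and splitting eigenvalues according to whether the phase is $0$, lies in $(0,\Theta]$, or exceeds $\Theta$, \thm{phase-est} together with $\norm{\ket{w_0}}=1$ gives, for the probability $p(x)$ that $\mathcal A$ flags $\ket0$,
$$\norm{\Pi_0^x\ket{w_0}}^2\ \leq\ p(x)\ \leq\ \norm{\Pi_0^x\ket{w_0}}^2+\norm{(\Pi_\Theta^x-\Pi_0^x)\ket{w_0}}^2+\delta.$$
The middle term is bounded using \lem{pi-theta-neg}: $\norm{(\Pi_\Theta^x-\Pi_0^x)\ket{w_0}}^2=\norm{\Pi_\Theta^x\ket{w_0}}^2-\norm{\Pi_0^x\ket{w_0}}^2\leq \tfrac{\Theta^2}{4}\tilde w_+(x)$, which is at most $\tfrac{\Theta^2}{4}\widetilde W_+$ when $f(x)=1$.

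Choosing $\Theta=\sqrt{(1-\lambda)/(W_-\widetilde W_+)}$ and $\delta=(1-\lambda)/(4W_-)$ makes $p(x)\geq 1/W_-=:p_0$ when $f(x)=0$, and $p(x)\leq \lambda/W_-+\tfrac{1-\lambda}{4W_-}+\tfrac{1-\lambda}{4W_-}=\tfrac{1+\lambda}{2W_-}=:p_1$ when $f(x)=1$, so $p_0-p_1\geq\tfrac{1-\lambda}{2W_-}$. By \cor{amp-est-dec}, $O(\sqrt{p_0}/(p_0-p_1))=O(\sqrt{W_-}/(1-\lambda))$ calls to $\mathcal A$ suffice, each costing $O(\tfrac1\Theta\log\tfrac1\delta)=O\(\sqrt{W_-\widetilde W_+/(1-\lambda)}\ \log\tfrac{W_-}{1-\lambda}\)$ queries, for a total of $O\(\tfrac{W_-\sqrt{\widetilde W_+}}{(1-\lambda)^{3/2}}\log\tfrac{W_-}{1-\lambda}\)$, as claimed.

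The one genuinely delicate step --- the ``hard part'' --- is controlling how much the small-but-nonzero-phase eigenspaces of $U(P,x)$ contribute to the flagged amplitude, and this is exactly the role of \lem{pi-theta-neg}; once that bound is in hand, the rest is merely choosing $\Theta$ and $\delta$ so that the $\Theta^2\widetilde W_+$ term and the phase-estimation error each sit well below the $(1-\lambda)/W_-$ gap. The remaining loose ends are just to confirm that $\mathcal A$ has the one-output-bit form \cor{amp-est-dec} expects (it does, by construction) and that the space overhead $O(s+\log\tfrac1\Theta)$ from \thm{phase-est} is harmless.
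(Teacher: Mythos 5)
Your proposal is correct and follows essentially the same approach as the paper: phase estimation of $U(P,x)$ on $\ket{w_0}$, using \lem{pi-0-neg} and \lem{pi-theta-neg} to separate the flagged amplitude in the two cases, then \cor{amp-est-dec} to distinguish them. The only differences are minor: you isolate the small-but-nonzero-phase contribution as $\norm{(\Pi_\Theta^x-\Pi_0^x)\ket{w_0}}^2$ (a slightly cleaner bookkeeping of the same bound) and you choose slightly different constants for $\Theta$ and $\delta$, which still yield the required gap $p_0-p_1=\Theta\((1-\lambda)/W_-\)$ and the same final complexity.
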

\begin{proof}
Let $U(P,x)=\sum_{j=1}^me^{i\theta_j}\ket{\psi_j}\bra{\psi_j}$, and let $\ket{w_0}=\sum_{j=1}^m\alpha_j\ket{\psi_j}$. Then applying phase estimation (\thm{phase-est}) to precision $\Theta=\sqrt{\frac{4(1-\lambda)}{3W_-\widetilde{W}_+}}$ and error $\eps=\frac{1}{6}\frac{1-\lambda}{W_-}$ produces a state
$\ket{w_0'}=\sum_{j=1}^m\alpha_j\ket{\psi_j}\ket{\omega_j}$
such that if $\theta_j=0$, then $\ket{\omega_j}=\ket{0}$, and if $|\theta_j|>\Theta$ then $|\braket{\omega_j}{0}|^2\leq \eps$.
Let $\Lambda_0$ be the projector onto states with $0$ in the phase register. We have: 
$\norm{\Lambda_0\ket{w_0'}}^2=\sum_{j=1}^m|\alpha_j|^2|\braket{0}{\omega_j}|^2.$ 
Suppose $x\in f^{-1}(0)$, so $\norm{\Pi_0^x\ket{w_0}}^2=\sum_{j:\theta_j=0}|\alpha_j|^2\geq \frac{1}{w_-(x)}$, by \lem{pi-0-neg}, and thus we have:
$$\norm{\Lambda_0\ket{w_0'}}^2\geq \sum_{j:\theta_j=0}|\alpha_j|^2|\braket{0}{0}|^2=\norm{\Pi_0^x\ket{w_0}}^2\geq\frac{1}{w_-(x)}\geq \frac{1}{W_-}=:p_0.$$
On the other hand, suppose $x\in f^{-1}(1)$. Since $P$ negatively $\lambda$-approximates $f$ and $x\in f^{-1}(1)$, $w_-(x,P)\geq \frac{1}{\lambda}W_+(x,P)$. By \lem{pi-theta-neg}, we have 
$$\norm{\Pi_\Theta^x\ket{w_0}}^2 \leq \frac{1}{w_-(x,P)}+\frac{\Theta^2}{4}\tilde{w}_+(x,P)
\leq \frac{\lambda}{W_-}+\frac{1-\lambda}{3W_-\widetilde{W}_+}\widetilde{W}_+=\frac{1}{3}\frac{1+2\lambda}{W_-}$$
and thus
$$\norm{\Lambda_0\ket{w_0'}}^2\leq \!\!\sum_{j:|\theta_j|\leq \Theta}\!\!|\alpha_j|^2+\!\!\sum_{j:|\theta_j|>\Theta}\!\!|\alpha_j|^2|\braket{\omega_j}{0}|^2=\norm{\Pi_\Theta^x\ket{w_0}}^2+\eps\!\!\sum_{j:|\theta_j|>\Theta}\!\!|\alpha_j|^2\leq \frac{1+2\lambda}{3W_-}+\frac{1-\lambda}{6W_-}=:p_1.$$
By \cor{amp-est-dec}, we can distinguish between these cases using $O\(\frac{\sqrt{p_0}}{p_0-p_1}\)$ calls to phase estimation, which costs $\frac{1}{\Theta}\log\frac{1}{\eps}$. In this case, we have
$$p_0-p_1=\frac{1-\frac{1}{3}-\frac{2}{3}\lambda-\frac{1}{6}+\frac{1}{6}\lambda}{W_-}=\frac{1}{2}\frac{1-\lambda}{W_-}.$$
The total number of calls to $U$ is:
$$\hspace{54pt}\frac{\sqrt{p_0}}{p_0-p_1}\frac{1}{\Theta}\log\frac{1}{\eps}=\frac{{W_-}}{\sqrt{W_-}(1-\lambda)}\sqrt{\frac{W_-\widetilde{W}_+}{1-\lambda}}\log\frac{W_-}{1-\lambda}=\frac{W_-\sqrt{\widetilde{W}_+}}{(1-\lambda)^{3/2}}\log \frac{W_-}{1-\lambda}.\hspace{39pt}\qedhere$$
\end{proof}

In addition to wanting to extend this to non-normalized span programs, we note that this expression is not symmetric in the positive and negative error. Using \thm{scaling}, we can normalize  any span program, while also scaling the positive and negative witnesses. 
This gives us the following.

\begin{corollary}\label{cor:alg-approx-neg}
Let $P$ be any span program that negatively $\lambda$-approximates $f$. Then the quantum query complexity of $f$ is at most $O\(\frac{1}{(1-\lambda)^{3/2}}\sqrt{W_-(f,P)\widetilde{W}_+(f,P)}\log\frac{1}{1-\lambda}\)$.
\end{corollary}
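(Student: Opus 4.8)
The plan is to reduce the general (non-normalized) case to the normalized case already handled in \lem{alg-approx-norm}, by applying the scaling construction of \thm{scaling} with an appropriately chosen parameter $\beta$. Concretely, given a span program $P$ that negatively $\lambda$-approximates $f$, I would pass to $P^\beta$, which by \thm{scaling} is normalized (its minimal positive witness has norm $1$), and is still a legitimate span program on $[q]^n$ with the same partition of $[q]^n$ up to the bookkeeping vectors $\ket{\hat0},\ket{\hat1}$. I would first check that $P^\beta$ still negatively $\lambda$-approximates $f$: for $x\in f^{-1}(1)$ we have either $x\in P_1$ (hence $x\in P^\beta_1$, since $w_+(x,P^\beta)$ is finite iff $w_+(x,P)$ is) or $w_-(x,P)\ge\frac1\lambda W_-(f,P)$, and the affine relation $w_-(x,P^\beta)=\beta^2 w_-(x,P)+1$ is monotone in $w_-(x,P)$, so the $\lambda$-approximation inequality $w_-(x,P^\beta)\ge\frac1\lambda W_-(f,P^\beta)$ follows (using that $W_-(f,P^\beta)=\beta^2W_-(f,P)+1$ and $\frac1\lambda(\beta^2 W_-+1)\le \beta^2\frac1\lambda W_- + \frac1\lambda$, together with $\lambda<1$). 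The same inequality must be verified for the positive side to conclude $P^\beta$ negatively $\lambda$-approximates $f$, but the only nontrivial content is the monotone affine dependence, which \thm{scaling} supplies.

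Next I would substitute the scaled witness sizes into the bound of \lem{alg-approx-norm}. From \thm{scaling}, $W_-(f,P^\beta)=\beta^2 W_-(f,P)+1$ and $\widetilde W_+(f,P^\beta)\le \frac1{\beta^2}\widetilde W_+(f,P)+2$. Plugging these into
$$O\!\left(\frac{1}{(1-\lambda)^{3/2}}\,W_-(f,P^\beta)\sqrt{\widetilde W_+(f,P^\beta)}\,\log\frac{W_-(f,P^\beta)}{1-\lambda}\right)$$
gives, for the leading factor, $\big(\beta^2 W_- + 1\big)\sqrt{\frac1{\beta^2}\widetilde W_+ + 2}$. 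The idea is to choose $\beta$ so that the two terms in $\beta^2 W_-+1$ are comparable and similarly inside the square root — i.e. $\beta^2 = \Theta\!\big(\sqrt{\widetilde W_+/W_-}\big)$, or more simply $\beta^2 = \sqrt{\widetilde W_+/W_-}$ (assuming both quantities are nonzero; the degenerate cases $W_-=0$ or $\widetilde W_+ = 0$ must be noted separately, though $W_-\ge 1$ after normalization-type arguments, and in any case the degenerate situations are trivial). With this choice one computes $\beta^2 W_- = \sqrt{W_-\widetilde W_+}$ and $\frac1{\beta^2}\widetilde W_+ = \sqrt{W_-\widetilde W_+}$, so the leading factor becomes $\big(\sqrt{W_-\widetilde W_+}+1\big)\sqrt{\sqrt{W_-\widetilde W_+}+2} = O\!\big(\sqrt{W_-\widetilde W_+}\big)$ provided $\sqrt{W_-\widetilde W_+}\ge 1$ (and otherwise the whole bound is $O(1)$ and there is nothing to prove). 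Finally, the logarithmic factor: $\log\frac{W_-(f,P^\beta)}{1-\lambda} = \log\frac{\beta^2 W_- + 1}{1-\lambda}$, which is $\log\frac{1}{1-\lambda} + O(\log(W_-\widetilde W_+))$; the $\log(W_-\widetilde W_+)$ part is absorbed into the $O(\cdot)$ in front (or into $\tO$), leaving $O\!\big(\frac{1}{(1-\lambda)^{3/2}}\sqrt{W_-(f,P)\widetilde W_+(f,P)}\log\frac1{1-\lambda}\big)$ as claimed.

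The main obstacle I anticipate is not any single hard step but rather the careful verification that $P^\beta$ genuinely inherits the negative-$\lambda$-approximation property from $P$ with the \emph{same} $\lambda$ — one has to confirm that applying an affine map $t\mapsto \beta^2 t + 1$ simultaneously to every $w_-(x,P)$ and to $W_-(f,P)$ preserves the inequality $w_-(x)\ge \frac1\lambda W_-$, which works precisely because $\frac1\lambda\cdot 1 \ge 1$ so the additive constant only helps. A secondary bookkeeping point is handling the edge cases (some witness size equal to $0$ or $\infty$, or $\sqrt{W_-\widetilde W_+}<1$) so that the $O(\cdot)$ claim is literally correct, and making sure the additive constant $2$ inside $\sqrt{\cdot}$ and the additive $1$ outside do not spoil the asymptotics — both are handled by the observation that the bound is only meaningful when $\sqrt{W_-\widetilde W_+}$ is at least a constant. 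The positive-approximation analogue stated in \thm{approx-alg} follows symmetrically, using \lemlem{pi-theta-pos}{pi-0-pos} and the other half of \thm{scaling}.
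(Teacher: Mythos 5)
Your high-level plan (scale via \thm{scaling}, then apply \lem{alg-approx-norm}) is the same as the paper's, but the execution contains two genuine errors, one of which is fatal to the bound.

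\paragraph{The approximation parameter is not preserved.} You claim that $P^\beta$ negatively $\lambda$-approximates $f$ with the \emph{same} $\lambda$, reasoning that ``the additive constant only helps.'' This is backwards. One needs $w_-(x,P^\beta)\ge\frac1\lambda W_-(f,P^\beta)$, i.e.\ $\beta^2 w_-(x,P)+1 \ge \frac{\beta^2}{\lambda}W_-(f,P)+\frac1\lambda$. Since $w_-(x,P)\ge\frac1\lambda W_-(f,P)$, the first terms compare correctly, but the additive constants require $1\ge\frac1\lambda$, which is \emph{false} for $\lambda<1$. The scaling genuinely worsens the approximation parameter: the paper computes that $P^\beta$ only $\lambda^{(\beta)}$-approximates $f$ with $\lambda^{(\beta)}=\frac{\beta^2 W_-+1}{\beta^2 W_-/\lambda+1}>\lambda$, and then shows $\frac{1}{1-\lambda^{(\beta)}}=O\big(\frac{1}{1-\lambda}\big)$ for its choice of $\beta$. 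This step must be carried out; it is not free.

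\paragraph{The choice of $\beta$ gives the wrong exponent.} The bound in \lem{alg-approx-norm} is \emph{asymmetric}: it is $O\big(\frac{1}{(1-\lambda)^{3/2}}\,W_-\sqrt{\widetilde{W}_+}\,\log\frac{W_-}{1-\lambda}\big)$, with $W_-$ to the first power but $\widetilde{W}_+$ under a square root. Your choice $\beta^2=\sqrt{\widetilde{W}_+/W_-}$ balances $\beta^2 W_-$ and $\widetilde{W}_+/\beta^2$, but that is the wrong thing to balance given the asymmetry. Writing $x=\sqrt{W_-\widetilde{W}_+}$, your leading factor is $(x+1)\sqrt{x+2}=\Theta(x^{3/2})=\Theta\big((W_-\widetilde{W}_+)^{3/4}\big)$, which is \emph{not} $O\big(\sqrt{W_-\widetilde{W}_+}\big)$ as you assert; the claimed equality $(\sqrt{W_-\widetilde{W}_+}+1)\sqrt{\sqrt{W_-\widetilde{W}_+}+2}=O(\sqrt{W_-\widetilde{W}_+})$ is simply false for large arguments. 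The correct choice, which the paper makes, is $\beta=1/\sqrt{W_-(f,P)}$: this forces $W_-(f,P^\beta)=\beta^2 W_-+1=2$, a constant, so the outer $W_-$ factor and the $\log W_-$ factor both become $O(1)$, and what survives is exactly $2\sqrt{W_-\widetilde{W}_+ + 2}=O(\sqrt{W_-\widetilde{W}_+})$. Relatedly, your proposal to ``absorb'' a $\log(W_-\widetilde{W}_+)$ factor into the $O(\cdot)$ is not legitimate — the corollary's bound has no $\tO$ and cannot hide a growing log — whereas with $\beta=1/\sqrt{W_-}$ that log is $\log 2 = O(1)$.
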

\begin{proof}
We will use the scaled span program described in \thm{scaling}. 
Let $\beta=\frac{1}{\sqrt{W_-(f,P)}}$. Then $P^\beta$ is a normalized span program with 
$$W_-(f,P^\beta)=\max_{x\in f^{-1}(0)}w_-(x,P^\beta)= {\beta^2} \max_{x\in f^{-1}(0)}w_-(x,P)+1= \frac{1}{W_-}W_-+1=2,$$
$$\mbox{and}\quad\widetilde{W}_+(f,P^\beta)=\max_{x\in f^{-1}(1)}\tilde{w}_+(x,P^\beta)\leq \frac{1}{\beta^2} \max_{x\in f^{-1}(1)}\tilde{w}_+(x,P)+2= W_-(f,P)\widetilde{W}_+(f,P)+2.$$
If we define $\lambda^{(\beta)}:=\frac{\max_{x\in f^{-1}(0)}w_-(x,P^\beta)}{\min_{x\in f^{-1}(1)}w_-(x,P^\beta)}=\frac{\beta^2W_-(f,P)+1}{\beta^2\frac{1}{\lambda}W_-(f,P)+1}=\frac{2}{\frac{1}{\lambda}+1}$, then clearly $P^{\beta}$ negatively $\lambda^{(\beta)}$-approximates $f$, so we can apply \lem{alg-approx-norm}. We have 
$\frac{1}{1-\lambda^{(\beta)}}=\frac{1}{1-\frac{2\lambda}{1+\lambda}}=\frac{1+\lambda}{1-\lambda}$
 so we can decide $f$ in query complexity (neglecting constants):
$$\!\(\frac{1+\lambda}{1-\lambda}\)^{\frac{3}{2}}\!\!\sqrt{2\(W_-(f,P)\widetilde{W}_+(f,P)+2\)}\log2\frac{1+\lambda}{1-\lambda}=
\frac{1}{(1-\lambda)^{\frac{3}{2}}}\sqrt{W_-(f,P)\widetilde{W}_+(f,P)}\log\frac{1}{1-\lambda}.\qedhere$$
\end{proof}

By computations analogous to \lem{alg-approx-norm} and \cor{alg-approx-neg} (using $\beta=\sqrt{W_+}$), we can show that if $P$ positively $\lambda$-approximates $f$, then $f$ has quantum query complexity 
 $O\(\frac{1}{(1-\lambda)^{3/2}}\sqrt{W_+\widetilde{W}_-}\log\frac{1}{1-\lambda}\)$. This and \cor{alg-approx-neg} imply \thm{approx-alg}.

\subsection{Estimating the Witness Size}\label{sec:alg-est}

Using the algorithms for deciding approximate span programs (\thm{approx-alg}) as a black box, we can construct a quantum algorithm that estimates the positive or negative witness size of an input using standard algorithmic techniques. We give the full proof for the case of positive witness size, as negative witness size is virtually identical. This proves \thm{est-alg}.

\begin{theorem}[Estimating the Witness Size]\label{thm:witness-est}
Fix $f:X\subseteq [q]^n\rightarrow \mathbb{R}_{>0}$. Let $P$ be a span program on $[q]^n$ such that for all $x\in X$, $f(x)=w_+(x,P)$. Then the quantum query complexity of estimating $f$ to accuracy $\eps$ is $\tO\(\frac{\sqrt{w_+(x)\widetilde{W}_-(P)}}{\eps^{3/2}}\)$.  
\end{theorem}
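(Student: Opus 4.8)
The plan is to reduce estimation of $w_+(x)$ to a logarithmic number of calls to the decision algorithm of \thm{approx-alg} (specifically, to the positive case of \cor{alg-approx-neg}, run on suitable threshold problems), via a standard ``binary search on scale'' argument. First I would observe that since $f(x)=w_+(x)\in(0,\infty)$ for all $x\in X$, we can assume known a priori bounds $0<w_{\min}\le w_+(x)\le w_{\max}$; if not, we first locate the scale of $w_+(x)$ by doubling: for $k=0,1,2,\dots$ test whether $w_+(x)\le 2^k w_{\min}$ using the decision procedure below with constant separation $\lambda=1/2$, stopping at the first success. The number of such tests is $O(\log(w_+(x)/w_{\min}))=\tO(1)$, and the $k$-th test has cost dominated by the largest, so the total is within the claimed bound.

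The core step: for a target threshold $t>0$ and a constant ratio $\lambda\in(0,1)$, define $g_{t,\lambda}:X\to\{0,1\}$ by $g_{t,\lambda}(x)=1$ if $w_+(x)\le t$ and $g_{t,\lambda}(x)=0$ if $w_+(x)\ge t/\lambda$ (promise problem). I claim $P$ positively $\lambda$-approximates $g_{t,\lambda}$ with $W_+(g_{t,\lambda},P)\le t$: indeed if $g_{t,\lambda}(x)=1$ then $x\in P_1$ and $w_+(x)\le t$, while if $g_{t,\lambda}(x)=0$ then $w_+(x)\ge t/\lambda=\frac{1}{\lambda}W_+$ (using $W_+\le t$) — which is exactly \defin{approx}. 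Also $\widetilde W_-(g_{t,\lambda},P)\le \widetilde W_-(P):=\max_{x\in X}\tilde w_-(x,P)$. So by \thm{approx-alg} (positive case), $g_{t,\lambda}$ is decided in $O\big((1-\lambda)^{-3/2}\sqrt{t\,\widetilde W_-}\,\log\frac{1}{1-\lambda}\big)$ queries. Fixing, say, $\lambda=1/2$ gives a constant-factor ``is $w_+(x)$ below $t$ or above $2t$'' test costing $O(\sqrt{t\,\widetilde W_-})$ queries.

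With this test in hand, I would run a two-phase procedure. Phase one (coarse): use the doubling search with the $\lambda=1/2$ test to find an integer $k^*$ with $w_+(x)\in[2^{k^*-1}w_{\min},\,2^{k^*+1}w_{\min}]$, i.e. pin $w_+(x)$ to within a constant factor; cost $\tO(\sqrt{w_+(x)\widetilde W_-})$ since the geometric sum of $\sqrt{2^k w_{\min}\widetilde W_-}$ up to $k=k^*$ is dominated by its last term. Phase two (refine to relative accuracy $\eps$): let $t_0\approx w_+(x)$ be the constant-factor estimate, and binary-search over $O(\log\frac{1}{\eps})$ thresholds $t$ in the interval $[t_0/c,\,c\,t_0]$ using the decision test $g_{t,\lambda}$ with separation $1-\lambda=\Theta(\eps)$; each such test costs $O\big(\eps^{-3/2}\sqrt{t\,\widetilde W_-}\,\log\frac1\eps\big)=\tO\big(\eps^{-3/2}\sqrt{w_+(x)\widetilde W_-}\big)$, and there are $\tO(1)$ of them, yielding the claimed $\tO\big(\eps^{-3/2}\sqrt{w_+(x)\widetilde W_-}\big)$ total. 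I also need to boost the success probability of each of the $\tO(1)$ decision calls to $1-O(1/\mathrm{polylog})$ by $O(\log\log)$-fold repetition and majority vote, which is absorbed into the $\tO$.

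The main obstacle I anticipate is bookkeeping the error parameter carefully in phase two: to get \emph{relative} accuracy $\eps$ one must run the threshold tests with gap $1-\lambda=\Theta(\eps)$, and then the $(1-\lambda)^{-3/2}=\Theta(\eps^{-3/2})$ factor from \thm{approx-alg} is exactly what produces the (suboptimal in $\eps$) $\eps^{-3/2}$ in the statement — so the key is to confirm that a single scale of tests at gap $\Theta(\eps)$, rather than a full cascade, suffices, which it does because phase one already localizes $w_+(x)$ to a constant factor. A secondary subtlety is that the threshold problems are promise problems, so I must ensure every test is only ever invoked on inputs satisfying its promise; this is guaranteed inductively by the localization from the previous test. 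Everything else is the routine ``search-to-decision'' reduction and geometric-series cost accounting.
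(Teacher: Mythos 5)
Your approach is correct and is essentially the same search-to-decision reduction the paper uses, but with a different search structure, so a brief comparison is warranted. The paper's proof runs a single ternary-search loop on the reciprocal $e(x)=1/w_+(x)\in[0,1]$ (after normalizing $P$), shrinking the candidate interval by a factor $2/3$ each round, so that the promise gap $1-\lambda_i$ for round~$i$ naturally interpolates from $\Theta(1)$ down to $\Theta(\eps)$; the cost per round is then shown to form a geometric series dominated by the last round, which yields the $\eps^{-3/2}$ scaling. Your two-phase decomposition --- a coarse doubling search at constant gap to localize $w_+(x)$ to a constant factor, then $O(\log\frac{1}{\eps})$ threshold tests all at gap $\Theta(\eps)$ --- reaches the same conclusion in a slightly more modular way, and the same geometric-sum calculation justifies phase~one. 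The modest inefficiency of running all phase-two tests at the finest gap (rather than letting it shrink gradually) only costs an extra $\log\frac{1}{\eps}$ factor, which $\tO$ absorbs, so the bound is unaffected. Both approaches also agree on the point you flag about success-probability amplification: the paper amplifies round~$i$ to error $\Theta((2/3)^i)$ so the union bound telescopes, while your $O(\log\log)$ repetition per call works equally well since there are only $\tO(1)$ calls.

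One remark worth making precise: you write that ``every test is only ever invoked on inputs satisfying its promise,'' and that this is ``guaranteed inductively by the localization from the previous test.'' This is not actually achievable, and fortunately is not needed. Because the location of $w_+(x)$ within the current interval is unknown, any threshold test with $t<w_+(x)<t/\lambda$ can be invoked with its promise violated, and its output is then undetermined. The correct argument --- which is what the paper's $\mathtt{WitnessEstimate}$ loop establishes via the invariant $e_{\min}^{(i)}\leq 1/w_+(x)\leq e_{\max}^{(i)}$ --- is that the interval update is \emph{robust} to promise violations: a $1$-output always certifies $w_+(x)<t/\lambda$ and a $0$-output always certifies $w_+(x)>t$, and both certificates hold trivially when the promise is violated. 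Your doubling search and binary search are in fact robust in exactly this sense, so the proof goes through; but the justification should be phrased as robustness to promise violations, not as maintenance of the promise.
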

\begin{proof}
We will estimate $e(x)=\frac{1}{w_+(x)}$. The basic idea is to use the algorithm from \thm{approx-alg} to narrow down the interval in which the value of $e(x)$ may lie. Assuming that the span program is normalized (which is without loss of generality, since normalizing by scaling $\tau$ does not impact relative accuracy) we can begin with the interval $[0,1]$. We stop when we reach an interval $[e_{\min},e_{\max}]$ such that the midpoint $\tilde{e}=\frac{e_{\max}+e_{\min}}{2}$ satisfies $(1-\eps)e_{\max}\leq \tilde{e}\leq (1+\eps)e_{\min}$.

Let $\mathtt{Decide}(P,w,\lambda)$ be the quantum algorithm from \thm{approx-alg} that decides the (partial) function $g:P_1\rightarrow\{0,1\}$ defined by $g(x)=1$ if $w_+(x)\leq w$ and $g(x)=0$ if $w_+(x)\geq \frac{w}{\lambda}$. We will amplify the success probability so that with high probability, $\mathtt{Decide}$ returns $g(x)$ correctly every time it is called by the algorithm, and we will assume that this is the case. The full witness estimation algorithm consists of repeated calls to $\mathtt{Decide}$ as follows:
\vspace{2pt}
\hrule
\vspace{2pt}
\noindent$\mathtt{WitnessEstimate}(P,\eps)$:
\vspace{2pt}
\hrule
\begin{enumerate}
\item $e_{\max}^{(1)}=1$, $e_{\min}^{(1)}=0$, $e_1^{(1)}=\frac{2}{3}$, $e_0^{(1)}=\frac{1}{3}$
\item For $i=1,2,\dots$ repeat:
	\begin{enumerate}
	\item Run $\mathtt{Decide}(P,w,\lambda)$ with $w={1}/{e_1^{(i)}}$ and $\lambda={e_0^{(i)}}/{e_1^{(i)}}$. 
	\item If $\mathtt{Decide}$ outputs $1$, indicating $w_+(x)\leq w$, set $e_{\max}^{(i+1)}=e_{\max}^{(i)}$ and $e_{\min}^{(i+1)}=e_0^{(i)}$.
	\item Else, set $e_{\min}^{(i+1)}=e_{\min}^{(i)}$ and $e_{\max}^{(i+1)}=e_1^{(i)}$.
	\item If $e_{\max}^{(i+1)}\leq (1+\eps)e_{\min}^{(i+1)}$, return $\tilde{e}=\frac{e_{\max}^{(i+1)}+e^{(i+1)}_{\min}}{2}$. 
	\item Else, set $e_1^{(i+1)}=\frac{2}{3}e_{\max}^{(i+1)}+\frac{1}{3}e_{\min}^{(i+1)}$ and $e_0^{(i+1)}=\frac{1}{3}e_{\max}^{(i+1)}+\frac{2}{3}e_{\min}^{(i+1)}$.
	\end{enumerate}
\end{enumerate}
\hrule
\vspace{2pt}
We can see by induction that for every $i$, $e_{\min}^{(i)}\leq\frac{1}{w_+(x)}\leq e_{\max}^{(i)}$. This is certainly true for $i=1$, since $w_+(x)\geq \norm{\ket{w_0}}^2=1$. Suppose it's true at step $i$. At step $i$ we run $\mathtt{Decide}(P,w_i,\lambda_i)$ with $w_i={1}/{e_1^{(i)}}$ and $\frac{w_i}{\lambda_i}={1}/{e_0^{(i)}}$. If $\frac{1}{w_+(x)}\geq e_1^{(1)}$, then $\mathtt{Decide}$ returns 1, so we have $\frac{1}{w_+(x)}\in [e_{0}^{(i)},e_{\max}^{(i)}]=[e_{\min}^{(i+1)},e_{\max}^{(i+1)}]$. 
If $\frac{1}{w_+(x)}\leq e_0^{(i)}$, then $\mathtt{Decide}$ returns $0$, so we have $\frac{1}{w_+(x)}\in [e_{\min}^{(i)},e_{1}^{(i)}]=[e_{\min}^{(i+1)},e_{\max}^{(i+1)}]$. Otherwise, $\frac{1}{w_+(x)}\in [e_0^{(i)},e_1^{(i)}]$, which is a subset of both $[e_{0}^{(i)},e_{\max}^{(i)}]$ and $[e_{\min}^{(i)},e_{1}^{(i)}]$, so in any case, $\frac{1}{w_+(x)}\in [e_{\min}^{(i+1)},e_{\max}^{(i+1)}]$. 

To see that the algorithm terminates, let $\Delta_i=e_{\max}^{(i)}-e_{\min}^{(i)}$ denote the length of the remaining interval at round $i$. We either have $\Delta_{i+1}=e_{\max}^{(i)}-e_0^{(i)}=e_{\max}^{(i)}-\frac{1}{3}e_{\max}^{(i)}-\frac{2}{3}e_{\min}^{(i)}=\frac{2}{3}\Delta_i$, or $\Delta_{i+1}=e_1^{(i)}-e_{\min}^{(i)}=\frac{2}{3}e_{\max}^{(i)}+\frac{1}{3}e_{\min}^{(i)}-e_{\min}^{(i)}=\frac{2}{3}\Delta_i$, so $\Delta_i=(2/3)^{i-1}$. We terminate at the smallest $T$ such that $(2/3)^{T-1}=\Delta_T=e_{\max}^{(T)}-e_{\min}^{(T)}\leq (1+\eps-1) e_{\min}^{(T)}\leq \frac{\eps}{w_+(x)}$. Thus we terminate before $T=\ceil{\log_{3/2}\frac{w_+(x)}{\eps}+1}$. 

Next, we show that, assuming $\mathtt{Decide}$ does not err, the estimate is correct to within $\eps$. 
Let $\tilde{e}=\frac{1}{2}(e^{(T)}_{\max}+e^{(T)}_{\min})$ be the returned estimate. Recall that we only terminate when $e_{\max}^{(T)}\leq (1+\eps)e_{\min}^{(T)}$. We have 
$$\frac{1}{\tilde{e}}=\frac{2}{e_{\max}^{(T)}+e_{\min}^{(T)}}\leq \frac{2}{e_{\max}^{(T)}\(1+\frac{1}{1+\eps}\)}\leq \frac{2}{\frac{1}{w_+(x)}\(\frac{2+\eps}{1+\eps}\)}\leq \(1+\eps\)w_+(x),$$
$$\mbox{and}\qquad \frac{1}{\tilde{e}}\geq \frac{2}{e_{\min}(1+1+\eps)}\geq \frac{1}{\frac{1}{w_+(x)}(1+\eps/2)}=\(1-\frac{\eps/2}{1+\eps/2}\)w_+(x)\geq \(1-\frac{\eps}{2}\)w_+(x).$$
Thus, $|1/\tilde{e}-w_+(x)|\leq \eps w_+(x)$.

By \thm{approx-alg}, $\mathtt{Decide}(P,w,\lambda)$ runs in cost $O\(\frac{\sqrt{w \widetilde{W}_-}}{(1-\lambda)^{3/2}}\log\frac{1}{1-\lambda}\)$. Let $w_i=1/e_1^{(i)}$ and $\lambda_i=e_0^{(i)}/e_1^{(i)}$ be the values used at the \th{$i$} iteration. Since $e_1^{(i)}\leq e_{\max}^{(i)}\leq \frac{1}{w_+(x)}+\Delta_i$, we have 
$$\frac{1}{1-\lambda_i}=\frac{e_1^{(i)}}{e_1^{(i)}-e_0^{(i)}}\leq \frac{\frac{1}{w_+(x)}+\Delta_i}{\frac{2}{3}e_{\max}^{(i)}+\frac{1}{3}e_{\min}^{(i)}-\frac{1}{3}e_{\max}^{(i)}-\frac{2}{3}e_{\min}^{(i)}}= \frac{3}{w_+(x)\Delta_i}+3=O(1/\eps),$$
since $\Delta_i=(2/3)^{i-1}\geq (2/3)^{T-1}=\Omega\(\frac{\eps}{w_+(x)}\)$. 
Observe $\frac{\sqrt{w_i}}{(1-\lambda_i)^{3/2}}=\frac{e_1^{(i)}}{(e_1^{(i)}-e_0^{(i)})^{3/2}}\leq \(\frac{1}{w_+(x)}+\Delta_i\)\frac{3}{\Delta_i^{3/2}},$
so, ignoring the $\log\frac{1}{1-\lambda_i}=O(\log\frac{1}{\eps})$ factor, the cost of the \th{$i$} iteration can be computed as:
$$C_i=\frac{\sqrt{w_i\widetilde{W}_-}}{(1-\lambda_i)^{3/2}}
\leq  \sqrt{\widetilde{W}_-}\(\frac{1}{w_+(x)}+\Delta_i\)\frac{3}{\Delta_i^{3/2}}=3\frac{\sqrt{\widetilde{W}_-}}{w_+(x)}\(\frac{3}{2}\)^{\frac{3}{2}(i-1)}+3\sqrt{\widetilde{W}_-}\(\frac{3}{2}\)^{\frac{1}{2}(i-1)}.$$
We can thus compute the total cost (neglecting logarithmic factors):
\begin{eqnarray*}
\sum_{i=1}^TC_i\! &\leq & \!\frac{\sqrt{\widetilde{W}_-}}{w_+(x)}\sum_{i=1}^T\(\frac{3}{2}\)^{\frac{3}{2}(i-1)}+\sqrt{\widetilde{W}_-}\sum_{i=1}^T\({\frac{3}{2}}\)^{\frac{1}{2}(i-1)}
\;\leq\; \frac{\sqrt{\widetilde{W}_-}}{w_+(x)}\frac{\(\frac{3}{2}\)^{\frac{3}{2}T}-1}{\(\frac{3}{2}\)^{3/2}-1}+\sqrt{\widetilde{W}_-}\frac{\(\frac{3}{2}\)^{\frac{1}{2}T}-1}{\(\frac{3}{2}\)^{1/2}-1}\\
&\leq & \!O\(\frac{\sqrt{\widetilde{W}_-}}{w_+(x)}\(\frac{w_+(x)}{\eps}\)^{3/2}+\sqrt{\widetilde{W}_-}\(\frac{w_+(x)}{\eps}\)^{1/2}\)\; = \; O\(\frac{\sqrt{\widetilde{W}_-w_+(x)}}{\eps^{3/2}}\),
\end{eqnarray*}
using the fact that $(2/3)^{T}=\Theta\(\frac{\eps}{w_+(x)}\)$.  

Finally, we have been assuming that $\mathtt{Decide}$ returns the correct bit on every call. We now justify this assumption. At round $i$, we will amplify the success probability of $\mathtt{Decide}$ to $1-\frac{1}{9}(2/3)^{i-1}$, incurring a factor of $\log(9(3/2)^{i-1})=O(\log\frac{w_+(x)}{\eps})$ in the complexity. Then the total error is at most:
$$\sum_{i=1}^T \frac{1}{9}(2/3)^{i-1}=\frac{1}{9}\frac{1-(2/3)^{T-1}}{1-\frac{2}{3}}=\frac{1}{3}\(1-\frac{\eps}{w_+(x)}\)\leq \frac{1}{3}.$$
Thus, with probability at least $2/3$, $\mathtt{Decide}$ never errs, and the algorithm is correct. 
\end{proof}

\subsection{Span Program Phase Gap}\label{sec:alg-gap}

The scaling in the error from \thm{witness-est}, ${1}/{\eps^{3/2}}$, is not ideal. For instance, we showed in \sec{example} how to construct a quantum algorithm for approximate counting based on a simple span program for the OR function with complexity that scales like ${1}/{\eps^{3/2}}$ in the error, whereas the best quantum algorithm for this task has complexity scaling as ${1}/{\eps}$ in the error. 
However, the following theorem, which is a corollary to \lem{pi-0-neg} and \lem{pi-0-pos}, gives an alternative analysis of the complexity of the algorithm in \thm{witness-est} that may be better in some cases, and in particular, has the more natural error dependence ${1}/{\eps}$.

\begin{theorem}\label{thm:gap-improvement}
Fix $f:X\subseteq [q]^n\rightarrow \mathbb{R}_{>0}$. Let $P$ be a normalized span program on $[q]^n$ such that $X\subseteq P_0$, and for all $x\in X$, $w_-(x,P)=f(x)$; and define $\Delta(f)=\min_{x\in X}\Delta(U(P,x))$. Then there is a quantum algorithm that estimates $f$ to relative accuracy $\eps$ using $\tO\(\frac{1}{\eps}\frac{\sqrt{w_-(x,P)}}{\Delta(f)}\)$ queries. Similarly, let $P$ be a normalized span program such that $X\subseteq P_1$, and for all  $x\in X$, $w_+(x,P)=f(x)$; and define $\Delta'(f)=\min_{x\in X}\Delta(U'(P,x))$. Then there is a quantum algorithm that estimates $f$ with relative accuracy $\eps$ using $\tO\(\frac{1}{\eps}\frac{\sqrt{w_+(x,P)}}{\Delta'(f)}\)$ queries. 
\end{theorem}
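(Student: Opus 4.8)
The plan is to adapt the proof of \thm{witness-est} (equivalently \thm{est-alg}): we keep the outer binary-search procedure $\mathtt{WitnessEstimate}$ verbatim, but replace its inner $\mathtt{Decide}$ subroutine — which there costs $\tO(\sqrt{w\widetilde W_-}/(1-\lambda)^{3/2})$ via \thm{approx-alg} — by a cheaper one that exploits the genuine phase-gap lower bound $\Delta(f)$. I describe the negative case; the positive case is completely symmetric, using $U'(P,x)$, $\Delta'(f)$ and \lem{pi-0-pos} in place of $U(P,x)$, $\Delta(f)$ and \lem{pi-0-neg}. Since $P$ is normalized, $\ket{w_0}$ is a unit vector, it is input-independent (so it is prepared with $0$ queries), and $w_-(x)\geq\norm{\ket{w_0}}^2=1$; hence by \lem{pi-0-neg} the quantity $e(x):=1/w_-(x)=\norm{\Pi_0^x\ket{w_0}}^2$ lies in $(0,1]$, and estimating $e(x)$ to relative accuracy $\epsilon/2$ yields $w_-(x)$ to relative accuracy $\epsilon$.

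First I would build a core subroutine $\mathcal{A}_\delta$: prepare $\ket{w_0}$, run phase estimation (\thm{phase-est}) on $U(P,x)$ with precision parameter $\Theta=\Delta(f)$ and error $\delta$, and then flag whether the phase register holds $\ket{0}$. The key observation is that, because $\Delta(U(P,x))\geq\Delta(f)=\Theta$, the unitary $U(P,x)$ has \emph{no} eigenphase strictly between $0$ and $\Theta$; hence phase estimation gives $\ket{0}$-overlap exactly $1$ on the $1$-eigenspace and $\ket{0}$-overlap at most $\delta$ on every other eigenvector. Consequently $\mathcal{A}_\delta$ outputs a state of the form $\sqrt{p(x)}\ket{0}\ket{\cdot}+\sqrt{1-p(x)}\ket{1}\ket{\cdot}$ with $e(x)\leq p(x)\leq e(x)+\delta$, at a cost of $O(\tfrac{1}{\Delta(f)}\log\tfrac1\delta)$ controlled calls to $U(P,x)$, i.e. $O(\tfrac{1}{\Delta(f)}\log\tfrac1\delta)$ queries. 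The improved $\mathtt{Decide}$ for thresholds $e_0<e_1$ (with the promise $e(x)\geq e_1$ or $e(x)\leq e_0$) then runs $\mathcal{A}_\delta$ with $\delta=(e_1-e_0)/4$ and applies the amplitude-gap corollary \cor{amp-est-dec} to distinguish $p(x)\geq e_1$ from $p(x)\leq e_0+(e_1-e_0)/4$; this uses $O(\sqrt{e_1}/(e_1-e_0))$ calls to $\mathcal{A}_\delta$, i.e. $\tO(\sqrt{e_1}\,/((e_1-e_0)\Delta(f)))$ queries.

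Next I would plug this $\mathtt{Decide}$ into $\mathtt{WitnessEstimate}$ from the proof of \thm{witness-est} unchanged: it maintains an interval $[e_{\min}^{(i)},e_{\max}^{(i)}]\ni e(x)$ starting from $[0,1]$, with $\Delta_i:=e_{\max}^{(i)}-e_{\min}^{(i)}=(2/3)^{i-1}$, queries $\mathtt{Decide}$ with $e_1=e_1^{(i)}$, $e_0=e_0^{(i)}$ where $e_1^{(i)}-e_0^{(i)}=\Delta_i/3$, and halts after $T=O(\log\tfrac{w_-(x)}{\epsilon})$ rounds once $e_{\max}^{(T)}\leq(1+\epsilon)e_{\min}^{(T)}$. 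Correctness (the interval always contains $e(x)$), termination, the final accuracy bound $|1/\tilde e-w_-(x)|\leq\epsilon\,w_-(x)$ after a constant rescaling of $\epsilon$, and the success-probability amplification (amplify the $i$th $\mathtt{Decide}$ to error $\tfrac19(2/3)^{i-1}$) are all argued exactly as in \thm{witness-est}. For the running time, using $e_1^{(i)}\leq e_{\max}^{(i)}\leq e(x)+\Delta_i=1/w_-(x)+\Delta_i$ and $\log\tfrac1{\delta_i}=O(\log\tfrac{w_-(x)}{\epsilon})$, the $i$th round costs $\tO\bigl(\sqrt{1/w_-(x)+\Delta_i}\,/(\Delta_i\Delta(f))\bigr)$; since the factors $1/\Delta_i=(3/2)^{i-1}$ form a geometric progression dominated by its last term $1/\Delta_T=\Theta(w_-(x)/\epsilon)$ (at which $\Delta_T\ll 1/w_-(x)$, so the square root is $\Theta(1/\sqrt{w_-(x)})$), summing over $i\leq T$ gives total cost $\tO\bigl(\tfrac1{\Delta(f)}\cdot\tfrac1{\sqrt{w_-(x)}}\cdot\tfrac{w_-(x)}{\epsilon}\bigr)=\tO\bigl(\tfrac1\epsilon\tfrac{\sqrt{w_-(x)}}{\Delta(f)}\bigr)$, as claimed.

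The part that needs care — more bookkeeping than a real obstacle — is the cost summation: one must notice that the amplitude \emph{gap} resolved at round $i$ is $\Theta(\Delta_i)$, so \cor{amp-est-dec} costs only $\tO(\sqrt{e_1^{(i)}}/\Delta_i)$ calls to $\mathcal{A}_\delta$ (scaling like $\sqrt{e(x)}/\Delta_i$), rather than the $\tO(e(x)^{3/2}/\epsilon)$ one would get from a single relative-accuracy amplitude estimation of $p(x)$; and that taking the phase-estimation error $\delta_i\sim\Delta_i$ at each round keeps every logarithmic overhead inside $\tO(1)$. It is also worth remarking that the phase gaps $\Delta(U(P,x))$ and $\Delta(U'(P,x))$ are invariant under the target-rescaling used to normalize $P$, since $U(P,x)$ and $U'(P,x)$ depend only on $\ker A$, $H(x)$ and $\mathrm{span}\{\ket{w_0}\}$, none of which changes when $\tau$ is scaled; hence the hypothesis that $P$ is normalized is harmless.
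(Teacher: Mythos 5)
Your proof is correct, and it reaches the same complexity, but it does so by a genuinely different route from the paper's. The paper forgoes the binary search entirely: it runs phase estimation of $U(P,x)$ at the fixed precision $\Delta(f)$ (so that, as you also observe, the $\ket{0}$-overlap of the resulting state is $\norm{\Pi_0^x\ket{w_0}}^2 = 1/w_-(x)$ up to the phase-estimation error $\epsilon$), and then \emph{directly} estimates this overlap to relative accuracy $\Theta(\eps)$ using amplitude estimation (\thm{amp-est}) rather than the gap version (\cor{amp-est-dec}). The only loop in the paper's proof is a doubling trick on the phase-estimation error: start with $\epsilon = 1/2$, halve $\epsilon$ until the amplitude-estimation reading shows $1/w_-(x) \geq \epsilon$, then do one final high-precision estimate with $\epsilon' = \Theta(\eps/w_-(x))$. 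This avoids paying a $\log W_-$ overhead by letting the precision adapt to $w_-(x)$ rather than to the worst-case $W_-$.

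You instead reuse the $\mathtt{WitnessEstimate}$ binary search of \thm{witness-est} verbatim, swapping its $\mathtt{Decide}$ subroutine for one built from phase estimation at precision $\Delta(f)$ plus \cor{amp-est-dec}. The crucial shared insight is identical — when $\Delta(f)$ genuinely lower-bounds the phase gap, the 0-phase projection of $\ket{w_0}$ is exactly $1/w_-(x)$ and the phase-estimation register isolates it additively — and your cost accounting (a geometric sum dominated by the final round, with $\delta_i \sim \Delta_i$ keeping all logarithms in $\tO(\cdot)$) is right. The paper's version is somewhat more economical, since it needs only one estimate per round rather than a threshold decision, and it avoids re-deriving the bookkeeping of \thm{witness-est}; your version has the virtue of factoring the improvement cleanly through the existing $\mathtt{Decide}$/$\mathtt{WitnessEstimate}$ interface, which makes the dependence on the phase-gap lower bound explicit at the subroutine level. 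One small nit: the claim ``$w_-(x)\ge\norm{\ket{w_0}}^2=1$'' is true for a normalized span program, but the inequality goes through $w_-(x)\ge N_-(P)=1$ rather than directly through $N_+(P)=\norm{\ket{w_0}}^2$; the two agree only because normalization forces $N_+=N_-=1$.
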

\begin{proof}
To estimate $w_-(x)$, we can use phase estimation of $U(P,x)$ applied to $\ket{w_0}$, with precision $\Delta=\Delta(f)$ and accuracy $\epsilon=\frac{\eps}{8}\frac{1}{W_-(P,f)}$, however, this results in $\log W_-$ factors, and $W_-$ may be significantly larger than $w_-(x)$. Instead, we will start with $\epsilon=\frac{1}{2}$, and decrease it by $1/2$ until $\epsilon\approx \frac{\eps}{w_-(x,P)}$. 

Let $\ket{w_0'}$ be the result of applying phase estimation to precision $\Delta=\Delta(f)$ and accuracy $\epsilon$, and let $\Lambda_0$ be the projector onto states with 0 in the phase register. We will then estimate $\norm{\Lambda_0\ket{w_0'}}^2$ to relative accuracy $\eps/4$ using amplitude estimation. Since $\Delta\leq \Delta(U(P,x))$, we have $\norm{\Pi_0^x\ket{w_0}}^2\leq \norm{\Lambda_0\ket{w_0'}}^2\leq \norm{\Pi_\Delta^x\ket{w_0}}^2+\epsilon =\norm{\Pi_0^x\ket{w_0}}^2+\epsilon$. By \lem{pi-0-neg}, we have $\norm{\Pi_0^x\ket{w_0}}^2=\frac{1}{w_-(x)}$, so we will obtain an estimate $\tilde{p}$ of $\frac{1}{w_-(x)}$ such that 
$$\(1-\frac{\eps}{4}\)\frac{1}{w_-(x)}\leq \tilde{p}\leq \(1+\frac{\eps}{4}\)\(\frac{1}{w_-(x)}+\epsilon\). $$
If $\tilde{p}>2(1+\frac{\eps}{4})\epsilon$, then we know that $\frac{1}{w_-(x)}\geq \epsilon$, so we perform one more estimate with accuracy $\epsilon'=\frac{\eps}{8}\epsilon\leq \frac{\eps}{8}\frac{1}{w-(x)}$ and return the resulting estimate. Otherwise, we let $\epsilon'=\epsilon/2$ and repeat. 

To see that we will eventually terminate, suppose $\epsilon \leq\frac{1}{4w_-(x)}$. Then we have 
$$\tilde p\geq (1-\eps/4)\frac{1}{w_-(x)}\geq (3/4)4\epsilon\geq (3/4)({4}/{5})(1+\eps/4) 4\epsilon \geq 2(1+\eps/4)\epsilon,$$
so the algorithm terminates. 
Upon termination, we have 
$$\tilde p \leq \(1+\eps/4\)\(\frac{1}{w_-(x)}+\epsilon\)\leq \(1+\eps/4\)\(\frac{1}{w_-(x)}+\frac{\eps}{8}\frac{1}{w_-(x)}\)\leq \(1+\frac{\eps}{2}\)\frac{1}{w_-(x)},$$ 
so $|1/\tilde p-w_-(x)|\leq \eps w_-(x).$
By \thmthm{phase-est}{amp-est}, the total number of calls to $U$ is:
$$\sum_{i=0}^{\log 4w_-(x)}\frac{1}{\Delta}\frac{\sqrt{w_-(x)}}{\eps}\log 2^i+\frac{\sqrt{w_-(x)}}{\Delta\eps}\log\frac{w_-(x)}{\eps}=\frac{1}{\Delta}\frac{\sqrt{w_-(x)}}{\eps}\(\sum_{i=0}^{\log 6w_-(x)}i+\log\frac{w_-(x)}{\eps}\),$$
which is at most $\frac{\sqrt{w_-(x)}}{\Delta}{\eps}\log^2\frac{w_-(x)}{\eps}=\tO\(\frac{\sqrt{w_-(x)}}{\Delta\eps}\)$.
Similarly, we can estimate $w_+(x)$ to relative accuracy $\eps$ using $\tO\(\frac{\sqrt{w_+(x)}}{\Delta'\eps}\)$ calls to $U'$.
\end{proof}

\thm{gap-improvement} is only useful if a lower bound on the phase gap of $U(P,x)$ or $U'(P,x)$ can be computed. This may not always be feasible, but the following two theorems shows it is sufficient to compute the spectral norm of $A$, and the spectral gap, or specifically, smallest nonzero singular value, of the matrix $A(x)=A\Pi_{H(x)}$. This may still not be an easy task, but in \sec{app}, we show that we can get a better algorithm for estimating the effective resistance by this analysis, which, in the case of effective resistance, is very simple.

\begin{theorem}\label{thm:kappa}
Let $P$ be any span program on $[q]^n$. 
 For any $x\in [q]^n$, $\Delta(U(P,x))\geq 2\frac{\sigma_{\min}(A(x))}{\sigma_{\max}(A)}.$
\end{theorem}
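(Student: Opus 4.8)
The plan is to relate the phase gap of $U(P,x)$ to a singular value via \cor{szegedy}, which tells us that $\Delta(-U)\geq 2\sigma_{\min}(D)$ where $D$ is the discriminant of $U$. The subtlety is that here we care about $\Delta(U)$, not $\Delta(-U)$, so I first want to rewrite $U(P,x)=(2\Pi_{\ker A}-I)(2\Pi_{H(x)}-I)$ in a form where the relevant spectral gap at phase $0$ (rather than at phase $\pi$) corresponds to the singular values of a natural discriminant. Using the identity $(2\Pi_{\ker A}-I)=-(2\Pi_{(\ker A)^\bot}-I)$ and $(2\Pi_{H(x)}-I)=-(2\Pi_{H(x)^\bot}-I)$, we have $U(P,x)=(2\Pi_{(\ker A)^\bot}-I)(2\Pi_{H(x)^\bot}-I)$, and also $-U(P,x)=(2\Pi_{(\ker A)^\bot}-I)(2\Pi_{H(x)}-I)$. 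So by \cor{szegedy} applied with $A^{\mathrm{sz}}=(\ker A)^\bot=\mathrm{row}A$ and $B^{\mathrm{sz}}=H(x)$, we get $\Delta(U(P,x))=\Delta(-(-U(P,x)))\geq 2\sigma_{\min}(D)$ where $D=\Pi_{\mathrm{row}A}\Pi_{H(x)}$.

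Next I would lower bound $\sigma_{\min}(\Pi_{\mathrm{row}A}\Pi_{H(x)})$ in terms of $\sigma_{\min}(A(x))/\sigma_{\max}(A)$, where $A(x)=A\Pi_{H(x)}$. The key observation is that $A(x)=A\Pi_{H(x)}$ factors through $\Pi_{\mathrm{row}A}\Pi_{H(x)}$: indeed $A=A\Pi_{\mathrm{row}A}$, so $A(x)=A\Pi_{\mathrm{row}A}\Pi_{H(x)}=A\cdot D$ (where $D$ is viewed as a map on $H$ restricted appropriately). For any unit vector $\ket{v}$ in the rowspace of $A(x)$, write $\ket{v}$ in terms of $D$: we have $\norm{A(x)\ket{v}}=\norm{A\,D\ket{v}}\leq \sigma_{\max}(A)\norm{D\ket{v}}$. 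Taking $\ket{v}$ to be a right singular vector of $A(x)$ achieving $\sigma_{\min}(A(x))$, this gives $\sigma_{\min}(A(x))\leq \sigma_{\max}(A)\norm{D\ket{v}}$, hence $\norm{D\ket{v}}\geq \sigma_{\min}(A(x))/\sigma_{\max}(A)$. To conclude $\sigma_{\min}(D)\geq \sigma_{\min}(A(x))/\sigma_{\max}(A)$, I need that $\ket{v}$ lies in the rowspace of $D$ (so that $\norm{D\ket{v}}\geq \sigma_{\min}(D)$), which should follow from $\mathrm{row}\,A(x)\subseteq \mathrm{row}\,D$ — that is, $\ker D\subseteq \ker A(x)$, which is immediate since $A(x)=A\cdot D$. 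One should also check the minimum over right singular vectors of $D$ is attained on $\mathrm{row}\,D$, which is by definition of $\sigma_{\min}$.

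The main obstacle I anticipate is bookkeeping about which space the discriminant $D$ acts on, and making sure the inclusion $\mathrm{row}\,A(x)\subseteq \mathrm{row}\,D$ is correctly justified — in particular that no extra kernel is introduced when we pass from $D$ to $A(x)=AD$, and that restricting to $\mathrm{row}\,D$ really does let us replace $\norm{D\ket{v}}\geq \sigma_{\min}(D)$. There is also a minor point that $\sigma_{\min}$ in \cor{szegedy} refers to the smallest \emph{nonzero} singular value, so I must argue that $\sigma_{\min}(A(x))>0$ forces the relevant singular value of $D$ to be the smallest nonzero one, i.e. the vector $\ket{v}$ witnessing $\sigma_{\min}(A(x))$ is not in $\ker D$ — which again is clear since $A(x)\ket{v}\neq 0$ implies $D\ket{v}\neq 0$. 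Assembling these pieces gives $\Delta(U(P,x))\geq 2\sigma_{\min}(D)\geq 2\sigma_{\min}(A(x))/\sigma_{\max}(A)$, as claimed.
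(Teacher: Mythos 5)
Your reduction to $-U$ via \cor{szegedy} and the identification $D=\Pi_{\mathrm{row}A}\Pi_{H(x)}=A^+A(x)$ match the paper exactly; the choice to then factor $A(x)=A\cdot D$ rather than $D=A^+A(x)$ is a harmless cosmetic variation. The final step, however, has a quantification error that breaks the argument. You take $\ket{v}$ to be a right singular vector of $A(x)$ achieving $\sigma_{\min}(A(x))$, and establish both $\norm{D\ket{v}}\geq\sigma_{\min}(A(x))/\sigma_{\max}(A)$ and (from $\ket{v}\in\mathrm{row}D$) $\norm{D\ket{v}}\geq\sigma_{\min}(D)$. These are two \emph{lower} bounds on the same quantity $\norm{D\ket{v}}$; they cannot be chained to conclude $\sigma_{\min}(D)\geq\sigma_{\min}(A(x))/\sigma_{\max}(A)$. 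The vector that witnesses $\sigma_{\min}(D)$ need not be your $\ket{v}$, so a bound on $\norm{D\ket{v}}$ for that one fixed $\ket{v}$ says nothing about $\sigma_{\min}(D)$ from below.

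The fix is to start from the minimizer for $D$, not for $A(x)$: let $\ket{u}$ be a unit vector in $\mathrm{row}D$ achieving $\norm{D\ket{u}}=\sigma_{\min}(D)$. To conclude $\norm{A(x)\ket{u}}\geq\sigma_{\min}(A(x))$ one needs $\ket{u}\in\mathrm{row}A(x)$, i.e.\ the inclusion $\mathrm{row}D\subseteq\mathrm{row}A(x)$ --- the \emph{opposite} of the inclusion you cite. That inclusion follows from $D=A^+A(x)$, which gives $\ker A(x)\subseteq\ker D$. (In fact both inclusions hold here, so $\mathrm{row}D=\mathrm{row}A(x)$, but the one you invoked is not the one the argument needs.) With $\ket{u}\in\mathrm{row}A(x)$, your own estimate $\norm{A(x)\ket{u}}=\norm{AD\ket{u}}\leq\sigma_{\max}(A)\norm{D\ket{u}}$ then yields $\sigma_{\min}(A(x))\leq\sigma_{\max}(A)\sigma_{\min}(D)$, which is what's wanted. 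This is essentially the paper's proof, which works directly with $D=A^+A(x)$ and uses $\norm{A^+A(x)\ket{u}}\geq\sigma_{\min}(A^+)\norm{A(x)\ket{u}}$ together with $\sigma_{\min}(A^+)=1/\sigma_{\max}(A)$.
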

\begin{proof}
Let $U=U(P,x)$. Consider $-U=(2\Pi_{(\ker A)^\bot}-I)(2\Pi_{H(x)}-I)$. By \cor{szegedy}, if $D$ is the discriminant of $-U$, then $\Delta(U)\geq 2\sigma_{\min}(D)$, so we will lower bound $\sigma_{\min}(D)$. 
Since the orthogonal projector onto $(\ker A)^\bot=\mathrm{row}A$ is $A^+A$, we have $D=A^+A\Pi_{H(x)}=A^+A(x)$. 

We have $\sigma_{\min}(D)=\min_{\ket{u}\in\mathrm{row}D}\frac{\norm{D\ket{u}}}{\norm{\ket{u}}}$, so let $\ket{u}\in\mathrm{row}D$ be a unit vector that minimizes $\norm{D\ket{u}}$. Since $\ket{u}\in\mathrm{row}D\subseteq \mathrm{row}A(x)$, we have $\norm{A(x)\ket{u}}\geq \sigma_{\min}(A(x))$. Since $A(x)\ket{u}\in \mathrm{col}A(x)\subseteq \mathrm{col}A=\mathrm{row}A^+$, we have 
$$\sigma_{\min}(D)=\norm{A^+A(x)\ket{u}}\geq \sigma_{\min}(A^+)\norm{A(x)\ket{u}}\geq \sigma_{\min}(A^+)\sigma_{\min}(A(x))=\frac{\sigma_{\min}(A(x))}{\sigma_{\max}(A)},$$
since $\sigma_{\min}(A^+)=\frac{1}{\sigma_{\max}(A)}$. 
Thus $\Delta(U)\geq 2\frac{\sigma_{\min}(A(x))}{\sigma_{\max}(A)}$.
\end{proof}

\begin{theorem}\label{thm:kappa-pos}
Let $P$ be any span program. For any $x\in P_1$, $\Delta(U'(P,x))\geq 2\frac{\sigma_{\min}(A(x))}{\sigma_{\max}(A)}$. 
\end{theorem}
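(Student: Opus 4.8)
The plan is to reduce to \thm{kappa} by a change of operator. First note that $U'(P,x)^\dagger=(2\Pi_T-I)(2\Pi_{H(x)}-I)$ and that a unitary and its adjoint have complex-conjugate spectra, so $\Delta(U'(P,x))=\Delta\big((2\Pi_T-I)(2\Pi_{H(x)}-I)\big)$. The point of passing to $T=\ker A\oplus\mathrm{span}\{\ket{w_0}\}$ is that $T$ is again a kernel. Assuming $\tau\neq 0$ (if $\tau=0$ then $\ket{w_0}=0$, $T=\ker A$, and $U'(P,x)=U(P,x)^\dagger$, so \thm{kappa} applies verbatim), set $\hat A:=A-\frac{\ket{\tau}\bra{w_0}}{\norm{\ket{w_0}}^2}=A\big(I-\frac{\ket{w_0}\bra{w_0}}{\norm{\ket{w_0}}^2}\big)$. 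A one-line check gives $\ker\hat A=T$: $\hat A$ kills $\ker A$ and $\ket{w_0}$, and conversely $\hat A\ket z=0$ forces $A\ket z\in\mathrm{span}\{\tau\}=A(\mathrm{span}\{\ket{w_0}\})$, hence $\ket z\in\ker A+\mathrm{span}\{\ket{w_0}\}=T$. Thus $(2\Pi_T-I)(2\Pi_{H(x)}-I)=(2\Pi_{\ker\hat A}-I)(2\Pi_{H(x)}-I)=U(\hat P,x)$ for the span program $\hat P=(H,V,0,\hat A)$, which has the same associated subspaces $H(x)$ as $P$; since the bound of \thm{kappa} involves a span program only through its operator and those subspaces, it gives $\Delta(U'(P,x))\ge 2\,\sigma_{\min}(\hat A(x))/\sigma_{\max}(\hat A)$ with $\hat A(x):=\hat A\Pi_{H(x)}$.

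It then remains to prove $\sigma_{\min}(\hat A(x))/\sigma_{\max}(\hat A)\ge\sigma_{\min}(A(x))/\sigma_{\max}(A)$. Since $I-\frac{\ket{w_0}\bra{w_0}}{\norm{\ket{w_0}}^2}$ is an orthogonal projector, $\sigma_{\max}(\hat A)\le\sigma_{\max}(A)$ is immediate, so it suffices to show $\sigma_{\min}(\hat A(x))\ge\sigma_{\min}(A(x))$ — and this is the only place the hypothesis $x\in P_1$ is needed. I would take a unit vector $\ket v\in\mathrm{row}\,\hat A(x)$ with $\norm{\hat A(x)\ket v}=\sigma_{\min}(\hat A(x))$; then $\ket v\in H(x)$ (as $\mathrm{row}\,\hat A(x)\subseteq H(x)$) and $\ket v\perp H(x)\cap T$ (as $H(x)\cap T\subseteq\ker\hat A(x)$). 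Because $\ket v\in H(x)$, $\hat A(x)\ket v=\hat A\ket v=A(x)\ket v-\frac{\braket{w_0}{v}}{\norm{\ket{w_0}}^2}\tau$; and since $x\in P_1$ there is a positive witness $\ket{w_x}\in H(x)$ with $\tau=A\ket{w_x}=A(x)\ket{w_x}$. Hence $\hat A(x)\ket v=A(x)\ket{v'}$ for $\ket{v'}:=\ket v-\frac{\braket{w_0}{v}}{\norm{\ket{w_0}}^2}\ket{w_x}\in H(x)$, so, exactly as in \thm{kappa}, $\norm{\hat A(x)\ket v}=\norm{A(x)\ket{v'}}\ge\sigma_{\min}(A(x))\,\norm{\Pi_{\mathrm{row}\,A(x)}\ket{v'}}$.

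The crux, and the step I expect to be the main obstacle, is showing $\norm{\Pi_{\mathrm{row}\,A(x)}\ket{v'}}\ge\norm{\ket v}=1$; this is where the special structure of $\ket v$ must be used, and where the argument genuinely departs from \thm{kappa} (there the analogous vector lies in the row space of the relevant discriminant unchanged, whereas here it must be nudged by $\ket{w_x}$). Using $\mathrm{row}\,A(x)=H(x)\ominus(H(x)\cap\ker A)$ together with $\ker A\subseteq T$: from $\ket v\in H(x)$ and $\ket v\perp H(x)\cap\ker A$ we get $\Pi_{\mathrm{row}\,A(x)}\ket v=\ket v$, and from $\ket{w_x}\in H(x)\cap T$ (a positive witness lies in $\ket{w_0}+\ker A\subseteq T$) together with $\ket v\perp H(x)\cap T$ we get $\braket{v}{\Pi_{\mathrm{row}\,A(x)}w_x}=\braket{v}{w_x}=0$, so the cross term drops and $\norm{\Pi_{\mathrm{row}\,A(x)}\ket{v'}}^2=\norm{\ket v}^2+\big|\tfrac{\braket{w_0}{v}}{\norm{\ket{w_0}}^2}\big|^2\norm{\Pi_{\mathrm{row}\,A(x)}\ket{w_x}}^2\ge 1$. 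Chaining everything gives $\Delta(U'(P,x))\ge 2\,\sigma_{\min}(\hat A(x))/\sigma_{\max}(\hat A)\ge 2\,\sigma_{\min}(A(x))/\sigma_{\max}(A)$. Degenerate cases in which a discriminant vanishes (e.g.\ $T\subseteq H(x)$, or $T=H$ so $\hat A=0$) make $U'(P,x)$ have only $\pm1$ eigenvalues, whence $\Delta(U'(P,x))\ge\pi>2\ge 2\sigma_{\min}(A(x))/\sigma_{\max}(A)$ and the bound is trivial; I would dispose of these exactly as the analogous edge cases in \thm{kappa} are handled.
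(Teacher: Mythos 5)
Your proof is correct, and it takes a genuinely different route than the paper. The paper works directly at the level of discriminants: it writes $D'=\Pi_{\ket{w_0}^\bot}D$ for $D=A^+A(x)$ the discriminant from \thm{kappa}, uses $x\in P_1$ to exhibit $D\ket{w_x}=\ket{w_0}$ so that $\ket{w_0}\in\mathrm{col}D$, and then gets $\sigma_{\min}(D')\geq\sigma_{\min}(D)$ by observing that the minimization defining $\sigma_{\min}(D')$ runs over the strictly smaller set $\{\ket{u}\in\mathrm{col}D:\braket{w_0}{u}=0\}$ on which $\bra{u}D'=\bra{u}D$. You instead re-express $2\Pi_T-I$ as a kernel reflection $2\Pi_{\ker\hat A}-I$ for $\hat A=A\bigl(I-\frac{\ket{w_0}\bra{w_0}}{\norm{\ket{w_0}}^2}\bigr)$, so that \thm{kappa} applies as a black box (its proof uses only $A$ and the $H(x)$, not $\tau$), and then prove the two singular-value comparisons $\sigma_{\max}(\hat A)\leq\sigma_{\max}(A)$ (trivial) and $\sigma_{\min}(\hat A(x))\geq\sigma_{\min}(A(x))$ by hand, using $x\in P_1$ to replace a target-direction correction by $\ket{\tau}$ with a correction in $H(x)$ via $\ket{w_x}$, then exploiting the orthogonality $\ket v\perp H(x)\cap T\ni\ket{w_x}$ to kill the cross term. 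The paper's proof is shorter because it never leaves discriminant land; yours isolates a reusable principle (passing to $\hat A$ with $\ker\hat A=T$ reduces $U'$ to the $U$-case) and makes the role of the positive witness more transparent. One small slip worth fixing: the degenerate case $\hat A(x)=0$ corresponds to $H(x)\subseteq T$, not $T\subseteq H(x)$ as you wrote; the treatment of it is otherwise fine.
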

\begin{proof}
We have 
$$
-U'(P,x)^\dagger = (2(I-\Pi_{\ker A\oplus\mathrm{span}\{\ket{w_0}\}})-I)(2\Pi_{H(x)}-I)
 = (2(I-\Pi_{\ker A}-\Pi_{\ket{w_0}})-I)(2\Pi_{H(x)}-I),
$$
since $\ket{w_0}\in (\ker A)^\bot$, so $-U'(P,x)^\dagger$ has discriminant:
$$D'=(\Pi_{(\ker A)^\bot}-\Pi_{\ket{w_0}})\Pi_{H(x)}=\Pi_{(\ker A)^\bot}\Pi_{H(x)}-\Pi_{\ket{w_0}}\Pi_{(\ker A)^\bot}\Pi_{H(x)}
=\Pi_{\ket{w_0}^\bot}D.$$

Since $x\in P_1$, let $\ket{w_x}=A(x)^+\ket{\tau}$. Then $D\ket{w_x}=A^+A(x)\ket{w_x}=A^+\ket{\tau}=\ket{w_0}$, so $\ket{w_0}\in\mathrm{col}D$. Let $\{\ket{\phi_0}=\ket{w_0},\ket{\phi_1},\dots,\ket{\phi_{r-1}}\}$ be an orthogonal basis for $\mathrm{col}D$. Then we can write $D=\sum_{i=0}^{r-1}\ket{\phi_i}\bra{v_i}$ for $\ket{v_i}=D^\dagger\ket{\phi_i}\neq 0$ (not necessarily orthogonal). Then $D'=\sum_{i=0}^{r-1}\Pi_{\ket{w_0}^\bot}\ket{\phi_i}\bra{v_i}=\sum_{i=1}^{r-1}\ket{\phi_i}\bra{v_i}$, so 
$\mathrm{col}D'=\mathrm{span}\{\ket{\phi_1},\dots,\ket{\phi_{r-1}}\}=\{\ket{\phi}\in\mathrm{col}D:\braket{\phi}{w_0}=0\}$. Thus:
\begin{eqnarray*}
\sigma_{\min}(D')&=& \min_{\ket{u}\in\mathrm{col}D'}\frac{\norm{\bra{u}D'}}{\norm{\ket{u}}}
\;\; =\;\; \min_{\ket{u}\in\mathrm{col}D:\braket{w_0}{u}=0}\frac{\norm{\bra{u}\Pi_{\ket{w_0}^\bot}D}}{\norm{\ket{u}}}
\;\; = \;\; \min_{\ket{u}\in\mathrm{col}D:\braket{w_0}{u}=0}\frac{\norm{\bra{u}D}}{\norm{\ket{u}}}\\
&\geq & \min_{\ket{u}\in\mathrm{col}D}\frac{\norm{\bra{u}D}}{\norm{\ket{u}}}
\;\; = \;\; \sigma_{\min}(D).
\end{eqnarray*}
By the proof of \thm{kappa}, we have $\sigma_{\min}(D)\geq \frac{\sigma_{\min}(A(x))}{\sigma_{\max}(A)}$ and by \cor{szegedy}, we have $\Delta(U'(P,x)^\dagger)=\Delta(U'(P,x))\geq 2\sigma_{\min}(D')\geq 2\sigma_{\min}(D)\geq 2\frac{\sigma_{\min}(A(x))}{\sigma_{\max}(A)}$.
\end{proof}

\noindent Combining the last three theorems, we get the following, which has \thm{est-alg-gap} as a special case:
\begin{theorem}\label{thm:est-alg-gap-N}
Fix $f:X\subseteq [q]^n\rightarrow\mathbb{R}_{>0}$, and define $\kappa(f)=\max_{x\in X}\frac{\sigma_{\max}(A)}{\sigma_{\min}(A(x))}$. Let $P$ be any span program on $[q]^n$ such that $X\subseteq P_0$ (resp. $X\subseteq P_1$), and for all $x\in X$, $f(x)=w_-(x,P)$ (resp. $f(x)=w_+(x,P)$). Let $N=\norm{\ket{w_0}}^2$. Then there is a quantum algorithm that estimates $f$ to relative accuracy $\eps$ using $\tO\(\frac{\kappa(f)}{\eps}\sqrt{Nf(x)}\)$ (resp. $\tO\(\frac{\kappa(f)}{\eps}\sqrt{\frac{f(x)}{N}}\)$) queries.  
\end{theorem}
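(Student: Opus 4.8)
The plan is to reduce the statement to \thm{gap-improvement} by first passing to a normalized span program, and then to eliminate the phase gaps $\Delta(f)$ and $\Delta'(f)$ appearing there using the lower bounds of \thm{kappa} and \thm{kappa-pos}. I will describe the negative case ($X\subseteq P_0$, $f(x)=w_-(x,P)$) in detail; the positive case is entirely symmetric, with $U(P,x)$ replaced by $U'(P,x)$ and \thm{kappa} replaced by \thm{kappa-pos} (which applies precisely because $X\subseteq P_1$ there).

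First I would normalize $P$ by rescaling only the target, $\tau'=\tau/\sqrt{N}$, obtaining a span program $P'=(H,V,\tau',A)$ with the same underlying spaces and the same operator $A$. Since $\ket{w_0}=A^+\tau$ scales linearly with $\tau$, the minimal positive witness of $P'$ is $\ket{w_0}/\sqrt{N}$, which has unit norm, so $P'$ is normalized in the sense of \defin{normalized}. Because $A$, and hence $\ker A$ and $\mathrm{span}\{\ket{w_0}\}$, are untouched, we have $U(P',x)=U(P,x)$ (and $U'(P',x)=U'(P,x)$) for every $x$, so the phase gaps are unchanged by normalization; likewise $\sigma_{\max}(A)$, $\sigma_{\min}(A(x))$, and therefore $\kappa(f)$, refer to the same quantities for $P$ and $P'$. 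The witness sizes, however, rescale: a negative witness $\omega$ for $x$ in $P$ corresponds to $\sqrt{N}\,\omega$ for $x$ in $P'$ and vice versa, so $w_-(x,P')=N\,w_-(x,P)=Nf(x)$ and $X\subseteq P_0$ still holds for $P'$ (in the positive case one gets $w_+(x,P')=\tfrac1N f(x)$ instead). I would then apply \thm{gap-improvement} to $P'$ with the target function $g(x):=Nf(x)=w_-(x,P')$, obtaining a quantum algorithm that estimates $g(x)$ to relative accuracy $\eps$ using $\tO\!\left(\frac1\eps\frac{\sqrt{w_-(x,P')}}{\Delta(g)}\right)$ queries, where $\Delta(g)=\min_{x\in X}\Delta(U(P',x))=\min_{x\in X}\Delta(U(P,x))$. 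Dividing the output by the fixed constant $N$ turns an estimate of $g(x)$ with relative accuracy $\eps$ into an estimate of $f(x)$ with relative accuracy $\eps$, at no extra query cost (in the positive case one instead applies \thm{gap-improvement} to $w_+(x,P')$ and multiplies the output by $N$).

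Finally I would insert the phase-gap bound. By \thm{kappa}, $\Delta(U(P,x))\geq 2\,\sigma_{\min}(A(x))/\sigma_{\max}(A)$ for every $x$, so $\Delta(g)\geq 2/\kappa(f)$, i.e.\ $1/\Delta(g)\leq \kappa(f)/2$; substituting this and $w_-(x,P')=Nf(x)$ into the query bound yields $\tO\!\left(\frac{\kappa(f)}{\eps}\sqrt{Nf(x)}\right)$, as claimed. In the positive case \thm{kappa-pos} gives the same lower bound $\Delta(U'(P,x))\geq 2\,\sigma_{\min}(A(x))/\sigma_{\max}(A)$, and the analogous substitution with $w_+(x,P')=\tfrac1N f(x)$ produces $\tO\!\left(\frac{\kappa(f)}{\eps}\sqrt{f(x)/N}\right)$; taking $N=1$ recovers \thm{est-alg-gap}. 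The only step demanding genuine care is the bookkeeping in the normalization: one must verify that rescaling $\tau$ leaves $A$, $\ker A$, $\mathrm{span}\{\ket{w_0}\}$, the unitaries $U(P,x)$ and $U'(P,x)$, and the singular values of $A$ and $A(x)$ all unchanged, while scaling $w_-(x,\cdot)$ by $N$ (resp.\ $w_+(x,\cdot)$ by $1/N$) and preserving relative accuracy of the rescaled estimate; once that is settled, the rest is a direct invocation of \thm{gap-improvement}, \thm{kappa}, and \thm{kappa-pos}.
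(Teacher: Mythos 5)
Your proposal is correct and matches the paper's proof essentially step for step: normalize by rescaling $\tau\mapsto\tau/\sqrt{N}$, note that $A$, $U$, $U'$, and the singular values are unchanged while $w_\pm$ rescale by $N^{\pm1}$, then invoke \thm{gap-improvement} together with \thmthm{kappa}{kappa-pos}. The only difference is that you spell out the bookkeeping (invariance of $U(P',x)$, $\kappa(f)$, and the division by $N$ to convert back to $f$) that the paper leaves implicit.
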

\begin{proof}
Let $P'$ be the span program that is the same as $P$, but with target $\tau'=\frac{\tau}{\sqrt{N}}$. Then it's clear that $\frac{\ket{w_0}}{\sqrt{N}}$ is the minimal positive witness of $P'$, and furthermore, it has norm 1, so $P'$ is normalized. We can similarly see that for any $x\in P_1$, if $\ket{w_x}$ is an optimal positive witness for $x$ in $P$, then $\frac{1}{\sqrt{N}}\ket{w_x}$ is an optimal positive witness for $x$ in $P'$, so $w_+(x,P')=\frac{w_+(x,P)}{N}$. Similarly, for any $x\in P_0$, if $\omega_x$ is an optimal negative witness for $x$ in $P$, then $\sqrt{N}\omega_x$ is an optimal negative witness for $x$ in $P'$, so $w_-(x,P')=Nw_-(x,P)$. By \thmthm{kappa}{kappa-pos}, for all $x\in X$, $\frac{1}{\Delta(U(P',x))}\leq \kappa(f)$ (resp.   
$\frac{1}{\Delta(U'(P',x))}\leq \kappa(f)$). The result then follows from \thm{gap-improvement}.
\end{proof}

\section{Applications}\label{sec:app}

In this section, we will demonstrate how to apply the ideas from \sec{alg} to get new quantum algorithms. 
Specifically, we will give upper bounds of $\tO(n\sqrt{R_{s,t}}/\eps^{3/2})$ and $\tO(n\sqrt{R_{s,t}/\lambda_2}/\eps)$
 on the time complexity of estimating the effective resistance, $R_{s,t}$, between two vertices, $s$ and $t$, in a graph. Unlike previous upper bounds, we study this problem in the adjacency model, however, there are similarities between the ideas of this upper bound and a previous quantum upper bound in the edge-list model due to Wang \cite{Wan13}, which we discuss further at the end of this section.

A \emph{unit flow} from $s$ to $t$ in $G$ is a real-valued function $\theta$ on the directed edges $\overarrow{E}(G)=\{(u,v):\{u,v\}\in E(G)\}$ such that:
\begin{enumerate}
\item for all $(u,v)\in \overarrow{E}$, $\theta(u,v)=-\theta(v,u)$;
\item for all $u\in [n]\setminus\{s,t\}$, $\sum_{v\in \Gamma(u)}\theta(u,v)=0$, where $\Gamma(u)=\{v\in [n]:\{u,v\}\in E\}$; and
\item $\sum_{u\in\Gamma(s)}\theta(s,u)=\sum_{u\in\Gamma(t)}\theta(u,t)=1$.
\end{enumerate}
Let $\cal F$ be the set of unit flows from $s$ to $t$ in $G$. The \emph{effective resistance} from $s$ to $t$ in $G$ is defined:
$$R_{s,t}(G)=\min_{\theta\in {\cal F}}\sum_{\{u,v\}\in E(G)}\theta(u,v)^2.$$

In the adjacency model, we are given, as input, a string $x\in\{0,1\}^{n\times n}$, representing a graph $G_x=([n],\{\{i,j\}:x_{i,j}=1\})$ (we assume that $x_{i,i}=0$ for all $i$, and $x_{i,j}=x_{j,i}$ for all $i,j$). 
The problem of $st$-connectivity is the following. Given as input $x\in\{0,1\}^{n\times n}$ and $s,t\in [n]$, decide if there exists a path from $s$ to $t$ in $G_x$; that is, whether or not $s$ and $t$ are in the same component of $G_x$. A span-program-based algorithm for this problem was given in \cite{BR12}, with time complexity $\tO(n\sqrt{p})$, under the promise that, if $s$ and $t$ are connected in $G_x$, they are connected by a path of length $\leq p$.  They use the following span program, defined on $\{0,1\}^{n\times n}$:
$$H_{(u,v),0}=\{0\},\;\; H_{(u,v),1}=\mathrm{span}\{\ket{u,v}\},\;\; V=\mathbb{R}^{n},\;\; A=\sum_{u,v\in [n]}(\ket{u}-\ket{v})\bra{u,v},\;\; \ket{\tau}=\ket{s}-\ket{t}.$$
We have $H=\mathrm{span}\{\ket{u,v}:u,v\in [n]\}$, and $H(x)=\mathrm{span}\{\ket{u,v}:\{u,v\}\in E(G_x)\}$.  Throughout this section, $P$ will denote the above span program. We will use this span program to define algorithms for estimating the effective resistance. Ref.\ \cite{BR12} are even able to show how to efficiently implement a unitary similar to $U(P,x)$, giving a time efficient algorithm. In \app{time}, we adapt their proof to our setting, showing how to efficiently implement $U'(P^\beta,x)$ for any $n^{-O(1)}\leq \beta\leq n^{O(1)}$ and efficiently construct the initial state $\ket{w_0}$, making our algorithms time efficient as well.

The effective resistance between $s$ and $t$ is related to $st$-connectivity by the fact that if $s$ and $t$ are not connected, then $R_{s,t}$ is undefined (there is no flow from $s$ to $t$) and if $s$ and $t$ are connected then $R_{s,t}$ is related to the number and length of paths from $s$ to $t$. In particular, if $s$ and $t$ are connected by a path of length $p$, then $R_{s,t}(G)\leq p$ (take the unit flow that simply travels along this path). In general, if $s$ and $t$ are connected in $G$, then $\frac{2}{n}\leq R_{s,t}(G)\leq n-1$. The span program for $st$-connectivity is amenable to the task of estimating the effective resistance due to the following.
\begin{lemma}[\cite{BR12}]\label{lem:st-pos-witness}
For any graph $G_x$ on $[n]$, $x\in P_1$ if and only if $s$ and $t$ are connected, and in that case, $w_+(x,P)=\frac{1}{2}R_{s,t}(G_x)$. 
\end{lemma}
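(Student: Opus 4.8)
The plan is to unpack the span program $P$ for $st$-connectivity and show directly that a positive witness for $x$ is exactly (a rescaling of) a unit flow from $s$ to $t$ in $G_x$, so that minimizing $\norm{\ket{w}}^2$ reproduces the effective resistance up to the factor $\tfrac12$. First I would establish the "if and only if" part: if $s$ and $t$ are connected, fix any unit flow $\theta\in\mathcal F$ and define $\ket{w_\theta}=\sum_{(u,v):\{u,v\}\in E(G_x),\,u<v}\theta(u,v)\ket{u,v}$ (picking one orientation per edge, which is fine since the $H_{(u,v),1}$ span $\ket{u,v}$; one should be slightly careful about how $\ket{u,v}$ versus $\ket{v,u}$ appears in $H$ and $A$, but the antisymmetry $\theta(u,v)=-\theta(v,u)$ and the form $A=\sum(\ket u-\ket v)\bra{u,v}$ make the choice immaterial). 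Then $\ket{w_\theta}\in H(x)$ since it is supported on edges of $G_x$, and computing $A\ket{w_\theta}=\sum_{u}\big(\sum_{v\in\Gamma(u)}\theta(u,v)\big)\ket u$; the flow-conservation conditions (2) and (3) collapse this to $\ket s-\ket t=\tau$. Hence $x\in P_1$. Conversely, if $x\in P_1$ there is $\ket w\in H(x)$ with $A\ket w=\ket s-\ket t$; reading off the coefficients of $\ket w$ on the edge basis and symmetrizing (replace the coefficient on an unoriented edge by the antisymmetric function it induces) yields a function $\theta$ on $\overarrow E(G_x)$ satisfying exactly conditions (1)–(3), i.e.\ a unit flow, so $\mathcal F\neq\emptyset$ and $s,t$ are connected.

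Next I would pin down the witness size. The map $\theta\mapsto\ket{w_\theta}$ is a bijection between unit flows $\mathcal F$ and positive witnesses of $x$ in $P$ (its inverse is the symmetrization described above), and it satisfies $\norm{\ket{w_\theta}}^2=\sum_{\{u,v\}\in E(G_x)}\theta(u,v)^2$ — the factor $\tfrac12$ that would arise from summing over both orientations does not appear because we sum over unoriented edges once. Therefore
$$
w_+(x,P)=\min_{\ket w\in H(x):A\ket w=\tau}\norm{\ket w}^2=\min_{\theta\in\mathcal F}\sum_{\{u,v\}\in E(G_x)}\theta(u,v)^2=R_{s,t}(G_x).
$$
This is off by the claimed factor of $\tfrac12$, so I must double-check the exact conventions in the definition of $A$: in \cite{BR12}'s span program $A=\sum_{u,v\in[n]}(\ket u-\ket v)\bra{u,v}$ ranges over \emph{ordered} pairs, so each geometric edge $\{u,v\}$ contributes two basis vectors $\ket{u,v}$ and $\ket{v,u}$, and a positive witness assigns a coefficient to each; flow antisymmetry $\theta(u,v)=-\theta(v,u)$ then forces these two coefficients to be negatives of each other, and $\norm{\ket{w_\theta}}^2=\sum_{(u,v)}\theta(u,v)^2=2\sum_{\{u,v\}}\theta(u,v)^2$, giving the factor $\tfrac12$. (Whichever convention is in force, the argument is the same; I will simply be careful to use the one matching \defin{span} and the displayed span program, and the $\tfrac12$ will come out correctly.)

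The main obstacle is purely bookkeeping: getting the orientation conventions consistent between (i) the index set over which $A$ sums, (ii) which edge vectors lie in $H(x)$, and (iii) the antisymmetry built into the definition of a unit flow, so that the constant in $w_+(x,P)=\tfrac12 R_{s,t}(G_x)$ is exactly right rather than off by a factor of $2$. There is no analytic difficulty — once the bijection $\theta\leftrightarrow\ket{w_\theta}$ is set up correctly, both the equivalence $x\in P_1\iff s\sim t$ and the norm identity are immediate from flow conservation and Kirchhoff's characterization of effective resistance as the minimum energy dissipation over unit flows. I would also note that when $s,t$ are disconnected there is no flow, hence no positive witness, hence $x\in P_0$, which is consistent with the stated "if and only if."
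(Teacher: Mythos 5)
The paper cites this lemma from \cite{BR12} and does not reproduce a proof, so your proposal should be judged on its own merits. The high-level strategy is the right one (identify positive witnesses in $H(x)$ with unit flows and reduce $w_+$ to the minimum flow energy), but the bookkeeping in your final paragraph is wrong, and it is wrong in a way that produces $2R_{s,t}$ rather than $\tfrac12 R_{s,t}$.

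Concretely, with $H(x)$ spanned by \emph{ordered} pairs $\ket{u,v}$ with $\{u,v\}\in E(G_x)$, if you set $\ket{w_\theta}=\sum_{(u,v)}\theta(u,v)\ket{u,v}$ (sum over both orientations), then the coefficient of $\ket{u}$ in $A\ket{w_\theta}$ is $\sum_{v}\theta(u,v)-\sum_v\theta(v,u)=2\sum_v\theta(u,v)$, so $A\ket{w_\theta}=2\tau$, not $\tau$. This vector is not a positive witness at all. And even setting that aside, $\norm{\ket{w_\theta}}^2 = 2\sum_{\{u,v\}}\theta(u,v)^2 = 2R_{s,t}$ for the optimal flow; you wrote this equation and then asserted that it "gives the factor $\tfrac12$," which is a sign/factor error. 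The correct witness is $\ket{w_\theta}=\tfrac12\sum_{(u,v)}\theta(u,v)\ket{u,v}$, which does satisfy $A\ket{w_\theta}=\tau$ and has $\norm{\ket{w_\theta}}^2=\tfrac14\cdot 2\sum_{\{u,v\}}\theta(u,v)^2=\tfrac12\sum_{\{u,v\}}\theta(u,v)^2$, which is where the $\tfrac12$ actually comes from. (Your first definition, summing over one orientation, gives a genuine witness but with squared norm $\sum_{\{u,v\}}\theta(u,v)^2=R_{s,t}$ --- it is simply not the \emph{optimal} one.)

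There is a second, smaller gap: your claim that $\theta\mapsto\ket{w_\theta}$ is a bijection between unit flows and positive witnesses is false. For any $\{u,v\}\in E(G_x)$, the symmetric vector $\ket{u,v}+\ket{v,u}$ lies in $\ker A\cap H(x)$, so it can be added to any positive witness to produce another, non-antisymmetric, positive witness. The correspondence is a bijection only onto the \emph{antisymmetric} positive witnesses. To conclude $w_+(x)=\tfrac12 R_{s,t}$, you must additionally argue that the minimizing witness is antisymmetric --- e.g., by decomposing any $\ket{w}\in H(x)$ into antisymmetric and symmetric parts, noting the symmetric part lies in $\ker A\cap H(x)$ and is orthogonal to the antisymmetric part, so dropping it leaves a shorter positive witness. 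With that lemma in hand, the equivalence $x\in P_1\iff s\sim t$ and the formula $w_+(x)=\tfrac12 R_{s,t}(G_x)$ both follow cleanly from your flow construction, after the correction above.
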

\noindent A near immediate consequence of this, combined with \thm{est-alg}, is the following.
\begin{theorem}\label{thm:resistance-effective-gap}
There exists a quantum algorithm for estimating $R_{s,t}(G_x)$ to accuracy $\eps$ with time complexity $\tO\(\frac{n\sqrt{R_{s,t}(G_x)}}{\eps^{3/2}}\)$ and space complexity $O(\log n)$. 
\end{theorem}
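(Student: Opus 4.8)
The plan is to derive \thm{resistance-effective-gap} as an almost immediate corollary of \thm{est-alg} applied to the $st$-connectivity span program $P$, using \lem{st-pos-witness} to identify the quantity being estimated, and then to supply two auxiliary facts: a bound on the approximate negative witness complexity of $P$, and the time-efficient implementability of the relevant unitaries.

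First I would set $X=\{x\in\{0,1\}^{n\times n}:\text{$s$ and $t$ are connected in $G_x$}\}$, which by \lem{st-pos-witness} equals $P_1$, and put $f(x)=w_+(x,P)=\frac12 R_{s,t}(G_x)$. Estimating $f$ to relative accuracy $\eps$ is exactly the task of estimating $R_{s,t}(G_x)$ to relative accuracy $\eps$ (relative accuracy is unaffected by the constant factor $\frac12$), and \thm{est-alg} produces a quantum algorithm for this using $\tO\bigl(\eps^{-3/2}\sqrt{w_+(x,P)\,\widetilde W_-}\bigr)$ queries, where $\widetilde W_-=\max_{x\in X}\tilde w_-(x,P)$. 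So the only new quantitative input needed is $\widetilde W_-=O(n^2)$.

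To bound $\tilde w_-(x,P)$ for connected $x$, I would exhibit an explicit min-error negative witness coming from the electrical flow and estimate its complexity. Identifying $\omega\in\mathcal L(\mathbb R^n,\mathbb R)$ with a potential vector $\ket r\in\mathbb R^n$ via $\omega=\bra r$, one has $\omega\tau=r_s-r_t$, $\omega A\Pi_{H(x)}=\sum_{(u,v):\{u,v\}\in E(G_x)}(r_u-r_v)\bra{u,v}$, and $\omega A=\sum_{u,v\in[n]}(r_u-r_v)\bra{u,v}$. Take $\theta$ to be the flow achieving the minimum in the definition of $R_{s,t}(G_x)$ (the electrical flow) and $\ket r$ the corresponding potential, harmonic away from $s,t$ and normalized so that $r_s-r_t=1$, so that $r_u-r_v=\theta(u,v)/R_{s,t}(G_x)$ along edges and $\omega\tau=1$. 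A one-line computation using standard electrical-network identities gives $\norm{\omega A\Pi_{H(x)}}^2=2\sum_{\{u,v\}\in E}(r_u-r_v)^2=2\bigl(\sum_{\{u,v\}\in E}\theta(u,v)^2\bigr)/R_{s,t}(G_x)^2=2/R_{s,t}(G_x)=1/w_+(x,P)=e_-(x,P)$, the last equality by \thm{equiv2}; hence this $\omega$ is a min-error negative witness. Since the potential of a unit $st$-flow lies between $r_t$ and $r_s$ at every vertex (each $r_v$ is a convex combination of $r_s$ and $r_t$ by harmonicity), we get $|r_u-r_v|\le 1$ for all $u,v$ and therefore $\norm{\omega A}^2=\sum_{u,v\in[n]}(r_u-r_v)^2\le n^2$. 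Thus $\tilde w_-(x,P)\le n^2$, so $\widetilde W_-\le n^2$, and \thm{est-alg} gives query complexity $\tO\bigl(\eps^{-3/2}\sqrt{\tfrac12 R_{s,t}(G_x)\cdot n^2}\bigr)=\tO\bigl(n\sqrt{R_{s,t}(G_x)}/\eps^{3/2}\bigr)$.

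Finally I would promote this to a time bound and verify the space bound. The algorithm behind \thm{est-alg} (via \thm{approx-alg}) is just phase estimation followed by amplitude estimation applied to the unitary $U'(P,x)$ — equivalently, after the normalization/scaling used in those proofs, $U'(P^\beta,x)$ for some $\beta=n^{O(1)}$ — run on the input state $\ket{w_0}$. In the adjacency model the reflection $2\Pi_{H(x)}-I$ acts on the $n^2$-dimensional space $H$ and flips the sign of $\ket{u,v}$ exactly when $x_{u,v}=0$, costing $O(1)$ queries and $\mathrm{polylog}(n)$ time; the other reflection $2\Pi_T-I$ is input-independent (it depends only on $\ker A$ for the fixed operator $A$), and $\ket{w_0}$ is an explicit, simple state. \app{time}, adapting the construction of \cite{BR12}, shows that $2\Pi_T-I$ can be implemented and $\ket{w_0}$ prepared in $\mathrm{polylog}(n)$ time, so the time complexity matches the query complexity up to logarithmic factors. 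Since every register involved — the $2\lceil\log n\rceil$-qubit space $H$, the phase-estimation register, and the amplitude-estimation ancillas — has size $O(\log n)$, the space complexity is $O(\log n)$. The routine parts of this argument are the energy computation $\norm{\omega A\Pi_{H(x)}}^2=2/R_{s,t}(G_x)$ and the invocation of \thm{est-alg}; the only genuine obstacle is the time-efficient implementation of $2\Pi_T-I$ and of $\ket{w_0}$, which is precisely the content deferred to \app{time} and built on the techniques of \cite{BR12}.
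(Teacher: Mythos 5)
Your proposal is correct and follows essentially the same route as the paper: bound $\widetilde W_-$ by $O(n^2)$ using the electrical potential (the unit-voltage harmonic function, whose values lie in $[0,1]$), feed this into \thm{est-alg} together with $w_+(x)=\tfrac12 R_{s,t}$ from \lem{st-pos-witness}, and invoke the time-efficient implementation from \app{time}. The only difference is stylistic — you explicitly verify via \thm{equiv2} that the electrical potential is a min-error negative witness, whereas the paper asserts it directly — and your constant in the bound $\widetilde W_-\le n^2$ is slightly tighter than the paper's $2n^2$, which is immaterial.
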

\begin{proof}
We merely observe that if $G$ is a connected graph, an approximate negative witness is $\omega:[n]\rightarrow\mathbb{R}$ that minimizes $\norm{\omega A\Pi_{H(x)}}^2=\sum_{\{u,v\}\in E}(\omega(u)-\omega(v))^2$ and satisfies $\omega(s)-\omega(t)=1$. That is, $\omega$ is the voltage induced by a unit potential difference between $s$ and $t$ (see \cite{DS84} for details). This is not unique, but if we fix $\omega(s)=1$ and $\omega(t)=0$, then the $\omega$ that minimizes $\norm{\omega A\Pi_{H(x)}}^2$ is unique, and this is without loss of generality. In that case, for all $u\in [n]$, $0\leq \omega(u)\leq 1$, so 
$$\textstyle\tilde{w}_-(x)=\norm{\omega A}^2=\sum_{u,v\in [n]}(\omega(u)-\omega(v))\leq 2n^2\qquad\mbox{ and thus }\qquad \widetilde{W}_-\leq 2n^2.$$
By \thm{est-alg}, we can estimate $R_{s,t}$ to precision $\eps$ using $\tO\(\frac{\sqrt{\widetilde{W}_-w_+(x)}}{\eps^{3/2}}\)=\tO\(\frac{n\sqrt{R_{s,t}(G_x)}}{\eps^{3/2}}\)$ calls to $U'(P^\beta,x)$ for some $\beta$, which, by \thm{time}, costs $O(\log n)$ time and space. 
\end{proof}

By analyzing the spectra of $A$ and $A(x)$, and applying \thm{est-alg-gap}, we can get an often better algorithm (\thm{resistance-real-gap}). The \emph{spectral gap} of a graph $G$, denoted $\lambda_2(G)$, is the second largest eigenvalue (including multiplicity) of the Laplacian of $G$, which is defined $L_G=\sum_{u\in [n]}d_u\ket{u}\bra{u}-\sum_{u\in [n]}\sum_{v\in\Gamma(u)}\ket{u}\bra{v}$, where $d_u$ is the degree of $u$, and $\Gamma(u)$ is the set of neighbours of $u$. 
The smallest eigenvalue of $L_G$ is $0$ for any graph $G$. A graph $G$ is connected if and only if $\lambda_2(G)>0$. A connected graph $G$ has $\frac{2}{n^2}\leq \lambda_2(G)\leq n$.

The following theorem is an improvement over \thm{resistance-effective-gap} when $\lambda_2(G)> \eps$. In particular, it is an improvement for all $\eps$ when we know that $\lambda_2(G)>1$.

\begin{theorem}\label{thm:resistance-real-gap}
Let $\mathcal{G}$ be a family of graphs such that for all $x\in \mathcal{G}$, $\lambda_2(G_x)\geq \mu$. Let $f:\mathcal{G}\times [n]\times [n]\rightarrow \mathbb{R}_{>0}$ be defined by $f(x,s,t)=R_{s,t}(G_x)$. There exists a quantum algorithm for estimating $f$ to relative accuracy $\eps$ that has time complexity $\tO\(\frac{1}{\eps}n\sqrt{R_{s,t}(G_x)/\mu}\)$ and space complexity $O(\log n)$.
\end{theorem}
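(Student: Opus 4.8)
The plan is to obtain \thm{resistance-real-gap} as an almost immediate consequence of the positive‑witness case of \thm{est-alg-gap-N}, applied to the $st$-connectivity span program $P$. Since $\lambda_2(G_x)\ge\mu>0$ forces $G_x$ to be connected, \lem{st-pos-witness} gives $x\in P_1$ and $w_+(x,P)=\tfrac12 R_{s,t}(G_x)$, so $\mathcal G\times[n]\times[n]\subseteq P_1$ and $f=R_{s,t}=2w_+(\cdot,P)$; estimating $R_{s,t}$ to relative accuracy $\eps$ is the same as estimating $w_+(\cdot,P)$ to relative accuracy $\eps$. It then remains only to evaluate the quantities entering the bound of \thm{est-alg-gap-N}: $\sigma_{\max}(A)$ and $\sigma_{\min}(A(x))$ (which bound $\kappa(f)$), and $N=\norm{\ket{w_0}}^2$.

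First I would read these off the explicit form $A=\sum_{u,v\in[n]}(\ket u-\ket v)\bra{u,v}$. The $u=v$ terms vanish, so $AA^\dagger=\sum_{u\ne v}(\ket u-\ket v)(\bra u-\bra v)=2L_{K_n}$, twice the Laplacian of the complete graph (each unordered pair contributing twice), and hence $\sigma_{\max}(A)^2=2\lambda_{\max}(L_{K_n})=2n$. Similarly $A(x)=A\Pi_{H(x)}=\sum_{(u,v):\{u,v\}\in E(G_x)}(\ket u-\ket v)\bra{u,v}$, so $A(x)A(x)^\dagger=2L_{G_x}$; since $G_x$ is connected, $L_{G_x}$ has a one-dimensional kernel and its smallest nonzero eigenvalue is $\lambda_2(G_x)$, whence $\sigma_{\min}(A(x))^2=2\lambda_2(G_x)\ge 2\mu$, giving $\kappa(f)=\max_x\frac{\sigma_{\max}(A)}{\sigma_{\min}(A(x))}\le\sqrt{n/\mu}$. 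For $N$, the minimal positive witness $\ket{w_0}=A^+\tau$ is the minimum-norm preimage of $\tau$ in all of $H$; since the complete-graph input has $H(x)=H$, this $\ket{w_0}$ is in particular its optimal positive witness, so by \lem{st-pos-witness} and $R_{s,t}(K_n)=2/n$ we get $N=\norm{\ket{w_0}}^2=\tfrac12 R_{s,t}(K_n)=\tfrac1n$.

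Substituting into the positive case of \thm{est-alg-gap-N} yields query complexity $\tO\(\frac{\kappa(f)}{\eps}\sqrt{w_+(x,P)/N}\)=\tO\(\frac{\sqrt{n/\mu}}{\eps}\sqrt{\tfrac12\, n R_{s,t}(G_x)}\)=\tO\(\frac1\eps\, n\sqrt{R_{s,t}(G_x)/\mu}\)$, the claimed bound. For the time complexity and the $O(\log n)$ space bound I would argue exactly as in the proof of \thm{resistance-effective-gap}: the algorithm of \thm{est-alg-gap-N} reduces, via \thm{gap-improvement}, to phase estimation of a unitary $U'(P^\beta,x)$ applied to the prepared state $\ket{w_0}$, each call to which — and the preparation of $\ket{w_0}$ — costs $O(\log n)$ time and space by \thm{time}.

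None of this is technically deep. The one place requiring care is the arithmetic: one must track that the $1/\sqrt\mu$ coming from the phase-gap lower bound via $\sigma_{\min}(A(x))$, the $\sqrt n$ coming from $1/\sqrt N$ (the $\Theta(1/n)$ effective resistance of $K_n$), and the $\sqrt{R_{s,t}}$ coming from $\sqrt{w_+}$ multiply to $n\sqrt{R_{s,t}/\mu}$ and not $\sqrt{nR_{s,t}/\mu}$. The only genuine subtlety is confirming that the target rescaling performed inside the proof of \thm{est-alg-gap-N} does not alter $A$ — hence leaves $\kappa(f)$ unchanged — and does not change the unitary $U'$, so that the efficient implementation from \app{time} applies; but this is precisely the situation already handled for \thm{resistance-effective-gap}.
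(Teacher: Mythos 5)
Your proposal is correct and follows essentially the same route as the paper: compute $N=\norm{\ket{w_0}}^2=1/n$ via $R_{s,t}(K_n)=2/n$, compute $\kappa(f)\le\sqrt{n/\mu}$ via $AA^\dagger=2L_{K_n}$ and $A(x)A(x)^\dagger=2L_{G_x}$, then invoke \thm{est-alg-gap-N} (positive case) together with \lem{st-pos-witness} and \thm{time}. Your remark that rescaling $\tau$ leaves $A$ and $U'$ unchanged is a correct and slightly more careful justification than the paper makes explicit, but it does not change the substance of the argument.
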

\begin{proof}
We will apply \thm{est-alg-gap}. We first compute $\norm{\ket{w_0}}^2$, in order to normalize $P$.
\begin{lemma}\label{lem:w0}$N=\norm{\ket{w_0}}^2=\frac{1}{n}$.
\end{lemma}
\begin{proof}
Since $H(x)=H$ when $G_x$ is the complete graph, by \lem{st-pos-witness}, we need only compute $R_{s,t}$ in the complete graph. It's simple to verify that the optimal unit $st$-flow in the complete graph has $\frac{1}{n}$ units of flow on every path of the form $(s,u,t)$ for $u\in [n]\setminus\{s,t\}$, and $\frac{2}{n}$ units of flow on the edge $(s,t)$. Thus, $R_{s,t}(K_n)=\sum_{u\in [n]\setminus \{s,t\}}2(1/n)^2+(2/n)^2=2/n$. Thus $\norm{\ket{w_0}}^2=\frac{1}{2}R_{s,t}(K_n)=\frac{1}{n}$.
\end{proof}
\noindent Next, we compute the following:
\begin{lemma}\label{lem:st-conn-phase}For any $x\in\mathcal{G}$, $\frac{\sigma_{\max}(A)}{\sigma_{\min}(A(x))}=\sqrt{\frac{n}{\lambda_2(G_x)}}\leq \sqrt{\frac{n}{\mu}}$, so $\kappa(f)\leq \sqrt{\frac{n}{\mu}}$.
\end{lemma}
\begin{proof}
Let $L_x$ denote the Laplacian of $G_x$. 
We have: 
$$A(x)A(x)^T = \sum_{u\in [n]}\sum_{v\in\Gamma(u)}(\ket{u}-\ket{v})(\bra{u}-\bra{v})=2\sum_{u\in [n]}d_u\ket{u}\bra{u}-2\sum_{u\in [n]}\sum_{v\in\Gamma(u)}\ket{u}\bra{v}=2L_x.$$
Thus, if $L$ denotes the Laplacian of the complete graph, we also have $AA^T=2L$.
Letting $J$ denote the all ones matrix, we have $L=(n-1)I-(J-I)=nI-J$, and since $J=n\ket{u}\bra{u}$ where $\ket{u}=\frac{1}{\sqrt{n}}\sum_{i=1}^n\ket{i}$, if $\ket{u_1},\dots,\ket{u_{n-1}},\ket{u}$ is any orthonormal basis of $\mathbb{R}^n$, then $L=n\sum_{i=1}^{n-1}\ket{u_i}\bra{u_i}+n\ket{u}\bra{u}-n\ket{u}\bra{u}=\sum_{i=1}^{n-1}n\ket{u_i}\bra{u_i}$, so the spectrum of $L$ is $0$, with multiplicity $1$, and $n$ with multiplicity $n-1$. Thus, the only nonzero singular value of $A$ is $\sqrt{2n}=\sigma_{\max}(A)$.
Furthermore, since $\lambda_2(G_x)$ is the smallest nonzero eigenvalue of $L_x$, and $A(x)A(x)^T=2L_x$, $\sigma_{\min}(A(x))=\sqrt{2\lambda_2(G_x)}$. 
The result follows.
\end{proof}
Finally, by \lem{st-pos-witness}, we have $w_+(x,P)=\frac{1}{2}R_{s,t}(G_x)$, so, applying \thm{est-alg-gap-N}, we get an algorithm that makes $\tO\(\frac{\kappa(f)}{\eps}\sqrt{\frac{w_+(x,P)}{N}}\)=\tO\(\frac{1}{\eps}\sqrt{n/\mu}\sqrt{R_{s,t}n}\)$ calls to $U'(P,x)$. By \thm{time}, this algorithm has time complexity $\tO\(\frac{1}{\eps}n\sqrt{R_{s,t}/\mu}\)$ and space complexity $O(\log n)$.
\end{proof}

Both of our upper bounds have linear dependence on $n$, and the following theorem shows that this is optimal.
\begin{theorem}[Lower Bound]
There exists a family of graphs $\mathcal{G}$ such that estimating effective resistance on $\mathcal{G}$ costs at least $\Omega(n)$ queries. 
\end{theorem}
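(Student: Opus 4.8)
The plan is to exhibit a family $\mathcal{G}$ on which estimating $R_{s,t}$ to any sufficiently small constant relative accuracy $\eps$ is at least as hard as distinguishing a Hamming-weight-$1$ string from a Hamming-weight-$2$ string on $\Theta(n^2)$ bits, which requires $\Omega(n)$ quantum queries. First I would fix the vertex set $[n]$ with $s=1$, $t=2$, choose $m=\lfloor (n-2)/2\rfloor$ ``left'' vertices $\ell_1,\dots,\ell_m$ and $m$ ``right'' vertices $r_1,\dots,r_m$ (and, if $n$ is odd, one leftover isolated vertex). Every graph in $\mathcal{G}$ contains the fixed scaffold edges $\{s,\ell_i\}$ and $\{t,r_j\}$ for all $i,j\in[m]$; the only freedom is a set $S$ of ``bridge'' edges, each of the form $\{\ell_i,r_j\}$, and I take $\mathcal{G}$ to consist of exactly the graphs with $1\le|S|\le 2$. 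Concretely, the input $x\in\{0,1\}^{n\times n}$ is forced to $1$ on scaffold positions and to $0$ on all positions other than scaffold and bridge positions, while the bridge positions $x_{\ell_i,r_j}$ carry a string $y\in\{0,1\}^{m^2}$ with $|y|\in\{1,2\}$. Since the scaffold and forced-$0$ entries are known a priori, a query to them conveys no information, so a query algorithm on $\mathcal{G}$ is effectively a query algorithm on $y\in\{0,1\}^{m^2}$, with $m^2=\Omega(n^2)$.

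Next I would compute $R_{s,t}(G_x)$ in each case. Every $\ell_i$ or $r_j$ not incident to a bridge edge is a pendant vertex and carries no current, so it does not affect the effective resistance. If $|S|=1$, say $S=\{\{\ell_1,r_1\}\}$, the unique $s$--$t$ path is $s-\ell_1-r_1-t$, three unit resistors in series, so $R_{s,t}=3$. If $|S|=2$ and the two bridge edges form a matching, there are two internally disjoint $s$--$t$ paths of resistance $3$ in parallel, so $R_{s,t}=3/2$; if the two bridges share an endpoint (say both incident to $\ell_1$), then $s-\ell_1$ is one unit resistor in series with a parallel pair of resistance-$2$ paths, giving $R_{s,t}=1+1=2$. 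Thus $|S|=1\Rightarrow R_{s,t}=3$, while $|S|=2\Rightarrow R_{s,t}\in\{3/2,2\}\subseteq[1,2]$. Consequently any estimate $\tilde R$ with $|\tilde R-R_{s,t}|\le\eps R_{s,t}$ for $\eps=1/6$ satisfies $\tilde R\ge 5/2$ when $|S|=1$ and $\tilde R\le 7/3$ when $|S|=2$; since $5/2>7/3$, the value of $\tilde R$ determines $|S|$.

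Finally I would invoke the lower bound: a bounded-error algorithm estimating $R_{s,t}$ on $\mathcal{G}$ to relative accuracy $\eps=1/6$ yields, by the previous paragraph, a bounded-error algorithm distinguishing $|y|=1$ from $|y|=2$ for $y\in\{0,1\}^{m^2}$ using the same number of queries. The latter task requires $\Omega(\sqrt{m^2})=\Omega(m)=\Omega(n)$ quantum queries: appending a fixed $1$ bit reduces the weight-$0$-vs-$1$ problem (unstructured search / OR) on $m^2$ bits to it, and that problem needs $\Omega(\sqrt{m^2})=\Omega(m)$ queries by \cite{BBBV97} (see also the approximate-counting lower bounds of \cite{BBCMdW01}). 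Hence estimating effective resistance on $\mathcal{G}$ costs $\Omega(n)$ queries. I expect the only real care needed is in the bookkeeping of the construction --- choosing the bridge gadget so that the hidden string has length $\Theta(n^2)$ rather than $\Theta(n)$ (so that the search lower bound yields $\Omega(n)$ and not merely $\Omega(\sqrt n)$), and handling the $|S|=2$ shared-endpoint case so that the two promise classes stay separated by a constant factor; the effective-resistance computations themselves are elementary series/parallel reductions.
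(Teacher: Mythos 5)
Your proof is correct and rests on the same central idea as the paper's: a two-star gadget with $s$ and $t$ at the centers, with the free ``bridge'' edges between leaves of opposite stars encoding a $\Theta(n^2)$-bit search instance, so that the $\Omega(\sqrt{n^2})=\Omega(n)$ lower bound for OR transfers to effective-resistance estimation. The one substantive difference is that the paper also includes a fixed edge $\{s,t\}$ in the scaffold, so $\mathcal{G}_0$ (no bridges) has $R_{s,t}=1$ and any bridge drops it to $\le 3/4$; this lets the paper invoke the OR lower bound directly on the bridge positions (weight $0$ vs weight $\ge 1$), with no case analysis. You drop the $\{s,t\}$ edge, which forces you to work inside the promise $|S|\in\{1,2\}$, adds a (correct but unnecessary) reduction from weight-$0$-vs-$1$ to weight-$1$-vs-$2$, and requires you to handle the shared-endpoint subcase of $|S|=2$ to keep the resistance gap constant. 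Both arguments go through; the paper's choice of scaffold just makes the resistance calculation a one-liner and avoids the extra reduction step.
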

\begin{proof}
Let $\mathcal{G}_0$ be the set of graphs consisting of two stars $K_{1,n/2-1}$, centered at $s$ and $t$, with an edge connecting $s$ and $t$ (see \fig{st-lb}). Let $\mathcal{G}_1$ be the set of graphs consisting of graphs from $\mathcal{G}_0$ with a single edge added between two degree one vertices from different stars. Let $\mathcal{G}=\mathcal{G}_0\cup\mathcal{G}_1$. We first note that we can distinguish between $\mathcal{G}_0$ and $\mathcal{G}_1$ by estimating effective resistance on $\mathcal{G}$ to accuracy $\frac{1}{10}$: If $G\in\mathcal{G}_0$, then there is a single $st$-path, consisting of one edge, so the effective resistance is $1$. If $G\in\mathcal{G}_1$, then there are two $st$-paths, one of length 1 and one of length 3. We put a flow of $\frac{1}{4}$ on the length-3 path and $\frac{3}{4}$ on the length-1 path to get effective resistance at most $(3/4)^2+3(1/4)^2=\frac{3}{4}$. 

\begin{figure}
\centering
\begin{tikzpicture}[scale=1.2]
\draw[dashed,color=gray] (1,.75)--(2,.25);
\draw[dashed,color=gray] (1,.75)--(2,.5);
\draw[dashed,color=gray] (1,.75)--(2,.75);
\draw[dashed,color=gray] (1,.5)--(2,.25);
\draw[dashed,color=gray] (1,.5)--(2,.5);
\draw[dashed,color=gray] (1,.5)--(2,.75);
\draw[dashed,color=gray] (1,.25)--(2,.25);
\draw[dashed,color=gray] (1,.25)--(2,.5);
\draw[dashed,color=gray] (1,.25)--(2,.75);

\filldraw (0,0) circle (.05);
\filldraw (3,0) circle (.05);
\draw (0,0)--(3,0);

\filldraw (1,.25) circle (.05);
\filldraw (1,.5) circle (.05);
\filldraw (1,.75) circle (.05);

\filldraw (2,.25) circle (.05);
\filldraw (2,.5) circle (.05);
\filldraw (2,.75) circle (.05);

\draw (0,0) -- (1,.25);
\draw (0,0) -- (1,.5);
\draw (0,0) -- (1,.75);

\draw (3,0) -- (2,.25);
\draw (3,0) -- (2,.5);
\draw (3,0) -- (2,.75);

\node at (-.25,0) {$s$};
\node at (3.25,0) {$t$};
\end{tikzpicture}
\caption{The graphs in $\mathcal{G}_0$ contain only the solid edges. The graphs in $\mathcal{G}_1$ contain the solid edges and one of the dashed edges. We can embed an instance of OR in the dashed edges. If one of the dashed edges is included, the number of $st$-paths increases to 2, decreasing the effective resistance.}\label{fig:st-lb}
\end{figure}
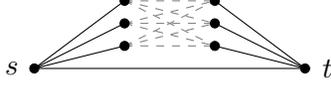

We now describe how to embed an instance $y\in \{0,1\}^{(n/2-1)^2}$ of OR$_{(n/2-1)^2}$ in a graph. We let $s=1$ be connected to every vertex in $\{2,\dots,n/2\}$, and $t=n$ be connected to every vertex in $\{n/2+1,\dots,n-1\}$. Let the values of $\{G_{i,j}:i\in \{2,\dots,n/2\},j\in \{n/2,\dots,n-1\}\}$ be determined by $y$. Let all other values $G_{i,j}$ be 0. Then clearly $R_{s,t}(G)\geq 1$ if and only if $y=0\dots 0$ (in that case $G\in \mathcal{G}_0$) and otherwise, $R_{s,t}(G)\leq 3/4$, since there is at least one extra path from $s$ to $t$ (in that case $G\in \mathcal{G}_1$). 
The result follows from the lower bound of $\Omega(\sqrt{(n/2-1)^2})=\Omega(n)$ on OR$_{(n/2-1)^2}$.
\end{proof}

\paragraph{Discussion} The algorithms from \thmthm{resistance-effective-gap}{resistance-real-gap} are the first quantum algorithms for estimating the effective resistance in the adjacency model, however, the problem has been studied previously in the edge-list model \cite{Wan13}, where Wang obtains a quantum algorithm with complexity $\tO\(\frac{d^{3/2}\log n}{\Phi(G)^2\eps}\)$, where $\Phi(G)\leq 1$ is the conductance (or edge-expansion) of $G$. In the edge-list model, the input $x\in [n]^{[n]\times [d]}$ models a $d$-regular graph (or $d$-bounded degree graph) $G_x$ by $x_{u,i}=v$ for some $i\in [d]$ whenever $\{u,v\}\in E(G_x)$.
Wang requires edge-list queries to simulate walking on the graph, which requires constructing a superposition over all neighbours of a given vertex. This type of edge-list query can be simulated by $\sqrt{n/d}$ adjacency queries to a $d$-regular graph, using quantum search, so Wang's algorithm can be converted to an algorithm in the adjacency query model with cost $\tO\(\frac{d^{3/2}}{\Phi(G)^2\eps}\sqrt{\frac{n}{d}}\)$. We can compare our results to this by noticing that $R_{s,t}\leq \frac{1}{\lambda_2(G)}$ \cite{CRRST96}, implying that our algorithm always runs in time at most $\tO\(\frac{1}{\eps}\frac{n}{\mu}\)$. 
If $G$ is a connected $d$-regular graph, then $\lambda_2(G)=d\delta(G)$, where $\delta(G)$ is the spectral gap of a random walk on $G$. By Cheeger inequalities, we have $\frac{\Phi^2}{2}\leq \delta$ \cite{LPW09}, so the complexity of the algorithm from \thm{resistance-real-gap} is at most $\tO\(\frac{1}{\eps}\frac{n}{d\delta}\)=\tO\(\frac{1}{\eps}\frac{n}{d\Phi^2}\)$, which is an improvement over the bound of $\tO\(\frac{1}{\eps}\frac{d^{3/2}}{\Phi^2}\sqrt{\frac{n}{d}}\)=\tO\(\frac{1}{\eps}\frac{d}{\Phi^2}\sqrt{n}\)$ given by naively adapting Wang's algorithm to the adjacency model whenever $d>\sqrt[4]{n}$. In general our upper bound may be much better than $\frac{1}{\eps}\frac{n}{d\Phi^2}$, since the Cheeger inequality is not tight, and $R_{s,t}$ can be much smaller than $\frac{1}{\lambda_2}$. 

It is worth further discussing Wang's algorithms for estimating effective resistance, due to their relationship with the ideas presented here. In order to get a time-efficient algorithm for $st$-connectivity, Belovs and Reichardt show how to efficiently reflect about the kernel of $A$ (see also \app{time}), $A$ being related to the Laplacian of a complete graph, $L$, by $AA^T=2L$. This implementation consists, in part, of a quantum walk on the complete graph. Wang's algorithm directly implements a reflection about the kernel of $A(x)$ by instead using a quantum walk on the graph $G$, which can be done efficiently in the edge-list model. For general span programs, when a reflection about the kernel of $A(x)$ can be implemented efficiently in such a direct way, this can lead to an efficient quantum algorithm for estimating the witness size. 

We also remark on another quantum algorithm for estimating effective resistance, also from \cite{Wan13}. This algorithm has the worse complexity $\tO\(\frac{d^8\mathrm{polylog}n}{\Phi(G)^{10}\eps^2}\)$, and is obtained by using the HHL algorithm \cite{HHL09} to estimate $\norm{A(x)^+\ket{\tau}}^2$, which is the positive witness size of $x$, or in this case, the effective resistance. We remark that, for any span program, $w_+(x)=\norm{\ket{w_x}}^2=\norm{A(x)^+\ket{\tau}}^2$, so HHL may be another means of estimating the positive witness size. There are several caveats: $A(x)$ must be efficiently row-computable, and the complexity additionally depends on $\frac{\sigma_{\max}(A(x))}{\sigma_{\min}(A(x))}$, the \emph{condition number} of $A(x)$ (We remark that this is upper bounded by $\frac{\sigma_{\max}(A)}{\sigma_{\min}(A(x))}$, upon which the complexity of some of our algorithms depends as well). However, if this approach yields an efficient algorithm, it is efficient in time complexity, not only query complexity. We leave further exploration of this idea for future research.

\section{Conclusion and Open Problems}\label{sec:open}

\paragraph{Summary} We have presented several new techniques for turning span programs into quantum algorithms, which we hope will have future applications. Specifically, given a span program $P$, in addition to algorithms for deciding any function $f$ such that $f^{-1}(0)\subseteq P_0$ and $f^{-1}(1)\subseteq P_1$, we also show how to get several different algorithms for deciding a number of related threshold problems, as well as estimating the witness size. In addition to algorithms based on the standard effective spectral gap lemma, we also show how to get algorithms by analyzing the real phase gap.

We hope that the importance of this work lies not only in its potential for applications, but in the improved understanding of the structure and  power of span programs. A number of very important quantum algorithms rely on a similar structure, using phase estimation of a unitary that depends on the input to distinguish between different types of inputs. Span-program-based algorithms represent a very general class of such algorithms, making them not only important to the study of the quantum query model, but to quantum algorithms in general. 

\paragraph{Further Applications} The main avenue for future work is in applications of our techniques to obtain new quantum algorithms. We stress that \emph{any} span program for a decision problem can now be turned into an algorithm for estimating the positive or negative witness size, if these correspond to some meaningful function, or deciding threshold functions related to the witness size. A natural source of potential future applications is in the rich area of property testing problems (for a survey, see \cite{MdW13}).

\paragraph{Span Programs and HHL} 
One final open problem, briefly discussed at the end of the previous section, is the relationship between estimating the witness size and the HHL algorithm  \cite{HHL09}. The HHL algorithm can be used to estimate $\norm{M^+\ket{u}}^2$, given the state $\ket{u}$ and access to a row-computable linear operator $M$. When $M=A(x)$, this quantity is exactly $w_+(x)$, so if $A(x)$ is row-computable --- that is, there is an efficient procedure for computing the \th{$i$} nonzero entry of the \th{$j$} row of $A(x)$, then HHL gives us yet another means of estimating the witness size, whose time complexity is known, rather than only its query complexity. 
It may be interesting to explore this connection further.

\section{Acknowledgements}

The authors would like to thank David Gosset, Shelby Kimmel, Ben Reichardt, and Guoming Wang
for useful discussions about span programs. 
We would especially like to thank Shelby Kimmel for valuable feedback and suggestions on an earlier draft of this paper. 
Finally, S.J.\ would like to thank Moritz Ernst for acting as a sounding board throughout the writing of this paper.

{\small 
\bibliographystyle{alpha}
\bibliography{refs}
}

\appendix

\section{Span Program Scaling}\label{app:scaling}

In this section we prove \thm{scaling}. Let $P=(H,V,\tau,A)$ be any span program on $[q]^n$, and let $N=\norm{\ket{w_0}}^2$ for $\ket{w_0}$ the optimal positive witness of $P$. We define $P^\beta=(H^{\beta},A^{\beta},\tau^{\beta},V^{\beta})$ as follows. Let $\ket{\hat{0}}$ and $\ket{\hat{1}}$ be two vectors orthogonal to $H$ and $V$. We define:
$$
\forall j\in [n], a\in [q], H_{j,a}^{\beta}=H_{j,a},\quad
H^{\beta}_{\mathrm{true}}=H_{\mathrm{true}}\oplus \mathrm{span}\{\ket{\hat 1}\},\quad
H^{\beta}_{\mathrm{false}}=H_{\mathrm{false}}\oplus \mathrm{span}\{\ket{\hat 0}\}$$
$$V^{\beta}=V\oplus\mathrm{span}\{\ket{\hat 1}\},\quad 
A^{\beta}=\beta A+\tau\ket{\hat 0}+\frac{\sqrt{\beta^2+N}}{\beta}\ket{\hat 1}\bra{\hat 1},\quad
\tau^{\beta}=\tau+\ket{\hat 1}$$
We then have and $H^{\beta}=H\oplus \mathrm{span}\{\ket{\hat 0},\ket{\hat{1}}\}$ and $H^{\beta}(x)=H(x)\oplus \mathrm{span}\{\ket{\hat 1}\}$. 
In order to prove \thm{scaling}, we will show that:
\begin{itemize}
\item For all $x\in P_1$, $w_+(x,P^{\beta})=\frac{1}{\beta^2}w_+(x,P)+\frac{\beta^2}{N+\beta^2}$ and $\tilde{w}_-(x,P^{\beta})\leq \beta^2\tilde{w}_-(x,P)+2$;
\item for all $x\in P_0$, $w_-(x,P^{\beta})=\beta^2 w_-(x,P)+1$ and $\tilde{w}_+(x,P^{\beta})\leq \frac{1}{\beta^2}\tilde{w}_+(x,P)+2$;
\item the smallest witness in $P^{\beta}$ is $\ket{w_0^\beta}=\frac{\beta}{\beta^2+N}\ket{w_0}+\frac{N}{\beta^2+N}\ket{\hat 0}+\frac{\beta}{\sqrt{\beta^2+N}}\ket{\hat 1}$, and $\norm{\ket{w_0^\beta}}^2=1$.
\end{itemize}

\begin{lemma}\label{lem:scaled-w0}
The smallest witness in $P^{\beta}$ is $\ket{w_0^\beta}=\frac{\beta}{\beta^2+N}\ket{w_0}+\frac{N}{\beta^2+N}\ket{\hat 0}+\frac{\beta}{\sqrt{\beta^2+N}}\ket{\hat 1}$. It is easily verified that $\norm{\ket{w_0^\beta}}^2=1$.
\end{lemma}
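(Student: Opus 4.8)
The plan is to verify the two asserted properties of $\ket{w_0^\beta}$ directly --- that it is a positive witness of $P^\beta$ and that it has norm $1$ --- and then to show it is the \emph{minimal} positive witness by checking that it lies in $(\ker A^\beta)^\bot$.

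First I would confirm $A^\beta\ket{w_0^\beta}=\tau^\beta$. Since $\ket{\hat 0}$ and $\ket{\hat 1}$ are orthogonal to $H$, in $P^\beta$ the operator $A$ (extended by $0$ to $H^\beta$) annihilates them, while $\bra{\hat 0}$ and $\bra{\hat 1}$ annihilate $H$; so applying $A^\beta=\beta A+\ket{\tau}\bra{\hat 0}+\frac{\sqrt{\beta^2+N}}{\beta}\ket{\hat 1}\bra{\hat 1}$ to $\ket{w_0^\beta}$ term by term uses only $A\ket{w_0}=\tau$, and yields $\frac{\beta^2}{\beta^2+N}\ket{\tau}+\frac{N}{\beta^2+N}\ket{\tau}+\ket{\hat 1}=\ket{\tau}+\ket{\hat 1}=\tau^\beta$. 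The norm computation is equally routine: $\ket{w_0}$, $\ket{\hat 0}$, $\ket{\hat 1}$ are mutually orthogonal and $\norm{\ket{w_0}}^2=N$, so $\norm{\ket{w_0^\beta}}^2=\frac{\beta^2 N}{(\beta^2+N)^2}+\frac{N^2}{(\beta^2+N)^2}+\frac{\beta^2}{\beta^2+N}=\frac{N}{\beta^2+N}+\frac{\beta^2}{\beta^2+N}=1$.

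The substantive step is showing $\ket{w_0^\beta}$ is the \emph{smallest} positive witness, which (by the characterization of minimal witnesses recalled just before \defin{normalized}) amounts to showing $\ket{w_0^\beta}\in(\ker A^\beta)^\bot$. I would first describe $\ker A^\beta$: writing a general vector of $H^\beta$ as $\ket{h}=\ket{h_H}+c_0\ket{\hat 0}+c_1\ket{\hat 1}$ with $\ket{h_H}\in H$, the condition $A^\beta\ket{h}=0$ forces $c_1=0$ (comparing $\ket{\hat 1}$-components, which are orthogonal to $V$) and $A\ket{h_H}=-\frac{c_0}{\beta}\ket{\tau}$. Decomposing $\ket{h_H}$ along $(\ker A)^\bot\oplus\ker A$, its $(\ker A)^\bot$-component is the unique preimage of $-\frac{c_0}{\beta}\tau$ lying in $(\ker A)^\bot$, namely $A^+\!\left(-\frac{c_0}{\beta}\tau\right)=-\frac{c_0}{\beta}\ket{w_0}$; since $\ket{w_0}\in(\ker A)^\bot$, this gives $\braket{w_0}{h_H}=-\frac{c_0 N}{\beta}$. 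Therefore $\braket{w_0^\beta}{h}=\frac{\beta}{\beta^2+N}\braket{w_0}{h_H}+\frac{N}{\beta^2+N}c_0=0$, so $\ket{w_0^\beta}\perp\ker A^\beta$, and being a positive witness it is the minimal one; this also justifies the notation $\ket{w_0^\beta}$ used in \thm{scaling}.

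I expect Step 3 to be the only real obstacle: one must correctly pin down $\ker A^\beta$ and, in particular, use the uniqueness of $\ket{w_0}=A^+\tau$ as the preimage of $\tau$ in $(\ker A)^\bot$ to evaluate $\braket{w_0}{h_H}$. The first two steps are pure bookkeeping with the orthogonality of $H$, $\ket{\hat 0}$, and $\ket{\hat 1}$.
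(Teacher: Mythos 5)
Your proof is correct, but it takes a genuinely different route from the paper's. The paper parametrizes a general positive witness of $P^\beta$ as $\ket{h}+b\ket{\hat 0}+c\ket{\hat 1}$, observes from $A^\beta\ket{w_0'}=\tau^\beta$ that $c$ is forced and $\ket{h}=\frac{1-b}{\beta}\ket{w}$ for some positive witness $\ket{w}$ of $P$, and then directly minimizes $\norm{\ket{w_0'}}^2=\frac{(1-b)^2}{\beta^2}\norm{\ket{w}}^2+b^2+\frac{\beta^2}{\beta^2+N}$ over $\ket{w}$ and $b$, arriving at $\ket{w_0^\beta}$ as the optimizer. You instead \emph{verify} that the stated vector solves $A^\beta\ket{w_0^\beta}=\tau^\beta$ and then appeal to the characterization (stated just before \defin{normalized}) that the minimal positive witness is the unique witness in $(\ker A^\beta)^\bot$; you pin down $\ker A^\beta$ explicitly and check orthogonality. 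Both are sound. The paper's derivation is constructive --- it explains where the coefficients come from --- while yours is a cleaner verification once the answer is known, and it makes the ``$\ket{w_0^\beta}=(A^\beta)^+\tau^\beta$'' structure transparent. Your kernel computation is correct: forcing $c_1=0$ from the $\ket{\hat 1}$-component, then $A\ket{h_H}=-\frac{c_0}{\beta}\tau$, so the $(\ker A)^\bot$-part of $\ket{h_H}$ is $-\frac{c_0}{\beta}\ket{w_0}$, giving $\braket{w_0}{h_H}=-\frac{c_0 N}{\beta}$ and the inner product with $\ket{w_0^\beta}$ vanishing exactly. One small remark: the characterization you invoke is stated in the paper for a generic span program but only asserted, not proved (``a simple exercise''); since you are explicitly leaning on it, you might note that it follows from the orthogonal decomposition $W=\ket{w_0^\beta}+\ker A^\beta$ and Pythagoras.
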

\begin{proof}
Let $\ket{w_0'}=\ket{h}+b\ket{\hat 0}+c\ket{\hat 1}$ be the smallest witness in $P^{\beta}$, for some $\ket{h}\in H$. Since $A^{\beta}\ket{w_0'}=\beta  A\ket{h}+b\tau+c\frac{\sqrt{\beta^2+N}}{\beta}\ket{\hat 1}=\tau+\ket{\hat 1}$, we must have $c=\frac{\beta}{\sqrt{\beta^2+N}}$ and $A\ket{h}=\frac{1-b}{\beta}\tau$, so $\ket{h}=\frac{1-b}{\beta}\ket{w}$ for some positive witness $\ket{w}$ of $P$. We have:
$$\norm{\ket{w_0'}}^2 = \frac{(1-b)^2}{\beta^2}\norm{\ket{w}}^2+b^2+\frac{\beta^2}{\beta^2+N}.$$
This is minimized by taking $\ket{w}=\ket{w_0}$, the smallest witness of $P$, and setting $b=\frac{N}{\beta^2+N}$, giving:
$$\ket{w_0^\beta} = \frac{\beta}{\beta^2+N}\ket{w_0}+\frac{N}{\beta^2+N}\ket{\hat 0}+\frac{\beta}{\sqrt{\beta^2+N}}\ket{\hat 1}.\qedhere$$
\end{proof}

\begin{lemma}
For all $x\in P_1$, $w_+(x,P^{\beta})=\frac{1}{\beta^2}w_+(x,P)+\frac{\beta^2}{N+\beta^2}$ and $\tilde{w}_-(x,P^\beta)\leq \beta^2\tilde{w}_-(x,P)+2$.
\end{lemma}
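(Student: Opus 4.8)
The plan is to prove the two claims separately, in each case by lifting an optimal witness of $P$ to a witness of $P^\beta$ and then optimizing the single extra degree of freedom introduced by the new coordinate $\ket{\hat 1}$.

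\emph{Positive witness size.} First I would describe all positive witnesses of $x$ in $P^\beta$. Since $H^\beta(x)=H(x)\oplus\span\{\ket{\hat 1}\}$ and $\ket{\hat 0}\notin H^\beta(x)$, any positive witness has the form $\ket{w'}=\ket{h}+c\ket{\hat 1}$ with $\ket{h}\in H(x)$. Because $\ket{\hat 0}$ and $\ket{\hat 1}$ are orthogonal to $H$ and $V$, the equation $A^\beta\ket{w'}=\tau^\beta=\ket{\tau}+\ket{\hat 1}$ decouples into $\beta A\ket{h}=\tau$ and $c\,\frac{\sqrt{\beta^2+N}}{\beta}=1$. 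Hence $\ket{h}=\frac1\beta\ket{w}$ for some positive witness $\ket{w}$ of $x$ in $P$ and $c=\frac{\beta}{\sqrt{\beta^2+N}}$, so $\norm{\ket{w'}}^2=\frac1{\beta^2}\norm{\ket{w}}^2+\frac{\beta^2}{\beta^2+N}$. Taking $\ket{w}$ optimal gives $w_+(x,P^\beta)=\frac1{\beta^2}w_+(x,P)+\frac{\beta^2}{\beta^2+N}$.

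\emph{Approximate negative witness size.} Let $\tilde\omega$ be an optimal min-error negative witness for $x$ in $P$, so $\tilde\omega(\tau)=1$, $\norm{\tilde\omega A\Pi_{H(x)}}^2=e_-(x,P)=1/w_+(x,P)$ (by \thm{equiv2}), and $\norm{\tilde\omega A}^2=\tilde w_-(x,P)$. For $a+b=1$ define $\omega^\beta\in\mathcal L(V^\beta,\mathbb R)$ by $\omega^\beta|_V=a\tilde\omega$ and $\omega^\beta(\ket{\hat 1})=b$; then $\omega^\beta(\tau^\beta)=a+b=1$. A direct computation gives $\omega^\beta A^\beta=a\beta\,\tilde\omega A+a\bra{\hat 0}+\frac{b\sqrt{\beta^2+N}}{\beta}\bra{\hat 1}$, and since the three summands are mutually orthogonal and $H^\beta(x)=H(x)\oplus\span\{\ket{\hat 1}\}$,
\begin{align*}
\norm{\omega^\beta A^\beta\Pi_{H^\beta(x)}}^2 &= a^2\beta^2\,e_-(x,P)+\frac{b^2(\beta^2+N)}{\beta^2},\\
\norm{\omega^\beta A^\beta}^2 &= a^2\beta^2\,\tilde w_-(x,P)+a^2+\frac{b^2(\beta^2+N)}{\beta^2}.
\end{align*}
I would then pick $a$ to minimize the first expression over $a+b=1$ (a one-variable quadratic), and check by algebra that the resulting minimum equals $1/w_+(x,P^\beta)$, which by \thm{equiv2} is exactly $e_-(x,P^\beta)$; this uses $e_-(x,P)=1/w_+(x,P)$ together with the formula for $w_+(x,P^\beta)$ just derived. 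Granting this identity, the optimizing $\omega^\beta$ is a genuine min-error negative witness for $x$ in $P^\beta$, so $\tilde w_-(x,P^\beta)\le\norm{\omega^\beta A^\beta}^2$. Finally, $a^2\le1$ since $a\in[0,1]$, and $\frac{b^2(\beta^2+N)}{\beta^2}\le\norm{\omega^\beta A^\beta\Pi_{H^\beta(x)}}^2=e_-(x,P^\beta)=1/w_+(x,P^\beta)\le1$, the last step because $w_+(x,P^\beta)\ge N_+(P^\beta)=\norm{\ket{w_0^\beta}}^2=1$ by \lem{scaled-w0}. Hence $\tilde w_-(x,P^\beta)\le\beta^2\tilde w_-(x,P)+2$.

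\emph{Main obstacle.} The delicate point is the negative-witness part: one must choose the right one-parameter family of lifted functionals and, crucially, verify that the least error attainable inside that family coincides with the true $e_-(x,P^\beta)$ — otherwise the construction yields only some negative witness rather than a min-error one, and the bound on $\tilde w_-(x,P^\beta)$ would not follow. The positive-witness computation and the bounding of the two unit slack terms are routine once the $w_+(x,P^\beta)$ formula and the normalization $\norm{\ket{w_0^\beta}}^2=1$ are in hand.
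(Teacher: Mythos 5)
Your proposal is correct and follows essentially the same route as the paper: the positive-witness part is identical, and for the negative-witness part you lift $\tilde\omega$ to $a\tilde\omega+b\bra{\hat 1}$ with $a+b=1$ and observe that the $a$ minimizing the error attains exactly $1/w_+(x,P^\beta)=e_-(x,P^\beta)$ — this optimizer is precisely the explicit $\tilde\omega'$ the paper writes down. The only noteworthy difference is your final bounding step: you control $\frac{b^2(\beta^2+N)}{\beta^2}\le e_-(x,P^\beta)\le 1$ via normalization of $P^\beta$, which is a bit cleaner than the paper's direct algebraic estimate using $w_+(x,P)\ge N$, but yields the same $\beta^2\tilde w_-(x,P)+2$.
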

\begin{proof}
The proof is similar to that of \lem{scaled-w0}, however, we have $H^{\beta}(x)=H(x)\oplus\mathrm{span}\{\ket{\hat 1}\}$, so a positive witness for $x$ has the form $\ket{w_x'}=\ket{h}+\frac{\beta}{\sqrt{\beta^2+N}}\ket{\hat 1}$ with $\beta\ket{h}$ some witness for $x$ in $P$. Clearly $\norm{\ket{w_x'}}$ is minimized by setting $\ket{h}=\frac{1}{\beta}\ket{w_x}$  for $\ket{w_x}$  the minimal positive witness for $x$ in $P$, so we have $w_+(x,P^{\beta})=\frac{1}{\beta^2}w_+(x,P)+\frac{\beta^2}{\beta^2+N}$, as required. 

Let $\tilde{\omega}$ be an optimal min-error witness for $x$ in $P$, and define
$$\tilde{\omega}'=\frac{(\beta^2+N)w_+(x,P)}{\beta^4+(\beta^2+N)w_+(x,P)}\tilde{\omega}+\frac{\beta^4}{\beta^4+(\beta^2+N)w_+(x,P)}\bra{\hat 1}.$$
We have $\displaystyle\tilde{\omega}'(\tau+\ket{\hat 1})=\frac{(\beta^2+N)w_+(x,P)}{\beta^4+(\beta^2+N)w_+(x,P)}\tilde{\omega}(\tau)+\frac{\beta^4}{\beta^4+(\beta^2+N)w_+(x,P)}=1$, and:
\begin{eqnarray*}
\norm{\tilde{\omega}'A^{\beta}\Pi_{H^\beta(x)}}^2 \!\!\!\! &= & \!\!\!\!\norm{\frac{(\beta^2+N)w_+(x,P)}{\beta^4+(\beta^2+N)w_+(x,P)}\tilde{\omega} \beta A\Pi_{H(x)}}^2+\norm{\frac{\beta^4}{\beta^4+(\beta^2+N)w_+(x,P)}\frac{\sqrt{\beta^2+N}}{\beta}\bra{\hat 1}}^2\\
&=& \!\!\!\!\frac{(\beta^2+N)^2w_+(x,P)^2\beta^2}{(\beta^4+(\beta^2+N)w_+(x,P))^2}\frac{1}{w_+(x,P)}+\frac{\beta^8}{(\beta^4+(\beta^2+N)w_+(x,P))^2}\frac{\beta^2+N}{\beta^2}\\
&=& \!\!\!\!\frac{(\beta^2+N)^2w_+(x,P)\beta^2+\beta^6(\beta^2+N)}{(\beta^4+(\beta^2+N)w_+(x,P))^2}
 =  \frac{\beta^2(\beta^2+N)}{\beta^4+(\beta^2+N)w_+(x,P)}= \frac{1}{w_+(x,P^\beta)}
\end{eqnarray*}
so $\tilde{\omega}'$ is a min-error witness for $x$ in $P^\beta$. Thus, letting $\eps=\frac{(\beta^2+N)w_+(x,P)}{\beta^4+(\beta^2+N)w_+(x,P)}$, we have
\begin{eqnarray*}
\tilde{w}_-(x,P^\beta) &\leq & \norm{\tilde{\omega}' A^\beta}^2 \; = \; \norm{\eps\tilde{\omega} \beta A+\eps\tilde{\omega}(\tau)\bra{\hat 0}+\frac{\sqrt{\beta^2+N}}{\beta}\tilde{\omega}'(\hat 1)\bra{\hat 1}}^2 \\
&\leq & \beta^2\norm{\tilde{\omega} A}^2+1+\frac{\beta^2+N}{\beta^2}\frac{\beta^8}{(\beta^4+(\beta^2+N)w_+(x,P))^2}\\
&\leq & \beta^2\tilde{w}_-(x,P)+1+\frac{\beta^6(\beta^2+N)}{(\beta^4+\beta^2w_+(x,P))^2}\; \leq \; \beta^2\tilde{w}_+(x,P)+2,
\end{eqnarray*}
where in the last line, we use the fact that $w_+(x,P)\geq N$. 
\end{proof}

\begin{lemma}
For all $x\in P_0$, $w_-(x,P^{\beta})=\beta^2 w_-(x,P)+1$, and $\tilde{w}_+(x,P^{\beta})\leq \frac{1}{\beta^2}\tilde{w}_+(x,P)+2$. 
\end{lemma}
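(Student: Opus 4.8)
The plan is to prove the two statements separately, in direct parallel with the $x\in P_1$ case handled just above, using throughout the decompositions $H^\beta = H\oplus\mathrm{span}\{\ket{\hat 0},\ket{\hat 1}\}$, $H^\beta(x)=H(x)\oplus\mathrm{span}\{\ket{\hat 1}\}$, $V^\beta = V\oplus\mathrm{span}\{\ket{\hat 1}\}$, together with the orthogonality of $\ket{\hat 0},\ket{\hat 1}$ to each other and to $H$ and $V$.

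\emph{The negative witness size.} First I would identify the negative witnesses of $x$ in $P^\beta$. Write a candidate $\omega'\in\mathcal{L}(V^\beta,\mathbb{R})$ via its restriction $\omega=\omega'|_V$ and the scalar $c=\omega'(\hat 1)$. From the formula for $A^\beta$ one gets
$$\omega' A^\beta = \beta\,\omega A + \omega(\tau)\bra{\hat 0} + \tfrac{\sqrt{\beta^2+N}}{\beta}\,c\,\bra{\hat 1},$$
and composing with $\Pi_{H^\beta(x)}$ deletes the $\bra{\hat 0}$ term. Since $\beta\,\omega A\Pi_{H(x)}$ is supported on $H$ while $\tfrac{\sqrt{\beta^2+N}}{\beta}c\bra{\hat 1}$ is supported on $\mathrm{span}\{\ket{\hat 1}\}$, the requirement $\omega' A^\beta\Pi_{H^\beta(x)}=0$ forces $c=0$ and $\omega A\Pi_{H(x)}=0$; combined with $\omega'(\tau^\beta)=\omega(\tau)+c=1$, this says precisely that $\omega$ ranges over negative witnesses of $x$ in $P$. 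For any such $\omega'$ we then have $\omega' A^\beta=\beta\,\omega A+\bra{\hat 0}$, again a sum of pieces on orthogonal subspaces, so $\norm{\omega' A^\beta}^2=\beta^2\norm{\omega A}^2+1$; minimizing over $\omega$ gives $w_-(x,P^\beta)=\beta^2 w_-(x,P)+1$. In particular $x$ is negative also in $P^\beta$, so $\tilde{w}_+(x,P^\beta)$ is the quantity to bound.

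\emph{The positive (min-error) witness size.} To upper bound $\tilde{w}_+(x,P^\beta)$ it suffices to exhibit one min-error positive witness for $x$ in $P^\beta$ and compute its norm. Starting from an optimal min-error positive witness $\ket{\tilde w}$ for $x$ in $P$ — so $A\ket{\tilde w}=\tau$, $\norm{\Pi_{H(x)^\bot}\ket{\tilde w}}^2=e_+(x,P)$, $\norm{\ket{\tilde w}}^2=\tilde{w}_+(x,P)$ — I would set $b=\tfrac{e_+(x,P)}{\beta^2+e_+(x,P)}$ and
$$\ket{w'}=\tfrac{1-b}{\beta}\ket{\tilde w}+b\ket{\hat 0}+\tfrac{\beta}{\sqrt{\beta^2+N}}\ket{\hat 1}.$$
A one-line check gives $A^\beta\ket{w'}=\ket{\tau}+\ket{\hat 1}=\tau^\beta$. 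Since $H^\beta(x)^\bot=(H(x)^\bot\cap H)\oplus\mathrm{span}\{\ket{\hat 0}\}$, we get $\norm{\Pi_{H^\beta(x)^\bot}\ket{w'}}^2=\tfrac{(1-b)^2}{\beta^2}e_+(x,P)+b^2$, and the chosen $b$ makes this equal to $\tfrac{e_+(x,P)}{\beta^2+e_+(x,P)}$. That this value is exactly $e_+(x,P^\beta)$ follows from \thm{equiv1} applied to both $P$ and $P^\beta$: $e_+(x,P^\beta)=1/w_-(x,P^\beta)=1/(\beta^2 w_-(x,P)+1)=e_+(x,P)/(\beta^2+e_+(x,P))$, using $e_+(x,P)=1/w_-(x,P)$. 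Hence $\ket{w'}$ is a min-error positive witness, and, since its three summands are mutually orthogonal,
$$\tilde{w}_+(x,P^\beta)\le\norm{\ket{w'}}^2=\tfrac{(1-b)^2}{\beta^2}\tilde{w}_+(x,P)+b^2+\tfrac{\beta^2}{\beta^2+N}\le\tfrac{1}{\beta^2}\tilde{w}_+(x,P)+2,$$
using $0\le b<1$ (so $(1-b)^2,b^2\le 1$) and $\tfrac{\beta^2}{\beta^2+N}\le 1$.

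All of this is routine linear algebra once the direct-sum decompositions are written down. The two places needing a little care are: (i) making sure that when a linear map splits into pieces supported on mutually orthogonal subspaces, the squared norms add; and (ii) confirming that the explicitly constructed $\ket{w'}$ attains the minimal error $e_+(x,P^\beta)$, not merely some error — which is exactly the role played by \thm{equiv1}. I do not expect a genuine obstacle: this is the $P_0$-mirror of the preceding lemma, and each computation is the transpose of one already carried out.
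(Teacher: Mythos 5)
Your proof is correct and follows essentially the same route as the paper: both characterize the negative witnesses of $x$ in $P^\beta$ via the direct-sum decompositions to get $w_-(x,P^\beta)=\beta^2 w_-(x,P)+1$, and both exhibit the identical min-error positive witness $\ket{w'}$ (your coefficients $\tfrac{1-b}{\beta}$, $b$, $\tfrac{\beta}{\sqrt{\beta^2+N}}$ coincide with the paper's after substituting $e_+(x,P)=1/w_-(x,P)$), verify its error equals $1/w_-(x,P^\beta)$, and bound its norm the same way. The only cosmetic difference is that you phrase the error computation in terms of $e_+$ while the paper works with $w_-$ throughout, and you invoke Theorem~\ref{thm:equiv1} explicitly where the paper uses it implicitly.
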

\begin{proof}
Let $\omega_x'$ be an optimal negative witness for $x$ in $P^{\beta}$. Since $\omega_x'\Pi_{H^{\beta}(x)}=0$, $\omega_x'\ket{\hat 1}=0$, so $\omega_x'(\tau^{\beta})=\omega_x'(\tau)+\omega_x'(\ket{\hat 1})=\omega_x'(\tau)=1$. Furthermore, $\omega_x'$ minimizes 
$$\norm{\omega_x'A^{\beta}}^2=\norm{\beta\omega_x'A+\omega_x'(\tau)\ket{\hat 0}}^2=\beta^2\norm{\omega_x' A}^2+1.$$
This is minimized by taking $\omega_x'|_V$ to be the minimal negative  witness of $x$ in $P$, so $\norm{\omega_x' A}^2=w_-(x,P)$, and thus $w_-(x,P^{\beta})=\beta^2w_-(x,P)+1$. 

Next, let $\ket{\tilde{w}}$ be an optimal min-error positive witness for $x$ in $P$. Define:
$$\ket{\tilde{w}'}:=\frac{\beta w_-(x,P)}{1+\beta^2w_-(x,P)}\ket{\tilde{w}}+\frac{1}{1+\beta^2w_-(x,P)}\ket{\hat 0}+\frac{\beta}{\sqrt{\beta^2+N}}\ket{\hat 1}.$$
We have:
$$A\ket{\tilde{w}'}=\frac{\beta^2 w_-(x,P)}{1+\beta^2w_-(x,P)}\tau+\frac{1}{1+\beta^2w_-(x,P)}\tau+\ket{\hat 1}=\tau+\ket{\hat 1}=\tau^{\beta},$$
and since $H^{\beta}(x)^\bot = H(x)^\bot\oplus \mathrm{span}\{\ket{\hat 0}\}$:
\begin{eqnarray*}
\norm{\Pi_{H^{\beta}(x)^\bot}\ket{\tilde{w}'}}^2 &=& \norm{\Pi_{H(x)^\bot}\ket{\tilde{w}'}}^2+\norm{\Pi_{\ket{\hat 0}}\ket{\tilde{w}'}}^2\\
&=& \frac{\beta^2w_-(x,P)^2}{(1+\beta^2w_-(x,P))^2}\norm{\Pi_{H(x)^\bot}\ket{\tilde{w}}}^2+\frac{1}{(1+\beta^2w_-(x,P))^2}\\
&=& \frac{\beta^2w_-(x,P)^2}{(1+\beta^2w_-(x,P))^2}\frac{1}{w_-(x,P)}+\frac{1}{(1+\beta^2w_-(x,P))^2}\\
&=&\frac{1}{1+\beta^2w_-(x,P)}\;\;=\;\;\frac{1}{w_-(x,P^\beta)},\\
\end{eqnarray*}
so $\ket{\tilde{w}'}$ has minimal error. Thus:
\begin{eqnarray*}
\tilde{w}_+(x,P^\beta) &\leq& \norm{\ket{\tilde{w}'}}^2\;\; = \;\;\frac{\beta^2w_-(x,P)^2}{(1+\beta^2w_-(x,P))^2}\norm{\ket{\tilde{w}}}^2+\frac{1}{(1+\beta^2w_-(x,P))^2}+\frac{\beta^2}{\beta^2+N}\\
&\leq & \frac{\beta^2w_-(x,P)^2\tilde{w}_+(x,P)}{(1+\beta^2w_-(x,P))^2}+2\;\;\leq \;\; \frac{\beta^2w_-(x,P)^2\tilde{w}_+(x,P)}{\beta^4w_-(x,P)^2}+2
\;\;= \;\; \frac{\tilde{w}_+(x,P)}{\beta^2}+2. \qedhere
\end{eqnarray*}
\end{proof}

\section{Time Complexity Analysis}\label{app:time}

In \cite{BR12}, the authors analyze the time complexity of the reflections needed to implement their span program to give a time upper bound on $st$-connectivity. Since our algorithms look superficially different from theirs, we reproduce their analysis here to show an upper bound on the quantum time complexity of estimating effective resistance.

\begin{theorem}\label{thm:time}
Let $P$ be the span program for $st$-connectivity given in \sec{app}. Then for any $\beta$ such that $1/n^{O(1)}\leq \beta\leq n^{O(1)}$, $U'(P^{\beta},x)$ can be implemented in quantum time complexity $O(\log n)$ and space $O(\log n)$, and $\ket{w_0^\beta}$ can be constructed in quantum time complexity $O(\log n)$. 
\end{theorem}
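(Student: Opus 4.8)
The plan is to follow the time-complexity analysis of \cite{BR12} for the $st$-connectivity span program $P$, adapting it to the positive-case unitary $U'$ (rather than $U$) and to the scaled program $P^{\beta}$ of \thm{scaling}. Write $N=\norm{\ket{w_0}}^2=1/n$ (\lem{w0}) and $T^{\beta}=\ker A^{\beta}\oplus\mathrm{span}\{\ket{w_0^{\beta}}\}$. Since $\ket{w_0^{\beta}}\perp\ker A^{\beta}$, one has $2\Pi_{T^{\beta}}-I=(2\Pi_{\ker A^{\beta}}-I)(I-2\ket{w_0^{\beta}}\bra{w_0^{\beta}})$, so
$$U'(P^{\beta},x)=(2\Pi_{H^{\beta}(x)}-I)\,(2\Pi_{\ker A^{\beta}}-I)\,(I-2\ket{w_0^{\beta}}\bra{w_0^{\beta}}),$$
and it suffices to implement these three reflections and to prepare $\ket{w_0^{\beta}}$, each in time and space $O(\log n)$.

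First I would dispose of the two easy ingredients. In the basis of $H^{\beta}=H\oplus\mathrm{span}\{\ket{\hat 0},\ket{\hat 1}\}$ indexed by pairs $(u,v)\in[n]^2$ together with the two flags, $H^{\beta}(x)=\mathrm{span}\{\ket{u,v}:x_{u,v}=1\}\oplus\mathrm{span}\{\ket{\hat 1}\}$, so $2\Pi_{H^{\beta}(x)}-I$ merely applies a $(-1)$ phase to $\ket{\hat 0}$ and to each $\ket{u,v}$ with $x_{u,v}=0$; as already observed for $2\Pi_{H(x)}-I$ in \sec{alg} this costs two queries, and since every register carries $O(\log n)$ qubits it costs $O(\log n)$ time and space. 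For the state, $\ket{w_0}=A^{+}\ket{\tau}$ is, by \lem{st-pos-witness} applied to the complete graph, the optimal unit $st$-flow in $K_n$, which (as in the proof of \lem{w0}) is supported on the $O(n)$ edges incident to $s$ or $t$ with all amplitudes of order $1/n$; hence $\sqrt{n}\,\ket{w_0}$ is an explicit, essentially flat superposition over $O(n)$ known basis states and is preparable in $O(\log n)$ time. Since $1/n^{O(1)}\le\beta\le n^{O(1)}$, the coefficients in $\ket{w_0^{\beta}}=\tfrac{\beta}{\beta^2+N}\ket{w_0}+\tfrac{N}{\beta^2+N}\ket{\hat 0}+\tfrac{\beta}{\sqrt{\beta^2+N}}\ket{\hat 1}$ (\thm{scaling}) are fixed numbers representable to $O(\log n)$ bits, so mixing the flags into $\ket{w_0}$ with these coefficients is one further $O(1)$-size rotation; this yields $\ket{w_0^{\beta}}$, and hence $I-2\ket{w_0^{\beta}}\bra{w_0^{\beta}}$ (prepare, reflect about a fixed basis state, uncompute), in $O(\log n)$ time.

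The substantive step is $2\Pi_{\ker A^{\beta}}-I$. Using $A^{\beta}=\beta A+\ket{\tau}\bra{\hat 0}+\tfrac{\sqrt{\beta^2+N}}{\beta}\ket{\hat 1}\bra{\hat 1}$ and $A\ket{w_0}=\ket{\tau}$, I would first check that $\ker A^{\beta}=\ker A\oplus\mathrm{span}\{\beta\ket{\hat 0}-\ket{w_0}\}$, an \emph{orthogonal} sum since $\ket{w_0}\perp\ker A$ and $\ket{\hat 0},\ket{\hat 1}\perp H$. It follows that $2\Pi_{\ker A^{\beta}}-I=R_{\mathrm{rot}}\,R_1$, where $R_1$ reflects about $\ker A$ inside $H$ and acts as $+1$ on $\ket{\hat 0}$ and $-1$ on $\ket{\hat 1}$, and $R_{\mathrm{rot}}$ is the identity off the plane $\mathrm{span}\{\ket{w_0},\ket{\hat 0}\}$ and on that plane is a rotation by a fixed ($\beta,N$-dependent) angle --- namely the product of the reflection about $\beta\ket{\hat 0}-\ket{w_0}$ with $\mathrm{diag}(-1,+1)$ in the ordered basis $(\ket{w_0},\ket{\hat 0})$. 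Writing $\ket{w_0}=V\ket{e}$ for the preparation unitary $V$ (taken to fix both flags) and a fixed basis state $\ket{e}$ of $H$, we get $R_{\mathrm{rot}}=V R' V^{\dagger}$ with $R'$ a fixed two-level rotation, costing $O(\log n)$ time.

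What remains --- and what I expect to be the only real obstacle --- is $R_1$, the reflection about $\ker A$ in the $n^2$-dimensional edge space $H=\mathbb{R}^n\otimes\mathbb{R}^n$ with $A=\sum_{u,v}(\ket{u}-\ket{v})\bra{u,v}$, which is input-independent. I would handle it exactly as in \cite{BR12}: since $AA^{T}=2L_{K_n}=2(nI-J)$, the operator $A/\sqrt{2n}$ is the discriminant of a quantum walk on $K_n$ --- equivalently, writing $A=C(I-W)$ with $W:\ket{u,v}\mapsto\ket{v,u}$ the swap and $C=I_n\otimes\bra{\mathbf 1_n}$, one has $(\ker A)^{\bot}=\mathrm{row}A=(I-W)\big(\mathbb{R}^n\otimes\mathrm{span}\{\ket{\bar 1}\}\big)$ --- and the reflection about $\ker A$ is assembled from a Grover diffusion on the second register, the controlled swap $W$, and a basis change with a sign computation, each an $O(\log n)$-time, $O(\log n)$-space operation. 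Composing the three reflections gives $U'(P^{\beta},x)$ within the stated bounds. Everything beyond this $\ker A$ reflection is bookkeeping needed to strip off the two extra dimensions introduced by \thm{scaling} and to pass from $U$ to $U'$, and the hypothesis $1/n^{O(1)}\le\beta\le n^{O(1)}$ is precisely what keeps the auxiliary rotation angles and amplitudes representable in $O(\log n)$ bits.
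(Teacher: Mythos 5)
Your proposal is correct and follows essentially the same route as the paper: factor $U'(P^{\beta},x)$ into the input reflection $2\Pi_{H^{\beta}(x)}-I$, a reflection about $\ker A^{\beta}$, and a reflection about $\ket{w_0^{\beta}}$; observe that $\ker A^{\beta}=\ker A\oplus\mathrm{span}\{\beta\ket{\hat 0}-\ket{w_0}\}$ (your check of this orthogonal decomposition and of the two-level rotation $R_{\mathrm{rot}}$ is right, and agrees with the paper's decomposition $\ker A^{\beta}\oplus\mathrm{span}\{\ket{w_0^{\beta}}\}=\ker A\oplus\mathrm{span}\{\ket{\hat 0}-\tfrac{1}{\beta}\ket{w_0}\}\oplus\mathrm{span}\{\ket{w_0^{\beta}}\}$); note that $\ket{w_0}$ is an explicit $O(n)$-sparse, near-flat state preparable in $O(\log n)$ time and that the $\beta$-dependent rotation angles are $O(\log n)$-bit numbers under the hypothesis $1/n^{O(1)}\le\beta\le n^{O(1)}$; and reduce the one input-independent but substantive reflection, $2\Pi_{\ker A}-I$, to the quantum-walk-on-$K_n$ construction of \cite{BR12}.

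The only place you are appreciably thinner than the paper is precisely that last reflection. The paper does not merely cite \cite{BR12}: it reproduces the construction by exhibiting an enlarged four-register space with explicit subspaces $Z$ and $Y$, isometries $M_Z,M_Y$ satisfying $M_Z^{\dagger}M_Y=\tfrac{1}{2\sqrt{n-1}}A$, and then invoking \thm{szegedy} to show that $M_Y$ carries $\ker A$ and $(\ker A)^{\perp}$ onto $Y\cap Z^{\perp}$ and $Y\cap Z$, i.e.\ onto the $\mp 1$-eigenspaces of $(2\Pi_Z-I)(2\Pi_Y-I)$ restricted to $Y$. Your summary (``Grover diffusion on the second register, controlled swap, sign change'') names the right ingredients, and the factorization $A=C(I-W)$ with $(\ker A)^{\perp}=(I-W)\bigl(\mathbb{R}^n\otimes\mathrm{span}\{\ket{\bar 1}\}\bigr)$ is algebraically correct, but it stops short of a verified reflection: characterizing $\mathrm{row}A$ as the \emph{image} of a subspace under a non-invertible map does not by itself yield a two-reflection decomposition, and calling $A/\sqrt{2n}$ ``the discriminant of a quantum walk on $K_n$'' is loose (its shape is $n\times n^2$, not $n\times n$). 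These gaps are exactly what the paper's $Z/Y$ construction, with the $O(\log n)$-gate implementations of $M_Z$ and $M_Y$, fills in. Since you explicitly defer to \cite{BR12} for this step, just as the paper does, this is a reasonable level of rigor for a proof sketch, but if you were to write the proof out you would need to reproduce (or prove) that embedding.
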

\begin{proof}
In order to implement $U'(P^\beta,x)$, we must implement the reflections $R_x(\beta)=2\Pi_{H^\beta(x)}-I$ and $R_P'(\beta)=2\Pi_{\ker A^\beta\oplus \mathrm{span}\{\ket{w_0^\beta}\}}-I$.
We remark that $R_x(\beta)$ is easily implemented in a single query and constant overhead. This proof deals with the implementation of $R_P'(\beta)$, which can be easily implemented given an implementation of $R_P=2\Pi_{\ker A}-I$. 

In order to implement $R_P$, we describe a unitary $W=(2\Pi_Z-I)(2\Pi_Y-I)$ that can be efficiently implemented, and such that $W$ can be used to implement $R_P$. In order to show that $W$ implements $R_P$, we need to show that some isometry $M_Y:H\rightarrow Y$ maps $\ker A$ to the $-1$-eigenspace of $W$, and $(\ker A)^\bot$ to the $1$-eigenspace of $W$. This allows us to implement $R_P$ by first implementing the isometry $M_Y$, applying $W$, and then uncomputing $M_Y$. 

Define the spaces $Z$ and $Y$ as follows:
$$Z=\mathrm{span}\left\{\ket{z_u}:=\frac{1}{\sqrt{2(n-1)}}\sum_{v\neq u}\ket{0,u,u,v}+\frac{1}{\sqrt{2(n-1)}}\sum_{v\neq u}\ket{1,u,v,u}:u\in [n]\right\}; \quad\mbox{and}$$
$$Y=\mathrm{span}\left\{\ket{y_{u,v}}:=\(\ket{0,u,u,v}-\ket{1,v,u,v}\)/\sqrt{2}:u,v\in [n], u\neq v\right\}.$$
Define isometries
$$M_Z=\sum_{u\in [n]}\ket{z_u}\bra{u}\quad\mbox{ and }\quad M_Y=\sum_{(u,v)\in [n]^2:u\neq v}\ket{y_{u,v}}\bra{u,v}.$$

\begin{lemma}Let $S=\{M_Y\ket{\psi}:\ket{\psi}\in \ker A\}$ and $S'=\{M_Y\ket{\psi}:\ket{\psi}\in(\ker A)^\bot\}$ be the images of $\ker A$ and $(\ker A)^\bot$ respectively under the isometry $M_Y$. Then $S=Y\cap Z^\bot$, which is exactly the intersection of $Y$ and the $-1$-eigenspace of $W$, and $S'=Y\cap Z$, which is exactly the intersection of $Y$ and the $1$-eigenspace of $W$.
\end{lemma}
\begin{proof}
We have:
\begin{eqnarray*}
M_Z^\dagger M_Y &=& \frac{1}{2\sqrt{n-1}}\sum_{u\in [n]}\sum_{v\neq u}\ket{u}\(\bra{0,u,u,v}+\bra{1,u,v,u}\)\sum_{a,b\in [n]:a\neq b}(\ket{0,a,a,b}-\ket{1,b,a,b})\bra{a,b}\\
&=& \frac{1}{2\sqrt{n-1}}\sum_{u\in [n]}\sum_{v\neq u}\ket{u}\bra{u,v}-\frac{1}{2\sqrt{n-1}}\sum_{u\in [n]}\sum_{v\neq u}\ket{v}\bra{u,v}\;=\;\frac{1}{2\sqrt{n-1}}A.
\end{eqnarray*}
Thus, for all $\ket{\psi}\in \ker A$, $M_Y\ket{\psi}\in Y\cap \ker M_Z^\dagger=Y\cap Z^\bot$, so $S\subseteq Y\cap Z^\bot$. On the other hand, if $\ket{\psi}\in (\ker A)^\bot$, then $M_Y\ket{\psi}\in Y\cap (\ker M_Z^\dagger)^\bot =Y\cap Z$. By \thm{szegedy}, the $-1$-eigenspace of $W$ is exactly $(Y\cap Z^\bot)\oplus (Y^\bot\cap Z)$ and the $1$-eigenspace of $W$ is exactly $(Y\cap Z)\oplus (Y^\bot\cap Z^\bot)$. 
\end{proof}

\begin{lemma}
$M_Y$, $R_Z=2\Pi_Z-I$ and $R_Y=2\Pi_Y-I$ can be implemented in time $O(\log n)$.
\end{lemma}
\begin{proof}
To implement $R_Z$ and $R_Y$, we need only show how to implement the unitary versions of $M_Z$ and $M_Y$. We begin with $M_Z$. For any $u\in [n]$, we can map $\ket{u}\mapsto \ket{0,u,u,0}$ by initializing three new registers and copying $u$ into one of them. Then we map:
$$\ket{0,u,u,0}\mapsto \ket{0,u,u}\frac{1}{\sqrt{n-1}}\sum_{v\neq u}\ket{v}\overset{H\otimes I^{\otimes 3}}{\mapsto} \frac{1}{\sqrt{2(n-1)}}\(\ket{0,u,u}\sum_{v\neq u}\ket{v}+\ket{1,u,u}\sum_{v\neq u}\ket{v}\)\mapsto \ket{x_u},$$
where the last transformation is achieved by swapping the last two registers conditioned on the first. This can be implemented in $O(\log n)$ elementary gates. 

For $M_Y$, we start by mapping any edge $\ket{u,v}$ to $\ket{1,0,u,v}$, followed by:
$$\ket{1,0,u,v}\overset{H\otimes I^{\otimes 3}}{\mapsto} \frac{1}{\sqrt{2}}\(\ket{0,0,u,v}-\ket{1,0,u,v}\)\mapsto \frac{1}{\sqrt{2}}\(\ket{0,u,u,v}-\ket{1,v,u,v}\)=\ket{y_{u,v}},$$
where in the last step we copy either $u$ or $v$ into the second register depending on the value of the first register. This can be implemented in $O(1)$ elementary gates. 

Then in order to implement $R_Z$, we simply apply ${M}_Z^\dagger$, reflect about $\mathrm{span}\{\ket{0,u,u,0}:u\in [n]\}$, and then apply ${M}_Z$ again. To implement $R_Y$, we apply ${M}_Y^\dagger$, reflect about $\mathrm{span}\{\ket{1,0,u,v}:u,v\in [n],u\neq v\}$, and then apply ${M}_Y$. 
\end{proof}

We now show how to efficiently implement the span program $P^{\beta}$ when $1/n^{O(1)}\leq \beta \leq n^{O(1)}$. First, consider $\ket{w_0}$, the minimal positive witness for $P$.
Since $\ket{w_0}$ corresponds to an optimal $st$-flow in the complete graph, it is easy to compute that 
$$\ket{w_0}=\frac{1}{n}\ket{s,t}+\frac{1}{2n}\sum_{u\in [n]\setminus \{s,t\}}(\ket{s,u}+\ket{u,t}) - \frac{1}{n}\ket{t,s}-\frac{1}{2n}\sum_{u\in [n]}(\ket{t,u}+\ket{u,s}),$$
and $\norm{\ket{w_0}}^2=\frac{1}{n}$ (see also \lem{w0}).
We can construct this state by mapping $\ket{s,0}+\ket{0,t}\mapsto \sum_{u\neq s}\ket{s,u}+\sum_{u\neq t}\ket{u,t}$ and then performing a swap controlled on an additional register in the state $\frac{1}{\sqrt{2}}(\ket{0}+\ket{1})$. 
The initial state of the scaled span program $P^{\beta}$ is (see \thm{scaling}):
$$\ket{w_0^\beta}=\frac{\beta}{\beta^2+\frac{1}{n}}\ket{w_0}+\frac{\frac{1}{n}}{\beta^2+\frac{1}{n}}\ket{\hat 0}+\frac{\beta}{\sqrt{\beta^2+\frac{1}{n}}}\ket{\hat 1},$$
which we can also construct efficiently, as follows:
$$\ket{\hat{0}}\mapsto \frac{\beta\sqrt{n}}{\beta^2+\frac{1}{n}}\ket{\hat 2}+\frac{1}{{n}\beta^2+1}\ket{\hat 0}+\frac{\beta}{\sqrt{\beta^2+\frac{1}{n}}}\ket{\hat 1}\mapsto \frac{\beta}{\beta^2+\frac{1}{n}}\ket{w_0}+\frac{\frac{1}{n}}{\beta^2+\frac{1}{n}}\ket{\hat 0}+\frac{\beta}{\sqrt{\beta^2+\frac{1}{n}}}\ket{\hat 1}.$$
The first step is accomplished by a pair of rotations using $O(\log \frac{n}{\beta})$ elementary gates, and the second is accomplished by mapping $\ket{\hat{2}}$ to $\frac{\ket{w_0}}{\norm{\ket{w_0}}}=\sqrt{n}\ket{w_0}$, which can be accomplished in $O(\log n)$ elementary gates. 

Next, we have 
$A^{\beta}=\beta A+(\ket{s}-\ket{t})\bra{\hat{0}}+\frac{\sqrt{\beta^2+\frac{n}{2}}}{\beta}\ket{\hat{1}}\bra{\hat{1}}$, so 
$$\ker A^{\beta}\oplus \mathrm{span}\{\ket{w_0^\beta}\}=\ker A\oplus \mathrm{span}\{\ket{\hat{0}}-\frac{1}{\beta}\ket{w_0}\}\oplus \mathrm{span}\{\ket{w_0^\beta}\}.$$
 We know how to reflect about $\ker A$, and since we can efficiently construct $\ket{w_0^\beta}$, we can reflect about it, so we need only consider how to reflect about $\span\{\ket{\hat 0}-\frac{1}{\beta}\ket{w_0}\}$. Since we can compute $\ket{w_0}$ efficiently, we can compute:
$$\ket{\hat{0}}\mapsto \frac{\beta}{\sqrt{\beta^2+1}}\ket{\hat 0}+\frac{1}{\sqrt{\beta^2+1}}\ket{\hat 1}\mapsto \frac{\beta}{\sqrt{\beta^2+1}}\ket{\hat 0}+\frac{1}{\sqrt{\beta^2+1}}\ket{\bar{w}_0}.$$
The first step is a rotation, which can be performed in $O(\log \frac{1}{\beta})$ elementary gates, and the second step is some mapping that maps $\ket{\hat 1}$ to $\ket{w_0}$, which we know can be done in $O(\log n)$ elementary gates. Thus, the total cost to reflect about $\ker A^{\beta}$ is $O(\log n)$. 
\end{proof}

\end{document}